\def\markboth#1#2{\def\leftmark{\@IEEEcompsoconly{\sffamily}\MakeUppercase{\protect#1}}%
\def\rightmark{\@IEEEcompsoconly{\sffamily}\MakeUppercase{\protect#2}}}
\newcolumntype{L}[1]{>{\raggedright\let\newline\\\arraybackslash\hspace{0pt}}m{#1}}
\newcolumntype{C}[1]{>{\centering\let\newline\\\arraybackslash\hspace{0pt}}m{#1}}
\newcolumntype{R}[1]{>{\raggedleft\let\newline\\\arraybackslash\hspace{0pt}}m{#1}}
\theoremstyle{definition}
\theoremstyle{plain}
\newtheorem{lemma}{Lemma}
\newtheorem{theorem}{Theorem}
\theoremstyle{remark} 
\newtheorem{remark}{Remark}
\newtheorem{example}{Example}
\newcommand\xqed[1]{%
\leavevmode\unskip\penalty9999 \hbox{}\nobreak\hfill
\quad\hbox{#1}}
\newcommand\demo{\xqed{$\triangle$}}
\newcommand{\IID}{\gls{iid}\xspace}	
\newcommand{\define}{\triangleq}
\newcommand{\D}{\mathrm{d}}
\newcommand{\VNs}{\glspl{vn}\xspace}
\newcommand{\CNs}{\glspl{cn}\xspace}
\newcommand{\E}[1]{\mathbb{E}\left[#1\right]}
\newcommand{\eps}{\varepsilon}
\newcommand{\Cbch}{\ensuremath{\mathcal{C}}}
\newcommand{\nbch}{\ensuremath{n}}
\newcommand{\kbch}{\ensuremath{k}_\mathcal{C}}
\newcommand{\dmin}{\ensuremath{d_\text{min}}}
\newcommand{\nham}{\ensuremath{n_\mathcal{C}}}
\newcommand{\kham}{\ensuremath{k_\mathcal{C}}}
\newcommand{\nhpc}{\ensuremath{m}}
\newcommand{\ErdRen}{Erd\H{o}s--R\'enyi\xspace}
\newcommand{\Gnp}{\ensuremath{\mathcal{G}(n,p)}}
\newcommand{\Gncn}{\ensuremath{\mathcal{G}(n,c/n)}}
\newcommand{\Gnk}{\ensuremath{\mathcal{G}^{\mathcal{V}}(n,
\boldsymbol{\kappa})}}
\newcommand{\bigo}{\ensuremath{\mathcal{O}}}
\newcommand{\Dec}{\ensuremath{\mathcal{D}}}
\newcommand{\Lipschitz}{\ensuremath{\Lambda}}
\newcommand{\Pois}[1]{\ensuremath{\text{\textnormal{\textsf{Po}}}(#1)}}
\newcommand{\Binom}[2]{\ensuremath{\text{\textnormal{\textsf{Bin}}}(#1,#2)}}
\newcommand{\Bern}[1]{\ensuremath{\text{\textnormal{\textsf{B}}}(#1)}}
\renewcommand{\Pr}[1]{{\mathbb{P}\left( #1 \right)}}			 
\newcommand{\tee}{\mathsf{t}}
\newcommand{\ttime}{t}
\newcommand{\nl}{}
\newcommand{\Gpc}{\ensuremath{\mathcal{C}_n(\etab, \vect{\gamma}, \vect{\tau})}}
\newcommand{\No}{\mathbb{N}_0}
\newcommand{\bRV}{\theta} 
\newcommand{\inprobto}{\ensuremath{\xrightarrow{\text{P}}}}
\newcommand{\indistrto}{\ensuremath{\xrightarrow{\text{d}}}}
\newcommand{\tmax}{\ensuremath{\tee_{\text{max}}}}
\newcommand{\tmin}{\ensuremath{\tee_{\text{min}}}}
\newcommand{\transpose}{\intercal}
\newcommand{\GwZ}{\bar{Z}}
\newcommand{\GwXi}{\bar{\xi}}
\newcommand{\GwX}{\bar{X}}
\newcommand{\GwJ}{\bar{J}}
\newcommand{\GwA}{\bar{A}}
\newcommand{\GwT}{\bar{T}}
\newcommand{\Loss}{\ensuremath{\mathcal{L}}}
\newcommand{\GPCs}{\glspl{gpc}\xspace}
\newcommand{\vect}[1]{\ensuremath{\boldsymbol{#1}}}
\newcommand{\mat}[1]{\ensuremath{\mathbf{#1}}}
\newcommand{\etab}{\ensuremath{\boldsymbol{\eta}}}
\newcommand{\cthr}{\ensuremath{{c}^*}}
\newcommand{\NoRev}[1]{%
	{
	#1%
	}%
}%
\newcommand{\RevA}[1]{%
	{
	#1%
	}%
}%
\newcommand{\RevB}[1]{%
	{
	#1%
	}%
}%
\newif\ifshow
\newcommand{\abbr}[1]{{#1}}				
\let\aclOLD=\acl
\renewcommand{\acl}[1]{%
  \begingroup    
  \let\@@underline=\relax
  \aclOLD{#1}%
  \endgroup
}
\newcommand{\NewA}[3]{
	\newacronym{#1}{#2}{#3}
}
\newacronym[%
	longplural={binary erasure channels},%
	shortplural={BECs}%
]{bec}{BEC}{binary erasure channel}%
\begin{document}

\title{Density Evolution for Deterministic \\ Generalized Product Codes
on the Binary\\ Erasure Channel \RevA{at High Rates}}


\author{%
	Christian Häger,~\IEEEmembership{Student Member, IEEE},
	Henry D.~Pfister,~\IEEEmembership{Senior Member, IEEE},\\
	Alexandre Graell i Amat,~\IEEEmembership{Senior Member, IEEE}, and
	Fredrik Brännström,~\IEEEmembership{Member, IEEE}
\thanks{This work was partially funded by the Swedish Research Council
under grant \#2011-5961. \NoRev{Parts of this paper were presented at
the \emph{Optical Fiber Communication Conference, Anaheim, CA, 2016},
the \emph{International Symposium on Turbo Codes and Iterative
Information Processing, Brest, France, 2016}, and the \emph{IEEE
International Symposium on Information Theory, Barcelona, Spain,
2016}.} }%
\thanks{\NoRev{This work was conducted when C.~Häger was with the Department of Signals and Systems,
Chalmers University of Technology, SE-41296 Gothenburg, Sweden. He is now
with the Department of Electrical and Computer
Engineering, Duke University, Durham, NC 27708 USA (e-mail:
ch303@duke.edu).}}
\thanks{\NoRev{H.~Pfister is with the Department of Electrical and Computer
Engineering, Duke University, Durham, NC 27708 USA (e-mail:
henry.pfister@duke.edu).}}%
\thanks{\NoRev{A.~Graell i Amat and F.~Br\"{a}nnstr\"{o}m are with
the Department of Signals and Systems, Chalmers University of Technology,
SE-41296 Gothenburg, Sweden (e-mail: \{alexandre.graell,
fredrik.brannstrom\}@chalmers.se).}}%
}%

\maketitle

\begin{abstract}
	\Glspl{gpc} are extensions of \glspl{pc} where code symbols are
	protected by two component codes but not necessarily arranged in a
	rectangular array. We consider a deterministic construction of
	\glspl{gpc} (as opposed to randomized code ensembles) and analyze
	the asymptotic performance over the \acl{bec} under iterative
	decoding. Our code construction encompasses several classes of
	\glspl{gpc} previously proposed in the literature, such as irregular
	\glspl{pc}, block-wise braided codes, and staircase codes. It is
	assumed that the component codes can correct a fixed number of
	erasures and that the length of each component code tends to
	infinity. We show that this setup is equivalent to studying the
	behavior of a peeling algorithm applied to a sparse inhomogeneous
	random graph.  Using a convergence result for these graphs, we
	derive the \acl{de} equations that characterize the asymptotic
	decoding performance. As an application, we discuss the design of
	irregular \glspl{gpc} employing a mixture of component codes with
	different erasure-correcting capabilities. 
\end{abstract}
\begin{IEEEkeywords} 
	Binary erasure channel, braided codes, density evolution, generalized low-density
	parity-check codes, inhomogeneous random graphs, multi-type
	branching processes, product codes, staircase codes. 
\end{IEEEkeywords}

\glsresetall

\section{Introduction}

Many code constructions are based on the idea of building longer codes
from shorter ones \cite{Gallager1962, ForneyJr.1965, Berrou1993a}. In
particular, \glspl{pc}, originally introduced by Elias in 1954
\cite{Elias1954}, are constructed from two linear component codes,
$\mathcal{C}_1$ and $\mathcal{C}_2$, with respective lengths $n_1$ and
$n_2$. The codewords in a \gls{pc} are rectangular $n_1 \times n_2$
arrays such that every row is a codeword in $\mathcal{C}_1$ and every
column is a codeword in $\mathcal{C}_2$. In 1981, Tanner significantly
extended this construction and introduced \gls{gldpc} codes
\cite{Tanner1981}. \gls{gldpc} codes are defined via bipartite graphs
where \VNs and \CNs represent code symbols and component code
constraints, respectively. If the underlying graph of a \gls{gldpc}
code consists exclusively of degree-2 \VNs (i.e., each code symbol is
protected by two component codes), the code is referred to as a
generalized \gls{pc} (GPC)\glsunset{gpc}. Most of the examples
presented in \cite{Tanner1981} fall into this category. 

\glspl{pc} have an intuitive iterative decoding algorithm and are used
in a variety of applications \cite{Abramson1968, Ryan2009}. In
practice, the component codes are typically \gls{bch} or \acl{rs}
codes, which can be efficiently decoded via algebraic \gls{bdd}. This
makes \GPCs particularly suited for high-speed applications due to
their significantly reduced decoding complexity compared to
message-passing decoding of \gls{ldpc} codes \cite{Smith2012a}. For
example, \glspl{gpc} have been investigated by many authors as
practical solutions for forward-error correction in fiber-optical
communication systems \cite{Justesen2007, Justesen2010, Justesen2011,
Scholten2010, Smith2012a, Jian2014, Zhang2014, Haeger2015ofc}. 

The iterative decoding of \glspl{gpc} is a standard element in many of
these systems and the analysis of iterative decoding is typically
based on \gls{de} \cite{Luby2001, Richardson2001} using an ensemble
argument.  That is, rather than analyzing a particular code directly,
one considers a set of codes, defined via suitable randomized
connections between \VNs and \CNs in the Tanner graph. Some notable
exceptions include Gallager's original analysis based on deterministic
constructions of large-girth \gls{ldpc} codes~\cite{Gallager1963},
Tanner's analysis of Hamming GPCs~\cite{Tanner1981}, the analysis of
PCs using monotone graph properties~\cite{Schwartz2005}, and the
analysis of PCs based on the $k$-core problem~\cite{Justesen2007,
Justesen2011}.

In this paper, we focus on the asymptotic performance of \glspl{gpc}
over the \gls{bec} assuming iterative decoding based on \gls{bdd} of
the component codes. In particular, we consider the case where the
component codes have a fixed erasure-correcting capability and the
length of each component code tends to infinity.
Like~\cite{Schwartz2005, Justesen2007, Justesen2011}, we consider a
\emph{deterministic} construction of \glspl{gpc}. Indeed, many classes
of \glspl{gpc} have a very regular structure in terms of their Tanner
graph and are not at all random-like. 
The code construction we
consider is sufficiently general to recover several of these classes
as special cases, such as irregular \glspl{pc}
\cite{Hirasawa1984,Alipour2012}, block-wise braided codes
\cite[Sec.~III]{Feltstrom2009}, and staircase codes \cite{Smith2012a}.
The main contribution of this paper is to show that, analogous to
\gls{de} for code ensembles, the asymptotic performance of the
considered \gls{gpc} construction is rigorously characterized by a
recursive update equation. 

Like~\cite{Schwartz2005, Justesen2007, Justesen2011}, this paper is
largely based on results that have been derived in random graph
theory. In our case, the Tanner graph itself is deterministic and
consists of a fixed arrangement of (degree-2) VNs and CNs.  Randomness
is introduced entirely due to the channel by forming the so-called
residual graph (or error graph) from the Tanner graph, i.e., after
removing known VNs and collapsing erased VNs into edges
\cite{Schwartz2005, Justesen2007, Justesen2011}. Thus, different
channel realizations give rise to an ensemble of residual graphs,
facilitating the analysis. The code construction considered here is
such that the residual graph ensemble corresponds to the sparse
inhomogeneous random graph model in \cite{Bollobas2007}. Analyzing the
decoding failure of the iterative decoder (for a fixed number of
iterations) can then be translated into a graph-theoretic question
about the behavior of a peeling algorithm applied to such a random
graph. We can then use a convergence result in \cite{Bollobas2007} to
conclude that, as the number of vertices in the graph tends to
infinity, the correct limiting behavior is obtained by evaluating the
peeling algorithm on a multi-type branching process. 

A similar connection between large random graphs and branching
processes also arises in the \gls{de} analysis for code ensembles,
e.g., irregular \gls{ldpc} codes. The main difference between this and
our setup is that, for code ensembles, the Tanner graph itself is
random due to the randomized edge connections in the ensemble
definition. \gls{de} relies on the fact that the asymptotic behavior
of an extrinsic iterative message-passing decoder can be analyzed by
considering an ensemble of computation trees
\cite[Sec.~3.7.2]{Richardson2008} (see also \cite[Sec.~1]{Luby1998b}).
This tree ensemble can alternatively be viewed as a multi-type
branching process, where types correspond to VNs and CNs of different
degrees. A tree-convergence and concentration result ensures that the
performance of a code taken (uniformly at random) from the 
ensemble will be close to the predicted \gls{de} behavior, provided
that the code is sufficiently long \cite[Th.~2]{Richardson2001}. 

The above ensemble approach can be applied to \gls{gldpc} codes and
thus also to \glspl{gpc}. For example, in
\cite{Lentmaier2010,Lentmaier2009,Lentmaier} a \gls{de} analysis for
protograph-based braided codes is presented, where the Tanner graph of
a tightly-braided code is interpreted as a protograph
\cite{Thorpe2005}. An ensemble approach has been further applied to
regular \glspl{gpc} in \cite{Miladinovic2008}, where the authors
analyze the asymptotic ensemble performance and derive the
corresponding iterative decoding thresholds. In \cite{Jian2012,
Jian2015, Zhang2015}, the authors perform a \gls{de} analysis for
\gls{gpc} ensembles paying special attention to so-called
spatially-coupled codes. \RevA{On the other hand, many \GPCs proposed
for practical systems (e.g., the recent code proposals for optical
transport networks in \cite{Smith2012a} and \cite{Jian2014}) are
entirely deterministic and not based on a randomized code ensemble.
One reason for this is that deterministic \GPCs have been shown to
achieve extremely low error floors in practice. Moreover, the inherent
code structure often results in implementation advantages compared to
randomized \GPCs. For example, the array representation of many
deterministic GPCs facilitates the use of simple hardware layouts and
efficient ``row-column'' iterative decoding schedules, whereas
ensemble-based GPCs are unlikely to possess an array representation.
Therefore, given the structured Tanner graphs of many practical
\gls{gpc} classes, it would be highly desirable to make precise
statements about the performance of actual codes, without resorting to
an ensemble argument.}

The work here is closely related to \cite{Schwartz2005, Justesen2007,
Justesen2011}. In \cite{Schwartz2005}, combinatorial tools from the
study of random graphs are used to analyze the iterative decoding of
\glspl{pc}. In \cite{Justesen2007}, the authors point out the direct
connection between the iterative decoding of \glspl{pc} and a
well-studied problem in random graph theory: the emergence of a
$k$--core, defined as the largest induced subgraph where all vertices
have degree at least $k$ \cite{Pittel1996}. Indeed, assuming that all
component codes can correct $\tee$ erasures and allowing for an
unrestricted number of iterations, the decoding either finishes
successfully, or gets stuck and the resulting graph corresponds to the
$(\tee + 1)$--core of the residual graph. The results in
\cite{Pittel1996} apply to \glspl{pc} only after some modifications
(described in \cite{Justesen2007}), since the random graph model in
\cite{Pittel1996} is slightly different than the actual one
corresponding to the residual graph ensemble of \glspl{pc}. In a later
paper, Justesen considered \glspl{gpc} for which the Tanner graph is
based on a complete graph \cite{Justesen2011} (see, e.g.,
Fig.~\ref{fig:hpc}(b)).  In that case, the results in
\cite{Pittel1996} are directly applicable. The resulting codes are
referred to as \glspl{hpc}.  Even though these codes have received
very little attention in the literature, Tanner already used a similar
construction \cite[Fig.~6]{Tanner1981}.

We use \glspl{hpc} as the starting point for our analysis. The reason
is that the residual graph of an \gls{hpc} corresponds exactly to an
instance of the \ErdRen random graph model $\Gnp$ \cite{Erdos1959,
Gilbert1959}, which is arguably one of the most well-studied random
graph models and also considerably simpler than the inhomogeneous
random graph model in \cite{Bollobas2007}. It is therefore instructive
to consider this case in sufficient detail before analyzing
generalizations to other \glspl{gpc}. Even though other classes of
\glspl{gpc} are mentioned and discussed also in \cite{Justesen2011}
(e.g., braided codes), so far, rigorous analytical results about the
asymptotic performance of deterministic \glspl{gpc} have been limited
to conventional \glspl{pc} and \glspl{hpc}. 

As an application of the derived DE equations for deterministic GPCs,
we discuss the optimization of component code mixtures for HPCs. In
particular, we consider the case where the component codes can have
different erasure-correcting capabilities. It is shown that, similar
to irregular PCs \cite{Hirasawa1984, Alipour2012}, HPCs greatly
benefit from employing component codes with different strengths,
\RevA{both in terms of decoding thresholds and finite-length performance}.  We
further derive upper and lower bounds on the iterative decoding
thresholds of HPCs with component code mixtures. The upper bound is
shown to have a graphical interpretation in terms of areas related to
the DE equations, similar to the area theorem of irregular LDPC codes. 

The remainder of the paper is structured as follows. We start by
analyzing \glspl{hpc} in Sections~\ref{sec:hpc},
\ref{sec:random_graphs}, and \ref{sec:hpc_proof}. In particular, in
Section~\ref{sec:hpc} we discuss the code construction, the decoding
algorithm, and state the main result about the asymptotic performance
of HPCs in Theorem \ref{th:hpc_result}. In
Section~\ref{sec:random_graphs}, we review the necessary background
about random graphs and branching processes related to the proof of
Theorem \ref{th:hpc_result}, which is then given in
Section~\ref{sec:hpc_proof}. In Section~\ref{sec:gpc}, we extend
Theorem \ref{th:hpc_result} to a general deterministic construction of
\glspl{gpc} and derive the corresponding DE equations. The
optimization of component code mixtures for irregular \glspl{hpc} is
studied in Section~\ref{sec:irregular_hpc}.  The paper is concluded in
Section~\ref{sec:conclusion}.

\subsection{Notation}
\label{sec:notation}

The following notation is used throughout the paper. We define the
sets $[n] \define \{1, 2, \dots, n\}$, $\No \define \{0, 1, 2,
\dots\}$, and $\mathbb{N} \define \{1, 2, \dots\}$. The cardinality of
a set $\mathcal{A}$ is denoted by $|\mathcal{A}|$. Sequences are
denoted by $(x_n)_{n \geq 1} = x_1, x_2, \ldots$. The \gls{pdf} of a
\gls{rv} $X$ is denoted by $f_X(\cdot)$.  Expectation and probability
are denoted by $\E{\cdot}$ and $\Pr{\cdot}$, respectively. We write $X
\sim \Bern{p}$ if $X$ is a Bernoulli \gls{rv} with success probability
$p$, $X \sim \Binom{n}{p}$ if $X$ is a Binomial \gls{rv} with
parameters $n$ and $p$, and $X \sim \Pois{\lambda}$ if $X$ is a
Poisson \gls{rv} with mean $\lambda$.  With some abuse of notation, we
write, e.g., $\Pr{\Pois{\lambda} \geq \tee}$ for $\Pr{X \geq \tee}$
with $X \sim \Pois{\lambda}$. We define the Poisson tail probability
as $\Psi_{\geq \tee}(\lambda) \define \Pr{\Pois{\lambda} \geq \tee} =
1 - \sum_{i=0}^{\tee-1} \Psi_{= i}(\lambda)$, where $\Psi_{=
i}(\lambda) \define \frac{\lambda^i}{i!} e^{-\lambda}$.  We use
boldface to denote vectors and matrices (e.g., $\vect{a}$ and
$\mat{A}$). Matrix transpose is denoted by $(\cdot)^\transpose$.
Convergence in distribution (weak convergence) is denoted by
$\indistrto$ and convergence in probability by $\inprobto$.  For
positive real functions, standard asymptotic notation (as $n \to
\infty$) will be used, e.g., we write $f(n) = \mathcal{O}(g(n))$ if
there exist constants $k, n_0$ such that $f(n) \leq k g(n)$ for all $n
> n_0$.  We write $f(n) = \Omega(g(n))$ if there exist constants $k,
n_0$ such that $f(n) \geq k g(n)$ for all $n > n_0$.  We write $f(n) =
\Theta( g (n))$ if both $f(n) =\mathcal{O}(g(n))$ and $f(n) =
\Omega(g(n))$.  Finally, a code is called an $(n, k, d)$ code if it is
linear and it has length $n$, dimension $k$, and minimum distance $d$.

\section{Half-Product Codes}
\label{sec:hpc}

\subsection{Code Construction}
\label{sec:hpc_construction}

Let $\Cbch$ be a binary $(n,\kbch, \tee+1)$ code and recall that such
a code can correct all erasure patterns up to weight $\tee$. An
\gls{hpc} is constructed as follows
(cf.~\cite[Sec.~III-B]{Justesen2011}). Start with a conventional
\gls{pc} defined as the set of $n \times n$ arrays such that each row
and column is a codeword in the component code $\mathcal{C}$.  Then,
form a subcode of this \gls{pc} by retaining only symmetric codeword
arrays (i.e., arrays that are equal to their transpose) with a zero
diagonal. After puncturing the diagonal and the upper (or lower)
triangular part of the array, one obtains an \gls{hpc} of length
$\nhpc = \binom{n}{2}$.  The Tanner graph representing an \gls{hpc} is
obtained from a complete graph with $n$ vertices by interpreting each
vertex as a CN corresponding to $\Cbch$ (shortened by one bit) and
replacing each of the $\nhpc$ edges by two half-edges joint together
by a \gls{vn} \cite[Sec.~III-B]{Justesen2011}.\footnote{One way to see
this is to incorporate the symmetry constraint into the Tanner graph
of a \gls{pc} by connecting each VN to the ``transposed'' VN through a
single parity-check (forcing the two to be equal). The graph now
consists of degree-3 VNs (one row, one column, and one symmetry
constraint), but can be simplified by removing all row (or column)
constraints.} In the following, we assume some fixed (and arbitrary)
ordering on the CNs and VNs. 

\begin{figure}[t]
	\centering
	\subfloat[$5 \times 5$ array]{\includegraphics{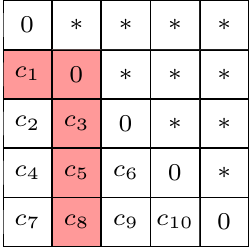}}
	\quad
	\subfloat[Tanner graph]{\includegraphics{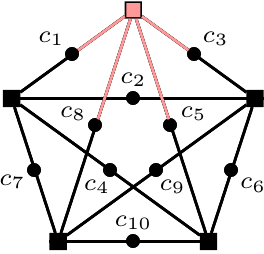}}
	\quad
	\subfloat[residual graph]{\includegraphics{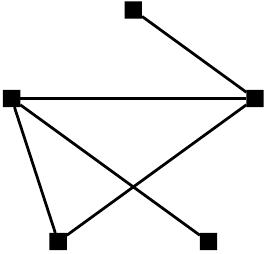}}

	\caption{Illustrations for an \gls{hpc} with $n = 5$. In the array,
	``*'' means ``equal to the transposed element''. The highlighted
	array elements illustrate one particular code constraint, which is
	also highlighted in the Tanner graph.  }
	\label{fig:hpc}
\end{figure}

\begin{example}
	\label{ex:hpc}
	Figs.~\ref{fig:hpc}(a) and (b) show the code array and Tanner graph
	of an \gls{hpc} for $n=5$ and $m = 10$. The highlighted array
	elements show the code symbols participating in the second row
	constraint, which, due to the enforced symmetry, is also the second
	column constraint.  Effectively, each component code acts on an
	L-shape in the array, i.e., both a partial row and column, which
	includes one diagonal element.  The degree of each CN is $n-1 = 4$,
	due to the zeros on the diagonal. For example, for the highlighted
	CN in Fig.~\ref{fig:hpc}(b), the second bit position of
	$\mathcal{C}$ is shortened (i.e., set to zero). Different bit
	positions are shortened for different CNs. Thus, the effective
	$(n-1, \kbch -1, \tee+1)$ component codes associated with the CNs
	are not necessarily the same.  \demo
\end{example}

\begin{remark}
	\label{rmk:bit_assignments}
	Recall that for a Tanner graph with generalized \CNs, the edges
	emanating from each CN should also be labeled with the corresponding
	component code bit positions \cite[Sec.~II]{Tanner1981}. For
	\glspl{hpc}, this assignment is implicitly given due to the array
	description. For example, the edges emanating from the highlighted
	CN in Fig.~\ref{fig:hpc}(b) correspond to bit positions $1$, $5$,
	$4$, and $3$ (in left-to-right order). Reshuffling these assignments
	may result in an overall code with different properties (e.g., rate)
	even though the Tanner graph remains unchanged
	\cite[Sec.~II]{Tanner1981}, \cite[Sec.~III-A]{Justesen2011}.
	However, for the considered iterative decoder, the performance
	remains identical as long as the component code associated with each
	CN is able to correct $\tee$ erasures, regardless of the bit
	position assignment. 
\end{remark}

We consider the limit $n \to \infty$, i.e., we use the number of \CNs
in the Tanner graph to denote the problem size as opposed to the code
length $\nhpc = \bigo(n^2)$. Assuming that $\mathcal{C}$ has a fixed
erasure-correcting capability\footnote{More precisely, we consider
sequences of codes with increasing length and fixed erasure-correcting
capability.}, this limit is sometimes referred to as the high-rate
scaling limit or high-rate regime \cite{Jian2012}. Indeed, if
$\mathcal{C}$ has dimension $\kbch$, the rate of an \gls{hpc} is
lower-bounded by \cite[Sec.~5.2.1]{Ryan2009} (see also
\cite[Th.~1]{Tanner1981}) 
\begin{align} 
	\label{eq:hpc_rate_lower_bound}
	R \geq 1 - \frac{n (n - 1 - (\kbch-1))}{m} = 1 - 2 \frac{n-\kbch}{n-1}. 
\end{align} 
For a fixed erasure-correcting capability, we can assume that
$n-\kbch$ in \eqref{eq:hpc_rate_lower_bound} stays constant. It
follows that $R \to 1$ as $n \to \infty$. Note that the dimension of
an \gls{hpc} is $\kbch(\kbch-1)/2$ \cite[Sec.~III-B]{Justesen2011},
\cite[Lem.~8]{Pfister2015}, which leads to a slightly larger rate than
the lower bound in \eqref{eq:hpc_rate_lower_bound}. 

\subsection{Binary Erasure Channel} 
\label{sec:hpc_bec}

Suppose that a codeword of an \gls{hpc} is transmitted over the
\gls{bec} with erasure probability $p$. Let $I_k$ be the number of
initial erasures associated with the $k$-th component code constraint.
Due to symmetry, we have $\E{I_k} = p (n-1)$ for all $k \in [n]$.
Moreover, using a Chernoff bound, it can be shown that $I_k$
concentrates around its mean (see, e.g.,
\cite[Sec.~IV]{Schwartz2005}). As a consequence, for a fixed $p > 0$
and $n \to \infty$, we see that any decoding attempt will be futile
since $\E{I_k} \to \infty$ for all $k$, but, on the other hand, we
assumed a finite erasure-correcting capability for the component
codes. We therefore let the erasure probability decay slowly as $p =
c/n$, for a fixed $c>0$.  Since now $p \to 0$ as $n \to \infty$, one
may (falsely) conclude that decoding will always be successful in the
asymptotic limit. As we will see, however, the answer depends
crucially on the choice of $c$.  It is thus instructive to interpret
$c$ as the ``effective'' channel quality for the chosen scaling of the
erasure probability. From the above discussion, its operational
meaning is given in terms of the expected number of initial erasures
per component code constraint for large $n$, i.e., $\E{I_k} = c
(n-1)/n \approx c$. 

\begin{remark}
One may alternatively assume a fixed erasure probability $p$, in
conjunction with sequences of component codes that can correct a fixed
fraction of erasures in terms of their block length. However, in that
case, a simple analysis reveals that the (half-)product construction
is essentially useless in the limit $n \to \infty$, and it is indeed
better to just use the component code by itself (see the discussion in
\cite[Sec.~IV]{Schwartz2005}).
\end{remark}

\subsection{Iterative Decoding} 
\label{sec:hpc_decoding}

Suppose decoding is performed iteratively for $\ell$ iterations
according to the following procedure. In each iteration, perform
\gls{bdd} for all \CNs based on the values of the connected \VNs.
Afterwards, update previously erased VNs according to the decoding
outcome. Updates are performed whenever there exists at least one CN
where the weight of the associated erasure pattern is less than or
equal to $\tee$. If the weight exceeds $\tee$, we say that the
corresponding component code declares a decoding failure. 

\begin{remark}
	\label{rmk:message_passing}
	The decoding can alternatively be interpreted as an (intrinsic)
	message-passing decoder. In the first iteration, all VNs forward the
	received channel observations to the connected CNs. Then, CNs perform
	\gls{bdd} based on all incoming messages and update their outgoing
	messages according to the decoding outcome. In subsequent
	iterations, outgoing VN messages are changed from erased to known if
	any of the two incoming CN messages becomes known. These update rules
	for VN and CN messages are not extrinsic
	(cf.~\cite[p.~117]{Richardson2008}), since the outgoing message
	along an edge may depend on the incoming message along the same
	edge. 
\end{remark}

An efficient way to represent the decoding is to consider the
following peeling procedure. First, form the residual graph from the
Tanner graph by deleting \glspl{vn} and adjacent edges associated with
correctly received bits and collapsing erased \glspl{vn} into edges
\cite{Schwartz2005, Justesen2007, Justesen2011}. Then, in each
iteration, determine all vertices that have degree at most $\tee$ and
remove them, together with all adjacent edges. The decoding is
successful if the resulting graph is empty after (at most) $\ell$
iterations. 

\begin{example}
	Fig.~\ref{fig:hpc}(c) shows the residual graph for the \gls{hpc} in
	Example \ref{ex:hpc}, where $c_2$, $c_3$, $c_4$, $c_7$, and $c_9$
	are assumed to be erased. One may check that for $\tee = 1$, the
	decoding gets stuck after one iteration while for $\tee = 2$, the
	decoding finishes successfully after two iterations. \demo
\end{example}

\begin{remark}
	The above parallel peeling procedure should not be confused with the
	sequential ``peeling decoder'' described in, e.g.,
	\cite[p.~117]{Richardson2008}. That decoder uses a different
	scheduling where vertices are removed sequentially and not
	in parallel, i.e., in each step one picks only one vertex with
	degree at most $\tee$ (uniformly at random) and removes it
	\cite[p.~117]{Richardson2008}. 
\end{remark}

\subsection{Asymptotic Performance}
\label{sec:hpc_performance}

For a fixed $\ell$, we wish to characterize the asymptotic decoding
performance as $n \to \infty$. We start by giving a heuristic argument
behind the result stated in Theorem \ref{th:hpc_result} below. For a
similar discussion in the context of cores in random graphs, see
\cite[Sec.~2]{Pittel1996}.

\begin{itemize}
	\item Consider a randomly chosen CN. The decoding outcome of the
		\gls{bdd} for this CN after $\ell$ iterations depends only on the
		depth-$\ell$ neighborhood\footnote{The depth-$\ell$ neighborhood
		of a vertex is the subgraph induced by all vertices that can be
		reached by taking $\ell$ or fewer steps from the vertex.} of the
		vertex in the residual graph corresponding to this CN. The
		residual graph itself is an instance of the \ErdRen random graph
		model $\Gnp$, which consists of $n$ vertices. An edge between two
		vertices exists with probability $p = c/n$, independently of all
		other edges. 
		
	\item For large $n$, the fixed-depth neighborhood approximately
		looks like a Poisson branching process, which starts with an
		initial vertex at depth $0$ that has a Poisson number of
		neighboring vertices with mean $c$ that extend to depth $1$. Each
		of these vertices has again a Poisson number of neighboring
		vertices, independently of all other vertices, and so on. 

	\item For large $n$ and fixed $\ell$, one would therefore expect the
		probability that an individual CN declares a failure to be close
		to the probability that the root vertex of the first $\ell$
		generations of the branching process survives the same peeling
		procedure as described for the residual graph. We define the
		latter probability as $z^{(\ell)}$. We will see in
		Section~\ref{sec:hpc_de} that
		\begin{align}
			\label{eq:hpc_de_z}
			z^{(\ell)} = \Psi_{\geq \tee+1}(c x^{(\ell-1)}),
		\end{align} 
		where the function $\Psi_{\geq \tee}$ is defined in
		Section~\ref{sec:notation} and $x^{(\ell)}$ is defined recursively
		by $x^{(0)} = 1$ and
		\begin{align}\label{eq:hpc_de_x} x^{(\ell)} = \Psi_{\geq \tee}(c
		x^{(\ell-1)}). \end{align}

\end{itemize}
The main result for \glspl{hpc} is as follows. 

\begin{theorem}
	\label{th:hpc_result}
	Let $W_k\nl$ be the indicator \gls{rv} for the event that the $k$-th
	component code declares a decoding failure after $\ell$ iterations
	of decoding and let the fraction of failed component codes be $W\nl
	= \frac{1}{n} \sum_{k=1}^n W_k\nl$. Then, we have 
\begin{align}
	\label{eq:hpc_cycle_free_bahavior}
	\lim_{n \to \infty} \mathbb{E}[W\nl] =  z^{(\ell)}.
\end{align}
Furthermore, for any $\eps \geq 0$, there exist $\delta > 0$,
$\beta>0$, and $n_0 \in \mathbb{N}$ such that for all $n > n_0$ we
have
\begin{align}
	\label{eq:hpc_concentration}
	\Pr{|W\nl - \mathbb{E}[W\nl]| \geq \eps} \leq e^{-\beta
	n^\delta}.
\end{align}
\end{theorem}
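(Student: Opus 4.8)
The plan is to prove the two statements separately, using the local weak convergence of the residual graph to a Poisson Galton--Watson tree for the mean, and a martingale (Azuma--Hoeffding / bounded-differences) argument for the concentration.

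\emph{Step 1: Reduction to local neighborhoods.}
First I would make precise the observation already sketched in Section~\ref{sec:hpc_performance}: the indicator $W_k\nl$ is a function only of the depth-$\ell$ neighborhood of the $k$-th CN in the residual graph. This is because the parallel peeling procedure over $\ell$ iterations can only propagate information along paths of length at most $\ell$; formally, one verifies by induction on $\ell$ that whether vertex $k$ survives $\ell$ rounds of peeling depends only on the isomorphism class of the rooted subgraph induced by vertices at graph-distance $\le \ell$ from $k$ (together with, trivially, no extra randomness since the component codes all correct exactly $\tee$ erasures — cf.\ Remark~\ref{rmk:bit_assignments}). Denote by $g_\ell(\cdot)$ the resulting (measurable, bounded) functional on rooted graphs, so $W_k\nl = g_\ell(B_k(\ell))$ where $B_k(\ell)$ is that neighborhood in the residual graph.

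\emph{Step 2: Convergence of the mean.}
By exchangeability of the $n$ CNs, $\E{W\nl} = \Pr{W_1\nl = 1} = \Pr{g_\ell(B_1(\ell)) = 1}$. The residual graph is exactly $\Gnp$ with $p = c/n$, so it is classical (local weak convergence of sparse \ErdRen graphs, e.g.\ \cite[Sec.~2]{Pittel1996}) that $B_1(\ell)$ converges in distribution (in the local sense, i.e.\ for every fixed finite rooted graph the probability of that neighborhood-type converges) to the first $\ell$ generations of the Poisson\,$(c)$ Galton--Watson tree $\PoisGW$. Since $g_\ell$ depends only on a bounded-depth neighborhood and is a bounded functional, the convergence of the relevant neighborhood-type probabilities gives $\limninf \E{W\nl} = \Pr{g_\ell(\text{first }\ell\text{ generations of }\PoisGW) = 1} \eqqcolon z^{(\ell)}$. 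Here I would invoke the recursion \eqref{eq:hpc_de_z}--\eqref{eq:hpc_de_x} derived in Section~\ref{sec:hpc_de}: conditioning on the Poisson\,$(c)$ offspring count of the root and on which of those subtrees survive peeling (each independently with probability $x^{(\ell-1)}$, by the definition of $x$ as the survival probability at depth one less), a Poisson-thinning argument gives that the number of surviving children is Poisson\,$(c\,x^{(\ell-1)})$, so the root survives iff this count is $\ge \tee+1$, i.e.\ with probability $\Psi_{\geq \tee+1}(c x^{(\ell-1)})$. This establishes \eqref{eq:hpc_cycle_free_bahavior}.

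\emph{Step 3: Concentration.}
For \eqref{eq:hpc_concentration} I would expose the randomness of $\Gnp$ through the $\binom{n}{2}$ independent edge-indicators and apply a bounded-differences inequality. The subtle point is that flipping a single edge can change $W\nl = \frac1n\sum_k W_k\nl$ by more than $O(1/n)$, because an edge can lie in many short cycles, but the expected effect is small; one standard route is to split off a low-probability ``bad'' event $\mathcal{B}$ that the depth-$\ell$ neighborhood of some CN is atypically large (not tree-like within a ball of size polylog $n$, or containing more than $n^{\delta'}$ vertices), show $\Pr{\mathcal{B}} \le e^{-\Omega(n^{\delta})}$ by a union bound over CNs and a branching-process tail estimate (the number of vertices within distance $\ell$ of a fixed vertex in $\Gncn$ has exponential tails), and on the complement argue that a single edge-flip changes at most $n^{2\delta'}$ of the $W_k\nl$'s, so the per-coordinate bounded difference is $O(n^{2\delta'-1})$. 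Feeding this into Azuma--Hoeffding with $\binom n2$ coordinates yields $\Pr{|W\nl - \E{W\nl\mid \mathcal{B}^c}| \ge \eps/2} \le \exp(-\Omega(\eps^2 n^{1-4\delta'}))$; choosing $\delta' < 1/4$ and then $\delta$ slightly below $1-4\delta'$, and absorbing the $O(\Pr{\mathcal{B}})$ shift of the conditional mean into $\eps$, gives the claimed $e^{-\beta n^\delta}$. Alternatively — and this is the cleaner option if the paper later proves the general \gls{gpc} statement via \cite{Bollobas2007} — one can cite a concentration result for functionals of inhomogeneous random graphs directly; but for the \ErdRen case the self-contained Azuma argument is preferable and avoids forward references.

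\emph{Main obstacle.}
The genuinely delicate part is the concentration bound, specifically controlling the bounded-difference constant: the naive bound ``one edge affects $O(1)$ vertices'' is false, and one must quarantine the rare large-neighborhood event and quantify how the $\delta$ in the exponent $n^\delta$ degrades with the size of the neighborhoods one allows. Getting an honest sub-exponential (rather than merely polynomial) bound requires the branching-process tail estimate to be exponential in a power of $n$, which constrains how large ``typical'' neighborhoods are allowed to be; balancing these two effects is where all the real work sits. The mean computation (Steps~1--2), by contrast, is routine local-weak-convergence bookkeeping once the peeling-is-local observation is nailed down.
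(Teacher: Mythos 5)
Your Steps~1--2 match the paper's Section~\ref{sec:hpc_proof} in all essentials: the paper proves the convergence of the mean via a self-contained exploration-process/stopping-time argument (Lemmas~\ref{lem:tree_like_neighborhood} and the convergence in Section~\ref{sec:hpc_convergence}) rather than invoking local weak convergence as a black box, and then derives exactly the recursion \eqref{eq:hpc_de_z}--\eqref{eq:hpc_de_x} by Poisson thinning as you describe. That part is fine, just less self-contained than the paper's version.

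There is, however, a genuine gap in Step~3. Plain Azuma--Hoeffding with $\binom{n}{2}$ coordinates and per-coordinate Lipschitz constant $n^{2\delta'-1}$ gives $\sum_i c_i^2 \approx \tfrac12 n^{4\delta'}$, hence an upper bound of order $\exp(-\Theta(\eps^2 n^{-4\delta'}))$, which is trivial. The rate $\exp(-\Omega(\eps^2 n^{1-4\delta'}))$ you state is not attainable from the worst-case Azuma/McDiarmid bound: it requires a variance-exploiting (Bernstein/Freedman-type) martingale inequality that takes advantage of the fact that each coordinate is $\Bern{c/n}$, so a typical increment is zero and the conditional variances, not the worst-case increments, control the tail. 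Separately, the step ``absorbing the $O(\Pr{\mathcal{B}})$ shift of the conditional mean into $\eps$'' is not quite how one handles the bad event: conditioning on $\mathcal{B}^c$ breaks the independence of the edge indicators, so you cannot simply apply Azuma conditionally; you must modify the function so that the typical Lipschitz constant holds globally (or use a packaged result). The paper's proof (Section~\ref{sec:hpc_concentration}) resolves both issues simultaneously by citing Warnke's method of typical bounded differences (Theorem~\ref{th:typical_bounded_differences_inequality}), whose denominator $2m(1-p)p(\Lambda+b)^2 + 2(\Lambda+b)a/3$ is exactly the Bernstein-type variance term you implicitly need, and whose $\Gamma$ term handles the quarantine of the bad event rigorously. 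Your high-level plan is the right one, and your arithmetic matches the paper's $n^{1-2\delta\ell}$ exponent once you set $\delta' = \delta\ell/2$, but the tool you name (Azuma--Hoeffding) does not deliver it; you need Warnke's inequality or an equivalent Freedman-plus-quarantine argument.
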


\begin{proof}
	The proof is given in Section~\ref{sec:hpc_proof}.
\end{proof}

\begin{remark}
	In our notation, we largely suppress the dependence of the involved
	RVs on $n$ and $\ell$ (e.g., one could write $W^{(n,
	\ell)}$ instead of $W$). 
\end{remark}


Combining \eqref{eq:hpc_cycle_free_bahavior} and
\eqref{eq:hpc_concentration} allows us to conclude that the code
performance after $\ell$ iterations (measured in terms of the RV $W$,
i.e., the fraction of component codes that declare failure) converges
almost surely to a deterministic value, i.e., it sharply concentrates
around $z^{(\ell)}$ for sufficiently large $n$. This result is
analogous to the \gls{de} analysis of \gls{ldpc} codes
\cite[Th.~2]{Richardson2001}, and hence, we refer to
\eqref{eq:hpc_de_z} and \eqref{eq:hpc_de_x} as the \gls{de} equations. 

The chosen performance measure in Theorem \ref{th:hpc_result} is the
most natural one for the proof in Section~\ref{sec:hpc_proof}. It is,
however, possible to relate \eqref{eq:hpc_de_z} and
\eqref{eq:hpc_de_x} to other performance measures that are more
relevant in practice. 

\begin{figure}[t]
	\begin{center}
		\includegraphics{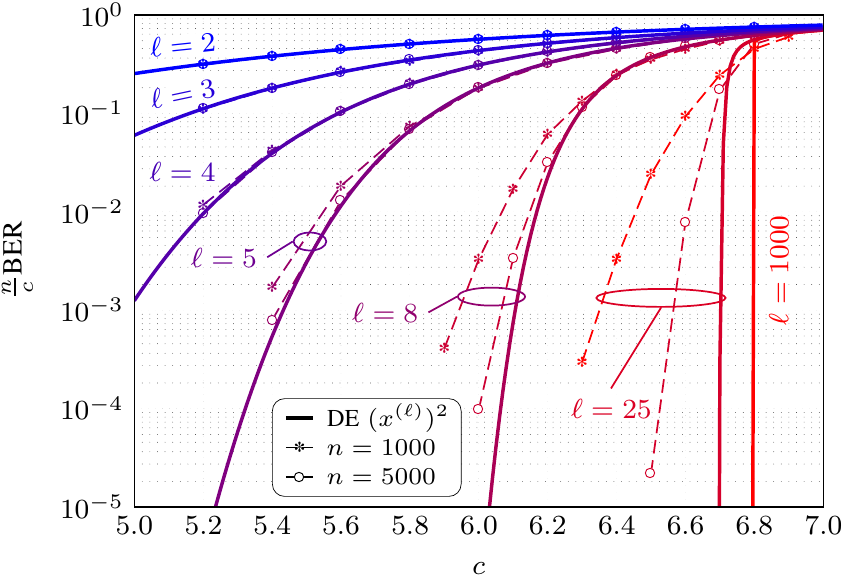}
	\end{center}
	\caption{\gls{de} and simulation results for \glspl{hpc} with $\tee
	= 4$ as a function of the iteration number $\ell$. }
	\label{fig:hpc_ber_iteration}
\end{figure}

\begin{example}
	The meaning of the quantity $x^{(\ell)}$ is given in Section
	\ref{sec:hpc_de} in terms of the Poisson branching process. The
	operational meaning in the coding context is as follows. Consider a
	randomly chosen erased bit. Asymptotically, $x^{(\ell)}$ corresponds
	to the probablity that the bit is not recovered after $\ell$
	decoding iterations by one of the two corresponding component codes.
	Since each bit is protected by two component codes, the overall
	probability of not recovering the bit is asymptotically given by
	$(x^{(\ell)})^2$. In Fig.~\ref{fig:hpc_ber_iteration}, we plot the
	resulting \gls{de} prediction $(x^{(\ell)})^2$ as a function of $c$
	for $\tee = 4$ and different values of $\ell$, together with
	simulation results of the (scaled) \gls{ber} for $n = 1000$ and $n =
	5000$. Asymptotically as $n \to \infty$, we expect the simulation
	results to converge to the solid lines. \RevA{
	The rate of
	convergence is not analyzed in this paper, e.g., through a
	finite-length scaling analysis. It should be noted, however, that
	the convergence rate in terms of the code length $m$ is rather
	``slow''. More precisely, consider the gap $\Delta$ between the DE
	prediction and finite-length simulations for $\frac{n}{c}\text{BER}
	= 10^{-3}$ and $\ell=25$ in Fig.~\ref{fig:hpc_ber_iteration}. From
	the simulation results, one may estimate that $\Delta \approx
	\bigo(n^{-1/2}) = \bigo(m^{-1/4})$, since $m = \bigo(n^2)$.} \demo
\end{example}


Theorem \ref{th:hpc_result} can be seen as an application of
\cite[Th.~11.6]{Bollobas2007}, except for the concentration bound in
\eqref{eq:hpc_concentration}. In fact, \cite[Th.~11.6]{Bollobas2007}
applies to a more general class of inhomogeneous random graphs, and we
use it later when studying generalizations to other \glspl{gpc}. The
reason for including a separate proof for \glspl{hpc} in
Section~\ref{sec:hpc_proof} is two-fold. First, since
\cite[Th.~11.6]{Bollobas2007} applies to a more general class of
random graphs, it is instructive to consider the simplest case, i.e.,
the random graph $\Gnp$ corresponding to \glspl{hpc}, separately and
in more detail. Second, rather than relying on
\cite[Th.~11.6]{Bollobas2007}, a self-contained proof of Theorem
\ref{th:hpc_result} allows us to point out similarities and
differences to the \gls{de} analysis for \gls{ldpc} codes in
\cite{Luby1998b, Richardson2001}, which we believe many readers are
familiar with. 

As mentioned in~\cite{Justesen2011}, iterative decoding of \glspl{hpc}
over the \gls{bec} is closely related to the emergence of a $k$--core
in $\Gnp$. First observe that the overall decoding is successful if
the \gls{rv} $W$ is strictly zero, i.e., if none of the component
decoders declare failure. The existence of a core can then be related
to the overall decoding failure assuming an unrestricted number of
iterations. Therefore, there is a subtle difference between studying
the core and the overall decoding failure in our setup. In our case,
the notion of decoding failure is always linked to the number of
decoding iterations, which is assumed to be fixed
(cf.~\cite[Sec.~3.19]{Richardson2008}). As a consequence, even though
the overall decoding may fail after a finite number of iterations,
there need not be a core in the residual graph. (The decoding may have
been successful if we had done one more iteration, say.) Linking the
decoding failure to the number of iterations has the advantage that it
can always be determined locally (within the neighborhood of each
vertex), whereas the core is a global graph property.  In general,
additional effort is required to infer information about global graph
properties from local ones \cite[Sec.~3.3]{Bollobas2009a},
\cite{Riordan2007}.   

\RevA{%
\subsection{Performance Prediction of Finite-Length Codes}

Before proving Theorem \ref{th:hpc_result}, it is instructive to
discuss the practical implications of the asymptotic DE analysis for
finite-length codes. This is particularly important because the
high-rate scaling limit implies $p \to 0$ and $R \to 1$ as $n \to
\infty$, which seems to preclude any practical usefulness. To see that
this is not the case, we start by reviewing the practical usefulness
of DE for finite-length LDPC codes. 

DE is typically used to find decoding thresholds that divide the
channel quality parameters range (e.g., the erasure probability or the
signal-to-noise ratio) into a region where reliable communication is
possible and where it is not. This interpretation of the threshold as
a sharp dividing line is appropriate for $n \to \infty$, where, for
LDPC codes, $n$ is the code length. On the other hand, for finite $n$,
thresholds are still useful to approximately predict the region of the
channel quality parameter range where the performance curve of a
finite-length LDPC code bends into the characteristic waterfall
behavior. Moreover, thresholds have been used with great success as an
optimization criterion to improve the performance of practical,
finite-length LDPC codes in a wide variety of applications.  The
rationale behind this approach is that threshold improvements
translate quite well into performance improvements, at least if $n$ is
sufficiently large.  While there is no guarantee that this approach
works, it typically leads to fast and efficient optimization routines. 

The asymptotic DE analysis in this paper can be used in essentially
the same way. The main conceptual difference with respect to DE for
LDPC codes is that decoding thresholds are not given in terms of the
actual channel quality, but rather in terms of the effective channel
quality. In particular, the decoding threshold is formally defined as
\begin{align}
	\label{eq:gpc_threshold}
	\cthr \define \sup\{ c > 0 \,|\, \lim_{\ell \to \infty} z^{(\ell)} = 0
	\}.
\end{align}
Asymptotic results, including thresholds, can be translated into a
nonasymptotic setting by considering the erasure probability scaling
$p = c/n$ for a given (finite) $n$. For example, consider the
results for HPCs with $\tee = 4$ shown in
Fig.~\ref{fig:hpc_ber_iteration}. The threshold in this case is
located at approximately $\cthr \approx 6.8$. Therefore, we should
expect the waterfall behavior for $n = 1000$ to start at $p \approx
0.0068$ and for $n = 5000$ at $p \approx 0.00136$. 

The practical usefulness of the asymptotic DE analysis for
finite-length codes will be further illustrated in
Section \ref{sec:irregular_hpc}, where we consider the parameter
optimization of so-called irregular HPCs.  We will see that by using
thresholds as an optimization criterion, performance improvements for
finite-length codes can be obtained in much the same way as for LDPC
codes.}


\section{Random Graphs and Branching Processes}
\label{sec:random_graphs}

In this section, we review the necessary background related to the
proof of Theorem \ref{th:hpc_result} in Section~\ref{sec:hpc_proof}.

\subsection{Random Graphs}

Let $\Gnp$ be the \ErdRen model (also known as the Gilbert model) of a
random graph with $n$ vertices, where each of the $\nhpc =
\binom{n}{2}$ possible edges appears with probability $p$,
independently of all other edges \cite{Erdos1959, Gilbert1959}.  A
helpful representation of this model is to consider a random,
symmetric $n \times n$ adjacency matrix $\boldsymbol{\bRV}$ with
entries $\bRV_{i,i} = 0$ and $\bRV_{i,j} (= \bRV_{j,i}) \sim
\Bern{p}$. We use $G$ to denote a random graph drawn from $\Gnp$. For
the remainder of the paper, we fix $p = c/n$.

\begin{example}
	\label{ex:vertex_degrees}
	Let $D_k = \sum_{j=1}^n \bRV_{k,j}$ be the degree of the $k$-th
	vertex. For any $k \in [n]$, $D_k \sim \Binom{n-1}{c/n}$ with
	$\E{D_k} = (n-1)c/n$. For large $n$, all degrees are approximately
	Poisson distributed with mean $c$. More precisely, let $(D_n)_{n
	\geq 1}$ be a sequence of \glspl{rv} denoting the degrees of
	randomly chosen vertices in $\Gncn$ and $D \sim \Pois{c}$. Then,
	$D_n \indistrto D$. \demo
\end{example}

The following result about the maximum vertex degree will be used in
the proof of the concentration bound \eqref{eq:hpc_concentration}. 

\begin{lemma}
	\label{lem:maximum_vertex_degree}
	Let $D_{\text{\emph{max}}} \define \max_{i \in [n]} \sum_{j=1}^n
	\bRV_{i,j}$ be the maximum degree of all vertices in the random graph
	$G$. We have
	\begin{align}	
		\Pr{D_\text{\emph{max}} \geq d_n} \leq e^{-\Omega(d_n)}, 
	\end{align}
	where $d_n$ is any function of $n$ satisfying $d_n =
	\Omega(\log(n))$. 
\end{lemma}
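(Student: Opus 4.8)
The plan is to combine a union bound over the $n$ vertices with a standard binomial tail estimate for a single degree, and then to note that the logarithmic penalty incurred by the union bound is swallowed by the faster-than-linear growth of the binomial tail exponent — which is exactly the role played by the hypothesis $d_n = \Omega(\log n)$.

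First I would treat a single vertex. For fixed $k$, the degree $D_k$ is $\Binom{n-1}{c/n}$, and the elementary bound $\binom{N}{d}\le(Ne/d)^d$ gives, for any integer $d$,
\[
\Pr{D_k \ge d} \;\le\; \binom{n-1}{d}\Big(\frac{c}{n}\Big)^{d} \;\le\; \Big(\frac{(n-1)\,ec}{n\,d}\Big)^{d} \;\le\; \Big(\frac{ec}{d}\Big)^{d}.
\]
When $d_n$ is not an integer one simply applies this with $\lceil d_n\rceil$, which affects only constants. A Chernoff bound would work equally well; I prefer the binomial-coefficient bound because it needs no case distinction on whether $d_n$ exceeds the mean, though that is automatic here since $d_n\to\infty$ while $\E{D_k}\le c$.

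Next I would apply the union bound, $\Pr{D_{\max}\ge d_n}\le n\,\Pr{D_1\ge d_n}\le n\,(ec/d_n)^{d_n}$, and rewrite the right-hand side as $\exp\!\big(\ln n - d_n\ln(d_n/(ec))\big)$. To finish, use the hypothesis: there exist $K>0$ and $n_0$ with $d_n\ge K\ln n$ for all $n>n_0$ (so in particular $d_n\to\infty$ and $\ln(d_n/(ec))\to\infty$); choose $n_1\ge n_0$ with $\ln(d_n/(ec))\ge 1/K+1$ for $n>n_1$. Then for $n>n_1$,
\[
\ln n - d_n\ln\frac{d_n}{ec} \;\le\; \frac{d_n}{K} - d_n\Big(\frac1K+1\Big) \;=\; -d_n,
\]
hence $\Pr{D_{\max}\ge d_n}\le e^{-d_n}=e^{-\Omega(d_n)}$.

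I do not expect a genuine obstacle here; the only place warranting care is this last step, namely checking that the $\ln n$ term from the union bound really is dominated. This is precisely where $d_n=\Omega(\log n)$ is used, and the hypothesis is essentially sharp: for bounded $d_n$ the maximum degree of $\Gncn$ is typically of order $\log n/\log\log n$, so the stated bound would be false without at least logarithmic growth of $d_n$.
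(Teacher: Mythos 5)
Your proof is correct and follows essentially the same path as the paper's (Appendix~\ref{app:maximum_vertex_degree}): bound the upper tail of a single vertex degree, apply a union bound over the $n$ vertices, and then observe that the hypothesis $d_n = \Omega(\log n)$ makes the $\log n$ penalty from the union bound negligible compared to $d_n$. The only difference is in how the single-vertex tail is estimated: the paper uses Markov's inequality on $e^{\lambda D_k}$ and optimizes over $\lambda$ (Chernoff), arriving at $e^{-c - d_n \ln(d_n/(ce))}$, while you use the combinatorial bound $\Pr{D_k \geq d} \leq \binom{n-1}{d}(c/n)^d \leq (ec/d)^d$; the two agree up to the harmless constant $e^{-c}$ and the rest of the argument is identical in structure.
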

\begin{proof}
	The proof is standard and relies on Chernoff's inequality and the
	union bound. For completeness, a proof is given in
	Appendix~\ref{app:maximum_vertex_degree}.
\end{proof}


The random graph $G$ is completely specified by all its edges, i.e.,
by the $m$ RVs $\bRV_{i,j}$ for $1\leq j < i \leq n$. It is sometimes
more convenient to specify these RVs in a length-$m$ vector instead of
a matrix. With some abuse of notation, we also write
$\boldsymbol{\bRV} = (\bRV_1, \dots, \bRV_m)^\transpose$, asserting
that there is a one-to-one correspondence between $\bRV_k$ and
$\bRV_{i,j}$. 

\begin{example}
	Let $E= \sum_{k=1}^{m} \bRV_{k}$ be the number
	of edges in $G$.
	Then, $E \sim \Binom{\nhpc}{c/n}$ and 
	the expected number of edges grows linearly with $n$ since $\E{E} =
	\nhpc p = (n-1) c / 2 $. 
	\demo
\end{example}

\subsection{Neighborhood Exploration Process}
\label{sec:exploration_process}

An important tool to study the neighborhood of a vertex in $\Gnp$ is
the so-called exploration process which we briefly review in the
following (see, e.g., \cite[Sec.~10.4]{Alon2000},
\cite[Ch.~4]{Hofstad2014} for details). This process explores the
neighborhood in a breadth-first manner, exposing one vertex at a time.
Since we are only interested in exploring the neighborhood up to a
fixed depth, we modify the exploration compared to
\cite[Sec.~10.4]{Alon2000}, \cite[Ch.~4]{Hofstad2014} and stop the
process once all vertices in the entire neighborhood for a given depth
$\ell$ are exposed. During the exploration, a vertex can either be
active, explored, or neutral. At the beginning (time $\ttime=0$), one
vertex $v$ is active and the remaining $n-1$ vertices are neutral. At
each time $\ttime \geq 1$, we repeat the following steps.

\begin{enumerate}
	\item \RevA{Choose any of the active vertices that are closest (in
		terms of graph distance) to $v$ and denote it by $w$.
	 At
		time $\ttime=1$, choose $v$ itself.}

	\item Explore all edges $(w, w')$, where $w'$ \emph{runs through all
		active vertices}. If such an edge exists, the explored
		neighborhood is not a tree. (Apart from this fact, this step has
		no consequences for the exploration process.)

	\item Explore all edges $(w, w')$, where $w'$ \emph{runs through all
		neutral vertices}. Set $w'$ active if the edge exists. 

	\item Set $w$ explored. 

\end{enumerate}

Let $X_\ttime$ be the number of vertices that become active at time
$\ttime$ (i.e., in step 3). The number of active vertices, $A_\ttime$, and
neutral vertices, $N_\ttime$, at the end of time $\ttime$ is given by
\begin{align}
	\label{eq:exploration_implicit}
	A_{\ttime} = A_{\ttime-1} + X_\ttime - 1,  \quad N_\ttime = n -
	\ttime -
	A_\ttime,
\end{align}
with $A_0 = 1$. One can also explicitly write 
\begin{align}
	\label{eq:exploration_explicit}
	A_\ttime = S_\ttime - (\ttime-1),  \quad N_\ttime = (n - 1) -
	S_\ttime,
\end{align}
where $S_\ttime \define \sum_{i=1}^\ttime X_i$. Given $N_{\ttime-1}$, we have that
$X_\ttime \sim \Binom{N_{\ttime-1}}{p}$ because each neutral vertex can become
active at time $\ttime$ with probability $p$ \cite[p.165]{Alon2000}. 

We define the stopping time $J_\ell$ of the process $(X_\ttime)_{t\geq 1}$
to be the time when the entire depth-$\ell$ neighborhood has been
exposed.\footnote{In \cite[Sec.~10.4]{Alon2000},
\cite[Ch.~4]{Hofstad2014}, the exploration process is used to study
the connected components in $\Gnp$. In that case, the stopping time is
commonly defined as the hitting time $J \define \inf\{\ttime \in \mathbb{N}
: A_\ttime = 0\}$, i.e., the time when we run out of active vertices during
the exploration.} Formally, $J_\ell$ is recursively defined as 
\begin{align}
	\label{eq:stopping_time_definition}
	J_{\ell} = \sum_{i=1}^{J_{\ell-1}} X_i + 1 = S_{J_{\ell-1}} + 1, 
\end{align}
for $\ell \in \mathbb{N}$, where $J_0 = 0$ (i.e., $J_1 = 1$, $J_2 =
X_1 + 1$, $J_3 = \sum_{i=1}^{X_1+1} X_i + 1$, and so on). 


We further use $Z_\ell$ to denote the number of vertices at depth
$\ell$, where $Z_0 = 1$, and we let $T_\ell = \sum_{l = 0}^{\ell} Z_l$
be the total number of vertices in the entire depth-$\ell$
neighborhood. Observe that $Z_\ell = A_{J_\ell}$, i.e., the number of
vertices at depth $\ell$ corresponds to the number of active vertices
at the stopping time $J_\ell$. We also have $J_\ell = T_{\ell-1}$,
i.e., the stopping time for depth $\ell$ corresponds to the number of
all vertices up to depth $\ell-1$. 

\begin{figure}[t]
	\begin{center}
		\includegraphics{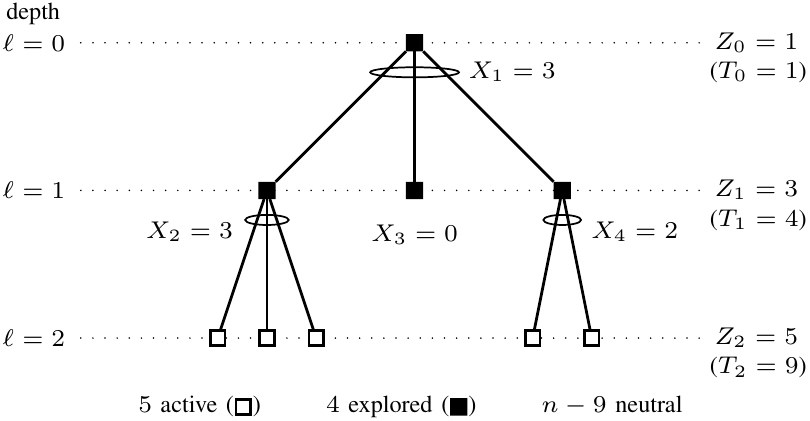}
	\end{center}
	\caption{The neighborhood of depth $\ell = 2$ after $J_2 = 4$ steps
	in the exploration corresponding to Example \ref{ex:tree_example} in
	the text.}
	\label{fig:tree_example}
\end{figure}

\begin{example}
	\label{ex:tree_example}
	Assume $\ell = 2$. An example of the neighborhood is shown in
	Fig.~\ref{fig:tree_example}. The corresponding realization of the
	stopped exploration process $(X_1, \dots, X_{J_\ell})$ is given by
	$(3, 3, 0, 2)$, where we assumed a left-to-right ordering of
	vertices. We have $J_2 = T_1 = 4$. Observe that all vertices in the
	neighborhood are exposed. However, there may still be connections
	between any of the (active) vertices at depth $2$, in which case the
	neighborhood contains cycles. \demo
\end{example}

\subsection{Branching Processes}
\label{sec:branching_process}

A (Galton--Watson) branching process with offspring distribution
$\GwXi$ is a discrete-time Markov chain $(\GwZ_\ell)_{\ell \geq 0}$
defined by \cite[Ch.~8]{Karlin1975}
	\begin{align}
		\GwZ_0 = 1 \quad \text{and} \quad \GwZ_{\ell+1} =
		\sum_{i=1}^{\GwZ_\ell}
		\GwXi_{\ell,i}, 
	\end{align}
where $(\GwXi_{\ell,i})_{\ell,i\geq 0}$ is a two-dimensional sequence
of \IID~$\No$-valued RVs with distribution $\GwXi_{\ell,i} \sim
\GwXi$. In our context, the interpretation of the process is as
follows. Start with one vertex at depth $\ell = 0$ which has a random
number of neighboring (or offspring) vertices extending to depth $1$.
Each of the vertices at depth $1$ (if there are any) has again a
random number of offspring vertices, independently of all other
vertices, and so on. $\GwZ_\ell$ is the total number of vertices at
depth $\ell$, whereas $\GwXi_{\ell,i}$ is the number of offspring
vertices of the $i$th vertex at depth $\ell$. We further define the
total number of vertices up to (and including) depth $\ell$ as
$\GwT_{\ell} = \sum_{l=0}^{\ell} \GwZ_l$. 

The exploration process in the previous subsection is closely related
to a Poisson branching process with mean $c$, i.e., the case where
$\GwXi = \Pois{c}$. The connection becomes apparent by considering the
random-walk perspective of the branching process
\cite[Sec.~3.3]{Hofstad2014}.  Here, the number of offspring vertices
is specified in a one-dimensional fashion, indexed by $\ttime$, and denoted
by $\GwX_\ttime$. The indexing is done breadth-first, in a predetermined
order, e.g., left to right. In particular, we have
\begin{align}
	\label{eq:gw_random_walk}
	\GwA_\ttime = \GwA_{\ttime-1} + \GwX_\ttime - 1
\end{align}
with $\GwA_0 = 1$, similar to \eqref{eq:exploration_implicit}. The
crucial difference with respect to the exploration process is that
$\GwX_\ttime \sim \GwXi$ for all $\ttime$. 

Similar to the exploration process, we recursively define the stopping
time for the process $(\GwX_t)_{t \geq 1}$ as $\GwJ_\ell =
\sum_{i=1}^{\GwJ_{\ell-1}} \GwX_i + 1$ with $\GwJ_0 = 0$
(cf.~\eqref{eq:stopping_time_definition}), where $\GwJ_\ell =
\GwT_{\ell-1}$. Thus, the stopped process $(\GwX_1, \cdots,
\GwX_{\GwJ_{\ell}})$ specifies the branching process up to depth
$\ell$. 

\section{Proof of Theorem 1}
\label{sec:hpc_proof}

In the following, we provide a proof of Theorem \ref{th:hpc_result}.
In Section~\ref{sec:hpc_tree_neighborhood}, we show that, with high
probability, the depth-$\ell$ neighborhood of a vertex in the residual
graph $G$ is a tree. We use this result in
Section~\ref{sec:hpc_convergence} to show the convergence of the
expected decoding outcome for an individual CN after $\ell$ iterations
to the decoding outcome when evaluated on the branching process. The
iterative decoding on the branching process (also known as \gls{de})
is analyzed in Section~\ref{sec:hpc_de}. Finally, the concentration
bound in \eqref{eq:hpc_concentration} is shown in
Section~\ref{sec:hpc_concentration}. 

The tree-like behavior and the convergence of the neighborhood in
$\Gncn$ to the Poisson branching process are certainly well-known
within the random-graph-theory literature. For example, this type of
convergence is sometimes referred to as local weak convergence, see,
e.g., \cite{Dembo2010} or \cite[Prop.~2.3.1]{Montanari2014a}. Here, we
give a simple proof based on stochastic processes and stopping times. 

\subsection{Tree-like Neighborhood}
\label{sec:hpc_tree_neighborhood}

\begin{lemma}
	\label{lem:tree_like_neighborhood}
	Let $B_G(k,\ell)$ denote the depth-$\ell$ neighborhood of the $k$-th
	vertex in $G$. Then, for any $k \in [n]$, we have
	\begin{align}
		\label{eq:tree_probability_bound}
		\Pr{B_G(k, \ell) \text{\textnormal{ is a tree}}} \geq 1 -
		\frac{\beta(c,\ell)}{ n}, 
	\end{align}
	where $\beta(c, \ell)$ depends only on $c$
	and $\ell$. 
\end{lemma}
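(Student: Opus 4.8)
The plan is to bound the probability that $B_G(k,\ell)$ contains a cycle by a first-moment (union-bound) argument, using the exploration process reviewed in Section~\ref{sec:exploration_process} to control the number of vertices exposed up to depth $\ell$. The key observation is that $B_G(k,\ell)$ fails to be a tree precisely when, during the exploration, step~2 discovers an edge between the currently explored vertex $w$ and an already-active vertex, or when some depth-$\ell$ vertices are joined by an edge. So, conditioning on the sequence $(X_1,\dots,X_{J_\ell})$ and hence on $T_\ell$, the conditional probability of a cycle is at most $\binom{T_\ell}{2}p \le T_\ell^2 p/2 = T_\ell^2 c/(2n)$, since each of the at most $\binom{T_\ell}{2}$ ``extra'' edges among the exposed vertices is present independently with probability $p=c/n$. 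Taking expectations, $\Pr{B_G(k,\ell)\text{ is not a tree}} \le \frac{c}{2n}\,\E{T_\ell^2}$.

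It then remains to show $\E{T_\ell^2} \le \tilde\beta(c,\ell)$ for some constant depending only on $c$ and $\ell$. First I would note that $T_\ell \le \GwT_\ell$ in the appropriate stochastic sense: since $X_\ttime \sim \Binom{N_{\ttime-1}}{p}$ with $N_{\ttime-1}\le n-1$, each $X_\ttime$ is stochastically dominated by $\Binom{n-1}{c/n}$, which in turn is dominated by $\Pois{c}$ is not quite right, so more carefully I would simply bound $\E{T_\ell^2}$ directly by induction on $\ell$: using $T_\ell = T_{\ell-1} + Z_\ell$ and the fact that, given the depth-$(\ell-1)$ structure with $Z_{\ell-1}$ vertices at the last level, $Z_\ell$ is a sum of $Z_{\ell-1}$ conditionally-Binomial$(N,p)$ terms with $N\le n$, so $\E{Z_\ell \mid \mathcal F_{\ell-1}} \le c\, Z_{\ell-1}$ and the conditional second moment is likewise controlled by $c$ and $c^2$. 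Iterating these moment bounds gives $\E{Z_\ell}\le c^\ell$ and, via the standard branching-process recursion for second moments, $\E{T_\ell^2}$ bounded by an explicit polynomial in the quantities $c, c^2,\dots,c^{2\ell}$ — a finite constant $\tilde\beta(c,\ell)$. Setting $\beta(c,\ell) = \frac{c}{2}\tilde\beta(c,\ell)$ completes the proof.

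The main obstacle, such as it is, is purely bookkeeping: making the stochastic-domination and conditional-moment estimates precise, since the exploration process has Binomial (not Poisson) and dependent increments because $N_{\ttime-1}$ depends on the past. I expect to handle this cleanly by working with conditional expectations $\E{\,\cdot \mid \mathcal F_{\ttime-1}}$ and using $N_{\ttime-1}\le n$ uniformly, so that every increment is dominated by $\Binom{n}{c/n}$, whose first two moments are $\le c$ and $\le c + c^2$ respectively — constants independent of $n$. All the $n$-dependence then cancels, leaving the claimed $1 - \beta(c,\ell)/n$ bound. An alternative, if one prefers to avoid the moment recursion, is to observe that a cycle in $B_G(k,\ell)$ forces the existence of a closed walk of length at most $2\ell+1$ through vertex $k$, and to union-bound over all such potential walks: there are at most $\sum_{j=3}^{2\ell+1} n^{j-1}$ of them (fixing $k$ as one vertex), each realized with probability $p^j = (c/n)^j$, giving a total of at most $\sum_{j} c^j/n \le \beta(c,\ell)/n$; this is arguably the shortest route and I would likely present it this way.
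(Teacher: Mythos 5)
Your main argument is essentially the paper's own proof: use the neighborhood exploration process, observe that (conditionally on the exploration history) the potential cycle-creating edges are at most $\binom{T_\ell}{2}$ in number and each present independently with probability $p=c/n$, so $\Pr{B_G(k,\ell)\ \text{not a tree}}\le \frac{c}{2n}\E{T_\ell^2}$, and then show $\E{T_\ell^2}$ is bounded by a constant depending only on $c$ and $\ell$. The paper computes the exact count $N_\ell$ of potential cycle edges before bounding it by $\binom{T_\ell}{2}$, and bounds $\E{T_\ell^2}$ by stochastic domination via van der Hofstad; you propose a direct conditional second-moment induction with each increment dominated by $\Binom{n,c/n}$. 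Both give the same recursion and the same conclusion, and your caution that $\Binom{n-1,c/n}\preceq\Pois{c}$ is not literally true is well placed. So the core of your proof is correct and in agreement with the paper.

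However, the ``alternative'' you say you would actually present has a real gap. You claim a cycle in $B_G(k,\ell)$ forces a closed walk of length at most $2\ell+1$ through $k$, count at most $n^{j-1}$ such walks and charge each walk probability $p^j$. If ``closed walk'' means an arbitrary closed walk, then $p^j$ is not a valid upper bound on the probability that the walk is realized: a walk with repeated edges requires only its \emph{distinct} edges to be present, so the true probability is $p^{\#\text{distinct edges}}$, which is strictly larger than $p^j$, and your sum is therefore not a union bound. If instead ``closed walk'' means a cycle through $k$, then the implication fails: $B_G(k,\ell)$ may contain a cycle avoiding $k$ entirely (for instance $k$ of degree one attached to a triangle among depth-one and depth-two vertices). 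The natural witness produced by the BFS is the fundamental cycle of a non-tree edge \emph{together with} the tree path from $k$ down to it, which generically traverses some edges twice. The fix is to union-bound over the corresponding \emph{subgraphs} with distinct edges: a cycle $C$ of length $j\in\{3,\dots,2\ell+1\}$ plus an edge-disjoint path of length $d\le\ell$ from $k$ to $C$. Such a witness has $j+d$ distinct edges and at most $j+d-1$ vertices other than $k$, giving a contribution $n^{j+d-1}(c/n)^{j+d}=c^{j+d}/n$; summing over $j$ and $d$ then yields a bound of the required form $\beta(c,\ell)/n$. With that correction the alternative is valid, but as written it does not go through.
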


\begin{proof}
We can use the exploration process in
Section~\ref{sec:exploration_process} to show that the total number of
potential edges that could create a cycle during the exploration
(i.e., in step 2) is given by
\begin{align}
	\label{eq:tree_proof_nl} N_\ell = \sum_{i=1}^{J_\ell} (A_{i-1}
	- 1) +\binom{Z_\ell}{2}. 
\end{align}
\RevA{In particular, at each time $\ttime$, one vertex out of the
$A_{t-1}$ active vertices is being explored. The other $A_{t-1}-1$
active vertices are known to be part of the neighborhood. Hence, any
of the $A_{t-1}-1$ potential edges to these vertices would create a
cycle. The sum in \eqref{eq:tree_proof_nl} counts all of these
potential edges up to the random stopping time $J_\ell$. 
Furthermore, at the stopping time $J_\ell$, there exist $Z_\ell$
active vertices at depth $\ell$, with $\binom{Z_\ell}{2}$ potential
edges between them, each of which creates a cycle (see, e.g.,
Fig.~\ref{fig:tree_example}).} For the neighborhood to be a tree, all
of these edges must be absent. Since any edge in the exploration will
not appear with probability $1 - c/n$, independently of all other
edges, we have
\begin{align}
	\Pr{B_G(k,\ell) \text{ is a tree}} 	&=
	\E{\left(1-\frac{c}{n}\right)^{N_\ell}} \\
	&\geq 1-
	\frac{c}{n}\E{ N_\ell }
	\label{eq:tree_proof_bound}.
\end{align}
Surely, $N_\ell$ cannot be larger than the total number of possible
edges in the neighborhood, i.e., $N_\ell \leq \binom{T_\ell}{2} \leq
T_\ell^2/2$, where we recall that $T_\ell$ is the total number of
vertices encountered. Inserting this bound into
\eqref{eq:tree_proof_bound} and using the bound
\eqref{eq:second_moment_T_ell} on $\mathbb{E}[T_{\ell}^2]$ in
Appendix~\ref{app:moments_stopping_times} (which depends only on $c$
and $\ell$) completes the proof. 
\end{proof}

\begin{remark}
	The analogous result for (regular) \gls{ldpc} code ensembles is
	given in \cite[App.~A]{Richardson2001} (see also
	\cite[Sec.~2.2]{Luby1998}). The main difference with respect to the
	proof in \cite[App.~A]{Richardson2001} (and its extension to
	irregular ensembles with bounded maximum VN and CN degree) is that
	the number of vertices in the neighborhood cannot be upper bounded
	by a constant which is independent of $n$. (In
	\cite{Richardson2001}, $n$ corresponds to the \gls{ldpc} code
	length.) 
\end{remark}

\subsection{Convergence to the Poisson Branching Process}
\label{sec:hpc_convergence}

It is well-known that the degree of a vertex in $\Gncn$ converges to a
Poisson RV with mean $c$ as $n \to \infty$ (see Example
\ref{ex:vertex_degrees}). More generally, for any finite $\ttime$, and any
$(x_1, \cdots, x_\ttime) \in \mathbb{N}_0^\ttime$, one can easily show that
(see, e.g., \cite[Sec.~4.1.2]{Hofstad2014})
\begin{align}
	\label{eq:weak_convergence_history}
	\lim_{n \to \infty} f_{X_1, \dots, X_\ttime}(x_1, \dots, x_\ttime)
	&= f_{\GwX_1}(x_1) \cdot \dots \cdot f_{\GwX_\ttime}(x_\ttime), 
\end{align}
where $\GwX_1, \dots, \GwX_\ttime$ are \IID~$\Pois{c}$. This, together with
Lemma \ref{lem:tree_like_neighborhood}, implies that the distribution
on the shape of the neighborhood (for any fixed depth) converges to a
Poisson branching process with mean $c$. To see this, note that under
the assumption that the neighborhood is tree-like, its shape is
specified by the stopped exploration process $(X_1, \dots,
X_{J_\ell})$. Each realization of $(X_1, \cdots, X_{J_\ell})$ is a
vector of some (finite) length specifying the number of offspring
vertices in the tree in a sequential manner. The set of all
realizations is thus a subset of $\mathbb{N}_0^* = \mathbb{N}_0 \cup
\mathbb{N}_0^2 \cup \mathbb{N}_0^3 \cup \cdots$.  Since
$\mathbb{N}_0^*$ is countably infinite, there exists a one-to-one
mapping between $\mathbb{N}_0^*$ and $\mathbb{N}_0$. We denote such a
mapping by $\mathcal{M} : \mathbb{N}_0^* \to \mathbb{N}_0$ and let
$\mathcal{M}^{-1}$ be its inverse. We now define new RVs $B_n =
\mathcal{M}(X_1, \cdots, X_{J_\ell})$ and $B = \mathcal{M}(\GwX_1,
\cdots, \GwX_{\GwJ_\ell})$. One can think about enumerating all
possible trees and assigning an index to each of them.  A distribution
over the shape of the trees is then equivalent to a distribution over
the indices. It is now easy to show that $B_n \indistrto B$.  For any
$b \in \mathbb{N}_0$, there exists some $\ttime$ such that
$\mathcal{M}^{-1}(b) = (x_1, \cdots, x_\ttime) \in \mathbb{N}_0^\ttime$.
Therefore, we have \newcommand{\negspace}{\!\!\!\!}
\begin{align}
	&\negspace\lim_{n \to \infty} \Pr{B_n = b} \\ 
	&\negspace = \lim_{n \to \infty} f_{J_\ell | X_1, \ldots, X_\ttime}
	(t | x_1, \cdots, x_\ttime) 
f_{X_1, \ldots, X_\ttime}(x_1, \cdots, x_\ttime)
	\label{eq:lwc_conditional2}
\\
&\negspace = f_{\GwJ_\ell | \GwX_1, \ldots, \GwX_\ttime} (t | x_1, \cdots,
x_\ttime) 
	\lim_{n \to \infty} f_{X_1, \ldots, X_\ttime}(x_1, \cdots, x_\ttime) 
	\label{eq:lwc_conditional}
	\\
	&\negspace \stackrel{\eqref{eq:weak_convergence_history}}{=}
	f_{\GwJ_\ell | \GwX_1, \ldots, \GwX_\ttime} (t | x_1, \cdots,
	x_\ttime) 
	 f_{\GwX_1}(x_1) \cdot \dots \cdot f_{\GwX_\ttime}(x_\ttime)
	 \label{eq:lwc_asymptotic} 
	 \\
	&\negspace = \Pr{B = b},
\end{align}
where, to obtain \eqref{eq:lwc_conditional} from
\eqref{eq:lwc_conditional2}, we used the fact that the conditional
distributions of the stopping times $J_\ell$ and $\GwJ_\ell$ given
$X_1, \dots, X_\ttime$ and $\GwX_1, \dots, \GwX_\ttime$, respectively,
\RevA{are both independent of $n$ and equal to 1. This is because the
realizations $x_1, \dots, x_t$ fully determine the stopping times as
$J_{\ell} = \GwJ_{\ell} = t$.}


\RevA{%
\begin{remark}
In general, the distribution $f_{J_\ell|X_1,...,X_t}$ is not
independent of $n$. From \eqref{eq:stopping_time_definition}, recall
that $J_\ell = \sum_{i=1}^{J_{\ell-1}}X_i +1$.  One may distinguish
two cases: In the first case, the realizations of $X_1, \dots, X_t$
determine $J_\ell$ and $f_{J_\ell|X_1,...,X_t}$ is an indicator
function that does not depend on $n$. In the second case, the realizations
of $X_1, \dots, X_t$ do not determine $J_\ell$ and
$f_{J_\ell|X_1,...,X_t}$ depends on $n$.  For example, let $(X_1, X_2,
X_3) = (2,2,3)$. This determines $J_{3} = 8$, i.e., $f_{J_3|X_1, X_2,
X_3}(j|2,2,3) = \mathbb{I}\{j=8\}$, where $\mathbb{I}\{\cdot \}$ is
the indicator function. However, $f_{J_3|X_1,X_2,X_3}(j|3,2,3)$
depends on $n$, since $J_3 = \sum_{i=1}^{X_1+1} X_i + 1 = X_4 + 9$.
To pass from \eqref{eq:lwc_conditional2} to
\eqref{eq:lwc_conditional}, we only encounter the first case. This is
because the random variable $B_n$ is defined as a function of the
stopped exploration process and the corresponding realizations always
determine $J_{\ell}$.  
\end{remark}
}%

A direct consequence of this result is that the expected value of a
(bounded) function applied to the neighborhood of a vertex in $\Gncn$
converges to the expected value of the same function applied to the
branching process.  In particular, recall that the RV $W\nl =
\frac{1}{n} \sum_{k=1}^n W_k\nl$ corresponds to the fraction of
component codes that declare failures after $\ell$ decoding
iterations. The indicator RV $W_k$ depends only on the shape of the
depth-$\ell$ neighborhood of the $k$-th vertex in the residual graph.
The peeling procedure can thus be written using a function
$\mathcal{D}_\ell : \mathbb{N}_0 \to \{0, 1\}$, such that 
\begin{align}
	\label{eq:exp_z_given_tree}
	\mathbb{E}[W_k\nl \,|\, \textnormal{\text{$B_G(k,\ell)$ is a tree}}] =
	\mathbb{E}[\Dec_\ell(B_n)], 
\end{align}
which, due to symmetry, is independent of $k$.  Since $B_n \indistrto
B$ and $\mathcal{D}_\ell$ is bounded, we have that
\cite[Sec.~10]{Rosenthal2006}
\begin{align}
	\lim_{n \to \infty} \mathbb{E}[\Dec_\ell(B_n)] =
	\mathbb{E}[\Dec_\ell(B)] = z^{(\ell)}, 
\end{align}
which, together with \eqref{eq:tree_probability_bound}, implies
\eqref{eq:hpc_cycle_free_bahavior}.

\begin{remark}
	It is worth mentioning that for regular \gls{ldpc} code ensembles,
	there is no notion of an asymptotic neighborhood distribution (in
	the sense of \eqref{eq:weak_convergence_history}) beyond the fact
	that cycles can be ignored. This is because the ensemble of
	computation trees for a CN (or VN) reduces to a single deterministic
	tree. 
\end{remark}

\subsection{Density Evolution}
\label{sec:hpc_de}

Once the true distribution on the neighborhood-shape has been replaced
by the branching process, the parameter of interest can be easily
computed (cf.~\cite[Sec.~2]{Pittel1996}, \cite[p.~43]{Bollobas2009a}).
In our case, the parameter of interest is the probability that a CN
declares a decoding failure after $\ell$ iterations as $n \to \infty$,
or, equivalently, the probability that the root vertex of the
branching process survives $\ell$ peeling iterations. Due to the
recursion that is inherent in the definition of the branching process,
it is not surprising that the solution is also given in terms of a
recursion. This is, of course, completely analogous to the analysis of
\gls{ldpc} code ensembles, see, e.g., the discussion in
\cite[Sec.~1]{Luby1998}. Also, similar to \gls{ldpc} codes over the
\gls{bec}, we refer to this step as \gls{de} (even though the
parameter of interest does not correspond to a density). 

Consider a Poisson branching process with mean $c$. Assume that we
have a realization of this process (i.e., a tree) up to depth $\ell$.
We wish to determine if the root vertex survives $\ell$ iterations of
the peeling procedure (and thus the CN corresponding to the root node
declares a decoding failure). One can recursively break down the answer as
follows. First, for each of the root's offspring vertices, apply
$\ell-1$ peeling iterations to the subtree that has the offspring
vertex as a root (and extends from depth $1$ to $\ell$). Then, if the
number of offspring vertices that survive this peeling is less than or
equal to $\tee$, remove the root vertex. This gives the same answer as
applying $\ell$ peeling iterations to the entire tree, since we are
simply postponing the removal decision for the root to the $\ell$-th
iteration. 

Now, in order to determine the corresponding \emph{probability} with
which the root vertex survives the peeling procedure, the crucial
observation is that the root's offspring vertices are removed
independently of each other, and with the same probability. This is a
simple consequence of the definition of the branching process and the
independence assumption between the number of offspring vertices (see
Section~\ref{sec:branching_process}). Recall that we defined the root
survival probability as $z^{(\ell)}$. Furthermore, we denote the
survival probability of the root's offspring vertices by
$x^{(\ell-1)}$. Initially, the number of offspring vertices is Poisson
distributed with mean $c$. After removing each offspring vertex
independently with probability $1-x^{(\ell-1)}$, the offspring
distribution of the root vertex follows again a Poisson distribution,
albeit with (reduced) mean $c x^{(\ell-1)}$. (This is easily seen by
using characteristic functions.) Hence, we obtain
\eqref{eq:hpc_de_z}.

Essentially the same argument can be used to determine $x^{(\ell)}$.
The only difference is that for offspring vertices we have to account
for the fact they are connected to the previous level with an edge.
Thus, they can be removed only if less than or equal to $\tee - 1$
(and not $\tee$) of their offspring vertices survive. This leads to
the recursion
\eqref{eq:hpc_de_x},
where the initial condition is given by $x^{(0)} = 1$. 

\subsection{Concentration}
\label{sec:hpc_concentration}

The concentration bound in \eqref{eq:hpc_concentration} is readily
proved by using the method of typical bounded differences
\cite{Warnke2012}. In particular, we can apply a special case of
\cite[Cor.~1.4]{Warnke2012} which is stated below (with adjusted
notation) for easier referencing. 

\begin{theorem}[\cite{Warnke2012}]
	\label{th:typical_bounded_differences_inequality}
	Let $\boldsymbol{\bRV} = (\bRV_1, \dots, \bRV_\nhpc)^\transpose$ be
	a vector of independent RVs with $\bRV_k \sim \Bern{p}$ for all $k$.
	Let $\Gamma \subseteq \{0,1\}^m$ be an event and let $f :
	\{0,1\}^\nhpc \to \mathbb{R}$ be a function that satisfies the
	following condition. There exist $\Lipschitz$ and $\Lipschitz'$ with
	$\Lipschitz \leq \Lipschitz'$ such that whenever $\boldsymbol{\bRV},
	\boldsymbol{\bRV}' \in \{0,1\}^m$ differ in only one coordinate, we
	have 
	\begin{align}
		\label{eq:typical_lipschitz_condition}
		|f(\boldsymbol{\bRV}) - f(\boldsymbol{\bRV}')| \leq  
		\begin{cases}
			\Lipschitz \quad \text{if } \boldsymbol{\bRV} \in \Gamma \\
			\Lipschitz' \quad \text{otherwise } 
		\end{cases}.
	\end{align}
	Then, for any $a \geq 0$ and any choice of $\gamma \in (0,1]$, we
	have
	\begin{equation}
		\label{eq:typical_bounded_differences_inequality}
	\begin{aligned}
		&\Pr{|f(\boldsymbol{\bRV}) - \mathbb{E}[f(\boldsymbol{\bRV})]| \geq
		a}	
		\leq m\gamma^{-1} \Pr{\boldsymbol{\bRV} \notin \Gamma}\\ 
		&+  
		\exp \left(
		- \frac{a^2}{ 2 m (1-p) p (\Lipschitz + b)^2 + 2 (\Lipschitz + b)
		a/ 3 }
		\right),
	\end{aligned}
\end{equation}
where $b = \gamma (\Lipschitz' - \Lipschitz)$.
\end{theorem}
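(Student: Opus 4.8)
The plan is to prove this as an instance of the martingale method for concentration, following the ``method of typical bounded differences'' of \cite{Warnke2012}. Since $\vect{\theta} = (\bRV_1, \dots, \bRV_m)$ has independent coordinates, I would introduce the Doob martingale $Z_i \define \E{f(\vect{\theta}) \mid \bRV_1, \dots, \bRV_i}$, so that $Z_0 = \E{f(\vect{\theta})}$, $Z_m = f(\vect{\theta})$, and the deviation $f(\vect{\theta}) - \E{f(\vect{\theta})} = \sum_{i=1}^m (Z_i - Z_{i-1})$ is a sum of martingale differences with conditional mean zero. The standard route from here is to bound, for each $i$, the conditional moment generating function $\E{ e^{\lambda (Z_i - Z_{i-1})} \mid \bRV_1, \dots, \bRV_{i-1}}$ by a Bennett/Bernstein-type estimate that uses both a bound on $|Z_i - Z_{i-1}|$ and a bound on its conditional variance, then multiply these over $i$, apply Markov's inequality to $e^{\lambda(f(\vect{\theta}) - \E{f(\vect{\theta})})}$, and optimize over $\lambda > 0$; applying the same bound to $-f$ then yields the two-sided estimate.

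The difficulty --- and the only nontrivial point --- is that the ``good'' event $\Gamma$ on which $f$ is $\Lipschitz$-Lipschitz is a \emph{global} event: the indicator $\Indicator{\vect{\theta} \in \Gamma}$ is not measurable with respect to $(\bRV_1, \dots, \bRV_i)$, and re-sampling a single coordinate can move a configuration from inside $\Gamma$ to outside it, so one cannot naively assert ``$|Z_i - Z_{i-1}| \le \Lipschitz$ on $\Gamma$''. The device I would use is the one from \cite{Warnke2012}: replace $\Gamma$ by a family of events $\Gamma_i$ that \emph{are} adapted to the filtration (informally, the first-$i$-coordinate configurations still ``consistent'' with landing in $\Gamma$), so that on $\Gamma_i$ the effective per-coordinate fluctuation of the martingale is governed by $\Lipschitz$ rather than $\Lipschitz'$. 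The price of this replacement is paid in two places: a correction term $b = \gamma(\Lipschitz' - \Lipschitz)$ added to the effective Lipschitz constant, absorbing the ``borderline'' configurations where the switch between $\Lipschitz$ and $\Lipschitz'$ can occur, and an additive term $m \gamma^{-1} \Pr{\vect{\theta} \notin \Gamma}$, obtained by a union bound over the $m$ coordinates together with the freedom to rescale by $\gamma \in (0,1]$.

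With the filtration-adapted events in hand, each conditional MGF is controlled by a one-line Bernstein estimate: $|Z_i - Z_{i-1}| \le \Lipschitz + b$ on the relevant good event, while the conditional variances sum to at most $m (1-p) p (\Lipschitz + b)^2$ because each $\bRV_k \sim \Bern{p}$ contributes variance $p(1-p)$ scaled by the squared Lipschitz constant. Plugging these into the standard product-of-MGFs/Chernoff computation and optimizing $\lambda$ produces precisely the exponential factor $\exp\!\big({-}a^2 / (2 m (1-p) p (\Lipschitz + b)^2 + 2(\Lipschitz + b) a / 3)\big)$, and adding back the $\Gamma$-failure term gives \eqref{eq:typical_bounded_differences_inequality}.

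I expect the main obstacle to be exactly the passage from the global event $\Gamma$ to the adapted events $\Gamma_i$ while keeping all the incurred slack quantitatively controlled by the single parameter $\gamma$; this bookkeeping is the substance of \cite{Warnke2012}, which is why the cleanest course is to cite it rather than reproduce the argument in full. Everything downstream --- the martingale setup, the Bernstein-type MGF bound, the variance accounting, the Chernoff optimization --- is routine, and the later application to $f(\vect{\theta}) = n W\nl = \sum_{k} W_k\nl$ only requires checking the typical Lipschitz condition with $\Gamma$ the event that all vertex degrees are $\bigo(\log n)$ (Lemma~\ref{lem:maximum_vertex_degree}), which makes $\Lipschitz$ polylogarithmic in $n$ while $\Lipschitz'$ stays polynomial in $n$, and then choosing $\gamma$ and $a = \eps n$ appropriately to extract the bound $e^{-\beta n^\delta}$.
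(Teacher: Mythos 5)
The paper gives no proof of this theorem: it is stated as a (notation-adjusted) special case of Warnke's Corollary~1.4 and cited directly, which is exactly the course you recommend. Your informal sketch of the underlying Doob-martingale / adapted-events / Bernstein argument is a reasonable outline of how that cited result is established in \cite{Warnke2012}.
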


In our context, $\boldsymbol{\bRV}$ specifies the edges in the random
graph $G$ (see Section~\ref{sec:random_graphs}). Thus, we can think
about $\vect{\theta}$ and $\vect{\theta}'$ as specifying two different
graphs $G  = G(\vect{\theta})$ and $G' = G(\vect{\theta'})$.  The
interpretation of the condition
$\eqref{eq:typical_lipschitz_condition}$ is as follows.  For any two
graphs $G, G'$ that differ in only one edge, we have $|f(G) - f(G')|
\leq \Lipschitz'$, where $f$ denotes a function applied to the graphs.
The constant $\Lipschitz'$ is often referred to as the Lipschitz
constant \cite{Warnke2012}. The event $\Gamma$ is chosen such that
changing one coordinate in $\boldsymbol{\bRV} \in \Gamma$ (i.e.,
adding or removing an edge in the graph defined by
$\boldsymbol{\bRV}$) changes the function by at most $\Lipschitz$,
where $\Lipschitz$ should be substantially smaller than $\Lipschitz'$.
The constant $\Lipschitz$ is referred to as the typical Lipschitz
constant. In this regard, the event $\Gamma$ is assumed to be a
typical event, i.e., it should occur with high probability. 

\begin{remark}
In several applications, it is possible to establish concentration
bounds based solely on suitable choices for $\Lipschitz'$. This
approach leads to the more common bounded differences inequality (also
known as McDiarmid's or Hoeffding-Azuma inequality). For example, the
concentration bound for LDPC code ensembles in
\cite[Eq.~(11)]{Richardson2001} is based on this approach. However, in
many cases (including the one considered here) the worst case changes
corresponding to $\Lipschitz'$ can be quite large, even though the
typical changes may be small. For more details, we refer the reader to
\cite{Warnke2012} and references therein.
\end{remark}

Theorem \ref{th:typical_bounded_differences_inequality} is applied as
follows. We let $f(\boldsymbol{\bRV}) = n W  = \sum_{k=1}^n W_k$.
Since $f$ is the sum of $n$ indicator RVs, we can choose $\Lipschitz'
= n$. We further let $\Gamma$ be the event that the maximum vertex
degree in $G$, denoted by $D_\text{max}$, is strictly less than
$n^\delta$ for some fixed $\delta \in (0, 1)$. For the typical
Lipschitz constant, we choose $\Lipschitz = 2 (\ell+1) n^{ \delta
\ell}$. To show that for these choices the condition
\eqref{eq:typical_lipschitz_condition} holds, we argue as follows.
First, observe that the maximum vertex degree in both $G$ and $G'$ is
at most $n^{\delta}$ since adding an edge to the graph $G$ increases
the maximum degree by at most one (and removing an edge can only
decrease the maximum degree). Consider now the maximum change in
$\sum_{k=1}^n W_k$ that can occur by adding or removing an edge
between two arbitrary vertices $i$ and $j$ under the assumption that
the maximum degree remains bounded by $n^{\delta}$. Since $W_k$
depends only on the depth-$\ell$ neighborhood of the $k$-th vertex,
such a change can only affect $W_k$ if either vertex $i$ or $j$ (or
both) are part of the neighborhood of vertex $k$. But, due to the
bounded maximum degree, vertex $i$ appears in at most
$\sum_{l=0}^{\ell} n^{\delta l} \leq (\ell+1) n^{\delta\ell}$
neighborhoods (and so does vertex $j$). Hence, the sum $\sum_{k=1}^n
W_k$ can change by at most $2 (\ell+1) n^{\delta \ell}$. 

We further choose $\gamma = n^{-1}$. Since $\Lambda' = n$, this
implies that $b \leq \gamma \Lambda' = 1$ and therefore we have
\begin{align}
	\label{eq:lambda_bound}
	(\Lambda + b) \leq (\Lambda + b)^2 \leq 4 \Lambda^2. 
\end{align}
Consider now the second term on the \gls{rhs} of
\eqref{eq:typical_bounded_differences_inequality} with $a = n \eps$
and $p = c/n$. We have 
\begin{align}
	&\exp \left(\frac{- (n \eps)^2}{ 2 m (1-c/n) c/n (\Lambda  + b)^2
	+ 2 (\Lambda + b) n \eps/ 3 } \right) \\
	\label{eq:bnd_tmp}
	&\leq \exp\left( \frac{- \eps^2 n}{ (8 c + 8 \eps/3) \Lambda^2 }
	\right) = e^{- \beta_1 n^{1- 2 \delta \ell}}
\end{align}
where the inequality in \eqref{eq:bnd_tmp} follows from $m \leq n^2$,
$1-c/n \leq 1$ and \eqref{eq:lambda_bound}, and in the last step we
used $\Lambda = 2 (\ell+1) n^{\delta \ell}$. Note that the implicitly
defined parameter $\beta_1 > 0$ depends only on $\eps$, $c$, and
$\ell$. In order to bound the first term on the \gls{rhs} of
\eqref{eq:typical_bounded_differences_inequality}, we first note that
$\Pr{\vect{\theta} \notin \Gamma} = \Pr{D_\text{max} \geq
n^\delta}$. We then have
\begin{align}
	\label{eq:bnd_tmp2}
	m\gamma^{-1} \Pr{\vect{\theta} \notin \Gamma} \leq n^3
	e^{-\beta_2 n^\delta} \leq e^{-\beta_2 n^\delta / 2},
\end{align}
where, according to Lemma \ref{lem:maximum_vertex_degree}, the first
inequality holds for some $\beta_2 > 0$ and $n$ sufficiently large.
To match the exponents in \eqref{eq:bnd_tmp} and \eqref{eq:bnd_tmp2},
we can set $\delta = (1+2\ell)^{-1}$. This proves
\eqref{eq:hpc_concentration} and completes the proof of Theorem
\ref{th:hpc_result}. 

\begin{remark}
	The above proof applies to any function of the form $f =
	\sum_{k=1}^{n} f_k$ where $f_k$ is an indicator function that
	depends only on the depth-$\ell$ neighborhood of the $k$-th vertex
	in $G$. 
\end{remark}

\section{Generalized Product Codes}
\label{sec:gpc}

In this section, we analyze a deterministic construction of
\glspl{gpc} for which the residual graph corresponds to an
inhomogeneous random graph \cite{Bollobas2007}.  The concept of
inhomogeneity naturally arises if we wish to distinguish between
different types of vertices. In our case, a type will correspond to a
particular position in the Tanner graph and a certain
erasure-correcting capability. \glspl{hpc} can be regarded as
``single-type'' or homogeneous, in the sense that all CNs (and thus
all vertices in the residual graph) behave essentially the same.  

\subsection{Code Construction}
\label{sec:gpc_construction}

Our code construction is defined in terms of three parameters $\etab$,
$\vect{\gamma}$, and $\vect{\tau}$. We denote the corresponding
\gls{gpc} by $\Gpc$, where $n$ denotes the total number of CNs in the
underlying Tanner graph.  The two parameters $\etab$ and
$\vect{\gamma}$ essentially determine the graph connectivity, where
$\etab$ is a binary, symmetric $L \times L$ matrix and $\vect{\gamma}
= (\gamma_1, \dots, \gamma_L)^\transpose$ is a probability vector of
length $L$, i.e., $\sum_{i=1}^L \gamma_i = 1$ and $\gamma_i \geq 0$.
Since GPCs have a natural representation in terms of two-dimensional
code arrays (see, e.g., Fig.~\ref{fig:code_arrays}), one may
alternatively think about $\etab$ and $\vect{\gamma}$ as specifying
the array shape. We will see in the following that different choices
for $\etab$ and $\vect{\gamma}$ recover well-known code classes.  The
parameter $\vect{\tau}$ is used to specify GPCs employing component
codes with different erasure-correcting capabilities and will be
described in more detail at the end of this subsection. 

The Tanner graph describing the \gls{gpc} $\Gpc$ is constructed as
follows. Assume that there are $L$ positions. Place $n_i \define
\gamma_i n$ \CNs at each position $i \in [L]$, where  we assume that
$n_i$ is an integer for all $i$. Then, connect each CN at position $i$
to each CN at position $j$ through a VN if and only if $\eta_{i,j} =
1$. 

In the following, we always assume that $\eta_{i,j} = 1$ for at least
one $j$ and any $i \in [L]$ so that there are no unconnected CNs.
Furthermore, we assume that the matrix $\etab$ is irreducible, so that
the Tanner graph is not composed of two (or more) disconnected graphs.

Each of the $n_i$ CNs at position $i$ has degree
\begin{align}
	\label{eq:inhomogeneous_cn_degree}
	d_{i} = \eta_{i,i}
		(n_i - 1) + \sum_{j\neq i } \eta_{i,j} n_j,
\end{align}
\RevB{where the first term in \eqref{eq:inhomogeneous_cn_degree}
arises from the convention that we cannot connect a CN to itself if
$\eta_{i,i} = 1$. Recall that the degree of a CN corresponds to the
length of the underlying component code. Thus, all component codes at
the same position have the same length. However, the component code
lengths may vary across positions depending on $\boldsymbol{\eta}$ and
$\boldsymbol{\gamma}$. }

The total number of VNs (i.e., the length of the
code) is given by
\begin{equation}
	\label{eq:inhomogeneous_number_of_edges}
\begin{aligned}
	m = \sum_{i=1}^L \eta_{i,i} \binom{n_i}{2} + \sum_{1\leq i
	< j \leq L}
	\eta_{i,j} n_i n_j \approx \frac{\vect{\gamma}^\transpose \etab
	\vect{\gamma}}{2} n^2. 
\end{aligned}
\end{equation}
In the following, we assume some fixed (and arbitrary) ordering on the
\CNs and \VNs. 

\begin{remark}
	In the light of Remark \ref{rmk:bit_assignments}, we see that the
	above construction merely specifies a Tanner graph and not a code.
	This is due to the missing assignment of the component code bit
	positions to the CN edges. Since our results do not depend on this
	assignment, it is assumed to be (arbitrarily) fixed. In the
	following examples, the assignment is implicitly specified due to an
	array description. 
\end{remark}

\begin{example}
	\label{ex:half_product_codes}
	\glspl{hpc} are recovered by considering $\etab = 1$ and
	$\vect{\gamma} = 1$. All CNs are equivalent and correspond to
	component codes of length $n-1$.  \demo
\end{example}

\begin{example}
	\label{ex:product_codes}
	Choosing $\etab = \left(\begin{smallmatrix} 0 & 1\\ 1 & 0
	\end{smallmatrix}\right)$ leads to a \gls{pc}. 
	The relative lengths of the row and column component codes can be
	adjusted through $\vect{\gamma}$, where $\vect{\gamma} = (1/2, 1/2)$
	leads to a ``square'' \gls{pc} with (uniform) component code length
	$n/2$. Note that the total number of CNs $n$ is assumed to be even
	in this case.  \demo
\end{example}

\begin{example}
	\label{ex:arbitrary_arrays}
	\RevB{Consider an arbitrarily-shaped code array of finite size which
	is composed of $n' \times n'$ blocks arranged on a grid.  In
	Fig.~\ref{fig:arbitrary_code_array}, we illustrate how to construct
	$\etab$ for such an array. This construction uses the (arbitrary)
	convention that column and row positions of the blocks are indexed
	by odd and even numbers, respectively. First, form the matrix
	$\etab'$ representing the array, where entries are $1$ if a block is
	present on the corresponding grid point and $0$ otherwise. Assume
	that $\etab'$ has size $a' \times b'$, and let $a = \max(a', b')$.
	Then, the matrix $\etab$ is of size $2a \times 2a$ (i.e., $L=2a$)
	and can be constructed by using the prescription 
	\begin{equation}
	\label{eq:eta_prescription}
	\begin{aligned}
		\eta_{2i, 2j-1} &= \eta_{i,j}', \\
		\eta_{2i-1, 2j} &= \eta_{j,i}'
	\end{aligned}
	\end{equation}
	for $i \in [a]$ and $j \in [b]$ and $\eta_{i,j} = 0$ elsewhere.  For
	example, consider a \gls{pc} where $\etab =
	\left(\begin{smallmatrix} 0 & 1\\ 1 & 0
	\end{smallmatrix}\right)$, $\boldsymbol{\gamma} = (2/5, 3/5)$, and
	$n = 20$. In this case, the row and column codes have length $8$ and
	$12$, respectively. An alternative way of describing \emph{the same} code
	is to assume that the code array is composed of $6$ blocks of
	size $4\times4$. In this case, we obtain 
	\begin{align}
		\etab' = 
		\begin{pmatrix}
			1 & 1 \\
			1 & 1 \\
			1 & 1 \\
		\end{pmatrix} \quad\text{and}\quad
		\etab = 
		\begin{pmatrix}
			0 & 1 & 0 & 1 & 0 & 1 \\
			1 & 0 & 1 & 0 & 0 & 0 \\
			0 & 1 & 0 & 1 & 0 & 1 \\
			1 & 0 & 1 & 0 & 0 & 0 \\
			0 & 0 & 0 & 0 & 0 & 0 \\
			1 & 0 & 1 & 0 & 0 & 0 
		\end{pmatrix},
	\end{align}
	where $\etab'$ is the matrix describing the array and $\etab$
	results from applying \eqref{eq:eta_prescription}. Note that CNs at
	position 5 are not connected (i.e., $\eta_{5,i}=\eta_{i,5}=0$ for
	all $i$). This is a consequence of the assumed numbering convention
	for column and row positions. After removing empty rows and columns
	from $\etab$, we can set $L=5$, $n=20$,
	and $\gamma_i = 1/5$ for all $i$ in order to obtain the same PC as
	with $\etab =
	\left(\begin{smallmatrix} 0 & 1\\ 1 & 0
	\end{smallmatrix}\right)$, $\boldsymbol{\gamma} = (2/5, 3/5)$, and
	$n = 20$.} \demo
\end{example}

\begin{figure}[t]
	\begin{center}
		\includegraphics{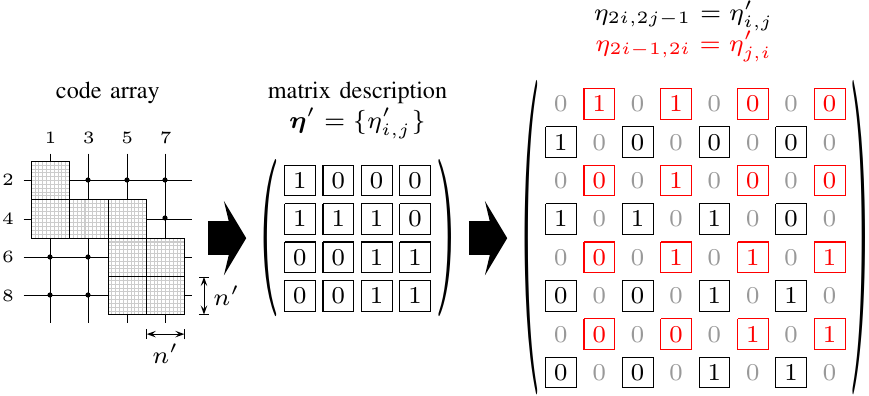}
	\end{center}
	\caption{Construction of $\etab$ for an arbitrary code array composed
	of \RevB{$n' \times n'$} blocks that are arranged on a grid. Red elements in $\etab$ are
	inserted such that $\etab$ is symmetric. With this construction, even
	(odd) positions in $\etab$ correspond to row (column) codes. }
	\label{fig:arbitrary_code_array}
\end{figure}

\begin{figure}[t]
	\centering
	\subfloat[staircase code]{\includegraphics{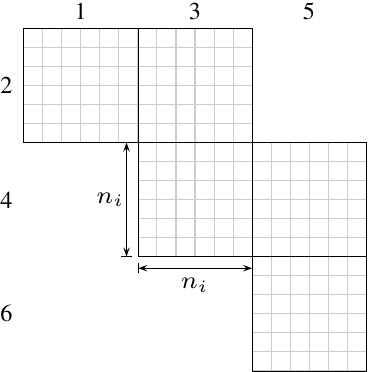}}
	\qquad
	\subfloat[block-wise braided code]{\includegraphics{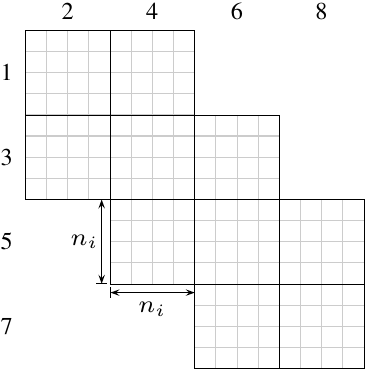}}
	\caption{Examples of code arrays for (a) staircase codes (see
	Example \ref{ex:staircase_codes}) and (b) block-wise braided
	codes (see Example \ref{ex:braided_codes}).  }
	\label{fig:code_arrays}
\end{figure}

\begin{example}
	\label{ex:staircase_codes}
	For a fixed $L \geq 2$, the matrix $\etab$ describing a staircase
	code \cite{Smith2012a} has entries $\eta_{i, i+1} = \eta_{i+1,i} =
	1$ for $i \in [L-1]$ and zeros elsewhere. The distribution
	$\vect{\gamma}$ is uniform, i.e., $\gamma_i = 1/L$ for all $i \in
	[L]$. For example, the staircase code corresponding to the code
	array shown in Fig.~\ref{fig:code_arrays}(a), where $L = 6$ and $n =
	36$ (i.e., $n_i = 6$), is defined by
	\begin{align}
			\label{eq:eta_staircase_code}
	\etab = 
		\begin{pmatrix}
			0 & 1 & 0 & 0 & 0 & 0\\
			1 & 0 & 1 & 0 & 0 & 0\\
			0 & 1 & 0 & 1 & 0 & 0\\
			0 & 0 & 1 & 0 & 1 & 0\\
			0 & 0 & 0 & 1 & 0 & 1\\
			0 & 0 & 0 & 0 & 1 & 0\\
		\end{pmatrix} ,
	\end{align}
	and $\gamma_i = 1/6$. The CNs at all positions have the same degree
	$2 n \gamma_i = 12$, except for positions $1$ and $L$, where the degrees
	are $n \gamma_i = 6$.  \demo
\end{example}

\begin{example}
	\label{ex:braided_codes}
	For even $L\geq 4$, the matrix $\etab$ for a particular instance of a
	block-wise braided code has entries $\eta_{i,i+1} = \eta_{i+1,
	i} = 1$ for $i \in [L-1]$, $\eta_{2i-1, 2i+2} = \eta_{2i+2, 2i-1} =
	1$ for $i \in [L/2-1]$, and zeros elsewhere. For example, we have 
	\begin{align}
		\label{eq:eta_braided_code}
		\etab = \begin{pmatrix}
			0 & 1 & 0 & 1 & 0 & 0 & 0 & 0\\
			1 & 0 & 1 & 0 & 0 & 0 & 0 & 0\\
			0 & 1 & 0 & 1 & 0 & 1 & 0 & 0\\
			1 & 0 & 1 & 0 & 1 & 0 & 0 & 0\\
			0 & 0 & 0 & 1 & 0 & 1 & 0 & 1\\
			0 & 0 & 1 & 0 & 1 & 0 & 1 & 0\\
			0 & 0 & 0 & 0 & 0 & 1 & 0 & 1\\
			0 & 0 & 0 & 0 & 1 & 0 & 1 & 0\\
		\end{pmatrix} 
	\end{align}
	for $L = 8$.  The corresponding code array is shown in
	Fig.~\ref{fig:code_arrays}(b), where $n = 32$ and $\vect{\gamma}$ is
	uniform.  In general, the construction of a block-wise braided code
	is based on so-called \glspl{mbp}. An \gls{mbp} with multiplicity
	$k$ is an $N \times N$ matrix with $k$ ones in each row and column
	\cite[Def.~2.1]{Feltstrom2009}.  Given a component code of length
	$\nham$ and dimension $\kham$, the diagonal and off-diagonal array
	blocks in Fig.~\ref{fig:code_arrays}(b) correspond to \glspl{mbp}
	with respective multiplicities $2\kham-\nham$ and $\nham - \kham$,
	where $N \geq \min(2\kham-\nham, \nham - \kham)$. However, this
	definition is unnecessarily narrow for our purposes in the sense
	that the multiplicities of the \glspl{mbp} are linked to the
	dimension of the component code. For example, for the array shown in
	Fig.~\ref{fig:code_arrays}(b) (where $N=4$ and $\nbch = 12$), it
	would be required that each component code has dimension $\kbch = 8$
	in order to comply with the definition in \cite{Feltstrom2009}.
	Here, we simply lift the constraint that the multiplicities of the
	\glspl{mbp} are linked to the component code dimension. The only
	requirement for the considered \gls{gpc} construction is that the
	\glspl{mbp} have a block-wise structure themselves, see
	Fig.~\ref{fig:braided_code_array_original} for an example. Note that
	$\etab$ can be found by following the steps in Example
	\ref{ex:arbitrary_arrays}.  \demo
\end{example}

\begin{figure}[t]
	\begin{center}
		\includegraphics{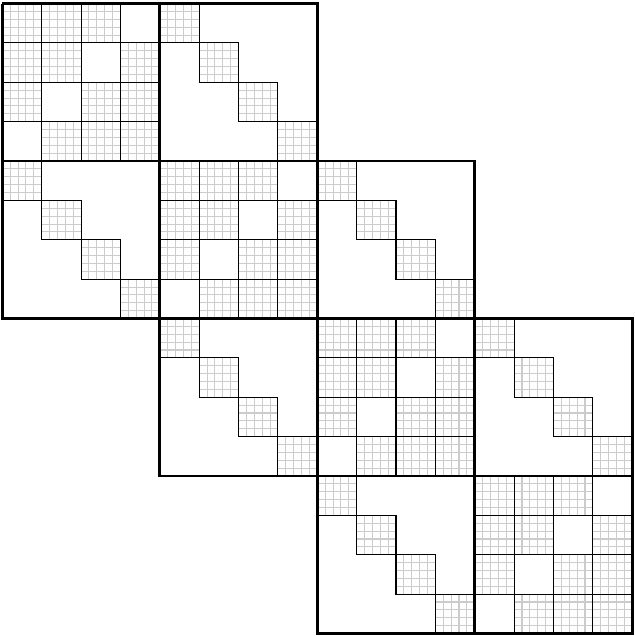}
	\end{center}
	\caption{A block-wise braided code where the \glspl{mbp} have
	a block-wise structure. We have $N = 20$, $n = 160$, and the
	multiplicities for the diagonal and off-diagonal \glspl{mbp} are
	$15$ and $5$, respectively. }
	\label{fig:braided_code_array_original}
\end{figure}

\begin{remark}
	Both staircase and braided codes were originally introduced as
	convolutional-like codes with conceptually infinite length, i.e., $L
	= \infty$. It then becomes customary to employ a sliding-window
	decoder whose analysis is discussed in
	Section~\ref{sec:modified_decoding}. We also remark that it is
	straightforward to extend the above construction of $\etab$ and
	$\vect{\gamma}$ for staircase and braided codes to their natural
	tail-biting versions (see, e.g., \cite{Justesen2011}).
	\NoRev{Staircase and braided codes can be classified as instances of
	(deterministic) spatially-coupled PCs, which are discussed in more
	detail in Section \ref{sec:spatially_coupled}.}
\end{remark}

Up to this point, the \gls{gpc} construction for a given $\etab$,
$\vect{\gamma}$, and $n$ specifies the lengths of the component codes
via \eqref{eq:inhomogeneous_cn_degree}. We proceed by assigning
different erasure-correcting capabilities to the component codes
corresponding to CNs at different positions. To that end, for $i \in
[L]$, let $\vect{\tau}(i) = (\tau_1(i), \dots,
\tau_{\tmax}(i))^\transpose$ be a probability vector of length
$\tmax$, where $\tau_{\tee}(i)$ denotes the fraction of CNs at
position $i$ (out of $n_i$ total CNs) which can correct $\tee$
erasures and $\tmax$ is the maximum erasure-correcting capability.
With some abuse of notation, the collection of these distributions for
all positions is denoted by $\vect{\tau} = (\vect{\tau}(i))_{i=1}^L$.
The assignment can be done either deterministically, assuming that
$\tau_\tee(i) n_i$ is an integer for all $i \in [L]$ and $\tee \in
[\tmax]$, or independently at random according to the distribution
$\vect{\tau}(i)$ for each position. 

\begin{example}
	Consider a \gls{pc} where the row and column codes have the same
	length but different erasure-correcting capabilities $\tee$ and
	$\tee'$, respectively. We have $\etab = \left(\begin{smallmatrix} 0 &
		1\\ 1 & 0
	\end{smallmatrix}\right)$, $\vect{\gamma} = (1/2, 1/2)$, and additionally
	$\tau_{\tee}(1) = 1$ and $\tau_{\tee'}(2) = 1$. More generally, the
	erasure-correcting capabilities may also vary across the row (and
	column) codes leading to irregular \glspl{pc} \cite{Hirasawa1984, Alipour2012}.
	\demo
\end{example}

\begin{example}
	Staircase codes with component code mixtures were suggested (but not
	further investigated) in \cite[Sec.~4.4.1]{Smith2011}. The case
	described in \cite[Sec.~4.4.1]{Smith2011} corresponds to a fixed
	choice of $\vect{\tau}(i)$ which is independent of $i$. \RevA{A code
	ensemble that is structurally related to staircase codes based on
	component code mixtures was analyzed in \cite[Sec.~III]{Zhang2015}.}
	\demo
\end{example}

\subsection{Inhomogeneous Random Graphs}
\label{sec:inhomogeneous_random_graphs}

Assume that a codeword of $\Gpc$ is transmitted over the \gls{bec}
with erasure probability $p = c/n$, for $c > 0$. Recall that the
residual graph is obtained by removing known VNs and collapsing erased
VNs into edges. We now illustrate how the ensemble of residual graphs
for $\Gpc$ is related to the inhomogeneous random graph model.

\RevB{
\begin{remark}
	The parameter $c$ can again be interpreted as an ``effective''
	channel quality. Unlike for HPCs, its operational meaning does not
	necessarily correspond to the expected number of initial erasures
	per component code constraint (see Section \ref{sec:hpc_bec}).
	However, the construction of $\Gpc$ can be slightly modified by
	introducing a parameter $a > 0$ that scales the total number of CNs
	from $n$ to $a n$. In this case, $a$ can be chosen such that $c$
	corresponds again to the average number of erasures per component
	code constraint for large $n$. This modified construction is
	discussed in Section \ref{sec:discussion_thresholds} below.
\end{remark}
}

In \cite{Bollobas2007}, inhomogeneous random graphs are specified by a
vertex space $\mathcal{V}$ and a \RevB{kernel $\boldsymbol{\kappa}$}.
Here, we consider only the finite-type case, see
\cite[Ex.~4.3]{Bollobas2007}. In this case, the number of different
vertex types is denoted by $r$ and the vertex space $\mathcal{V}$ is a
triple $(\mathcal{S}, \mu, (\boldsymbol{y}_n)_{n\geq 1})$, where
$\mathcal{S} = [r]$ is the so-called type space, $\mu : \mathcal{S}
\to [0, 1]$ is a probability measure on $\mathcal{S}$, and
$\boldsymbol{y}_n = (y_1^{(n)}, y_2^{(n)}, \dots, y_n^{(n)})$ is a
deterministic or random sequence of points in $\mathcal{S}$ such that
for each $i \in \mathcal{S}$, we have
\begin{align}
	\label{eq:fraction_of_vertices}
	\frac{|\{k : y_k^{(n)} = i\}|}{n} \inprobto \mu(i) 
\end{align}
as $n \to \infty$. For a finite number of vertex types, the \RevB{kernel
$\boldsymbol{\kappa}$} is a symmetric $r \times r$ matrix, where
entries are denoted by $\kappa_{i,j}$. For a fixed $n > \max_{i,j}
\kappa_{i,j}$, the inhomogeneous random graph $\Gnk$ is defined as
follows. The graph has $n$ vertices where the type of vertex $i$ is
given by $y^{(n)}_i$.  An edge between vertex $i$ and $j$ exists with
\RevB{probability $n^{-1}\kappa_{y^{(n)}_i,y^{(n)}_j}$}, independently of all
other edges. 

\begin{remark}
	Even though we use \cite{Bollobas2007} as our main reference,
	finite-type inhomogeneous random graphs (and their relation to
	multi-type branching processes) were first introduced and studied in
	\cite{Soderberg2002}. See also the discussion in
	\cite[p.~31]{Bollobas2009a}. 
\end{remark}

For the code $\Gpc$, the residual graph is  an instance of an
inhomogeneous random graph with a finite number of types, as defined
above. In particular, there are $r = L \tmax$ different types in
total, i.e., we have $\mathcal{S} = [L \tmax]$. In our case, it is
more convenient to specify the type of a vertex by a pair $(i, \tee)$,
where $i \in [L]$ corresponds to the position in the Tanner graph and
$\tee \in [\tmax]$ corresponds to the erasure-correcting capability.
In the construction of the sequence $\boldsymbol{x}_n$, the assignment
of the type corresponding to the position is always deterministic. For
the type corresponding to the erasure-correcting capability, we have
the freedom to do the assignment deterministically or uniformly at
random. In both cases, the fraction of vertices of type $(i, \tee)$ is
asymptotically given by $\gamma_i \tau_\tee(i)$. This specifies the
probability measure $\mu$ through the condition
\eqref{eq:fraction_of_vertices}. (For the random assignment, the condition
\eqref{eq:fraction_of_vertices} holds due to the weak law of large
numbers.) The kernel $\boldsymbol{\kappa}$ is obtained from $\etab$ by
replacing each $0$ entry with the all-zero matrix of size $\tmax
\times \tmax$ and each $1$ entry with a $\tmax \times \tmax$ matrix
where all entries are equal to $c$.

\begin{remark}
	The inhomogeneous random graph model in \cite{Bollobas2007} is much
	more general than the finite-type case described above. In
	particular, $\mathcal{S}$ can be a separable metric space and
	$\kappa$ a symmetric non-negative (Borel) measurable function on
	$\mathcal{S} \times \mathcal{S}$. This more general framework could
	be used for example to obtain the \gls{de} equations for so-called
	tightly-braided codes \cite{Feltstrom2009, Jian2014}. However, in
	that case the analysis does not admit a characterization in terms of
	a finite number of types. In particular, the \gls{de} equations are
	given in terms of integrals and solving the equations may then
	require the application of numerical integration techniques.
\end{remark}

\begin{figure}[t]
	\begin{center}
		\includegraphics{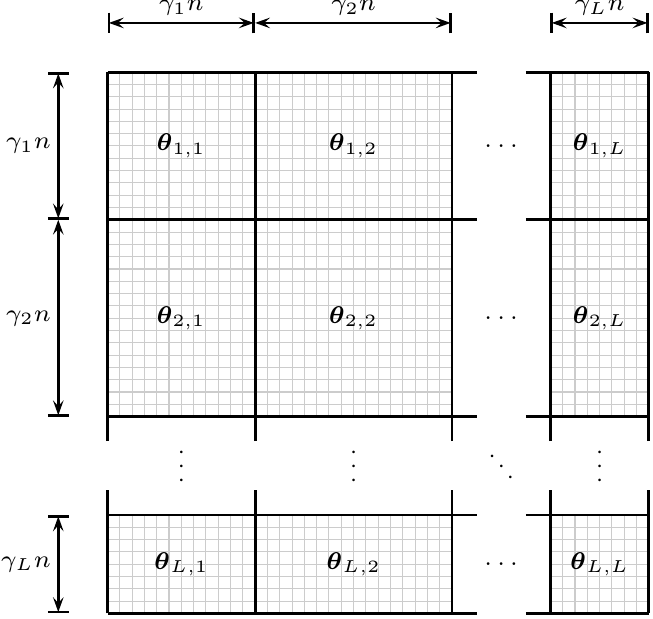}
	\end{center}
	\caption{The structure of the random, symmetric adjacency matrix
	$\boldsymbol{\bRV}$.}
	\label{fig:adjacency_matrix}
\end{figure}

Similar to the (homogeneous) random graph $\Gnp$, one may use an
alternative representation in terms of a random, symmetric $n \times
n$ adjacency matrix $\boldsymbol{\bRV}$ with zeros on the diagonal.
The structure of this matrix is shown in
Fig.~\ref{fig:adjacency_matrix}. The matrix is composed of submatrices
$\boldsymbol{\bRV}_{i,j}$ of size $n_i \times n_j$. The submatrix
$\boldsymbol{\bRV}_{i,j}$ is zero if $\eta_{i,j} = 0$ and it consists
of \IID $\Bern{p}$ RVs if $\eta_{i,j} = 1$ (with the constraint that
the matrix $\boldsymbol{\bRV}$ is symmetric and all diagonal elements
are zero). The inhomogeneous random graph is thus specified by $m$
Bernoulli RVs, where $m$ is defined in
\eqref{eq:inhomogeneous_number_of_edges}. 

From the matrix representation, it can be seen that the degree of a
vertex at position $k$ is distributed according to $\Binom{d_k}{c/n}$,
where $d_k$ is defined in \eqref{eq:inhomogeneous_cn_degree}.
Moreover, all vertices follow a Poisson distribution as $n \to
\infty$, where the mean for vertices at position $k$ is given by
\begin{align}
	\label{eq:gpc_poisson_mean}
	\lim_{n \to \infty} d_k \frac{c}{n} = c \sum_{j=1}^L \gamma_j
	\eta_{k, j}. 
\end{align}

\subsection{Iterative Decoding}
\label{sec:gpc_decoding}

After transmission over the \gls{bec}, we apply the same iterative
decoding as described in Section~\ref{sec:hpc_decoding} for the
\gls{hpc}. The only difference is that each component code is assumed
to correct all erasures up to its erasure-correcting capability. In
the corresponding iterative peeling procedure for the residual graph,
one removes vertices of degree at most $\tee$, where $\tee$ is the
erasure-correcting capability of the corresponding CN. The
erasure-correcting capability may now be different for different
vertices depending on their type. 

\subsection{Asymptotic Performance}
\label{sec:gpc_performance}

For a fixed number of decoding iterations $\ell$, we wish to
characterize the asymptotic performance as $n \to \infty$. The crucial
observation is that the distribution on the neighborhood-shape of a
randomly chosen vertex in the residual graph converges asymptotically
to a multi-type branching process \cite[Remark 2.13]{Bollobas2007}. In
our case, the multi-type branching process is defined in terms of the
code parameters $\etab$, $\vect{\gamma}$, and $\vect{\tau}$. It
generalizes the (single-type) branching process described in
Section~\ref{sec:branching_process} as follows. The process starts with
one vertex at depth $0$ which has random type $(i, \tee)$ with
probability $\gamma_i \tau_{\tee}(i)$. This vertex has neighboring (or
offspring) vertices of possibly different types that extend to depth
$1$, each of which has again neighboring vertices that extend to the
next depth, and so on. For a vertex with type $(i, \tee)$, the number
of offspring vertices with type $(j, \tee')$ is Poisson distributed
with mean $c \eta_{i,j} \gamma_j \tau_{\tee'}(j)$, independently of
the number of offspring vertices of other types.  Since the sum of
independent Poisson RVs is again Poisson distributed, we have that the
total number of offspring vertices of a vertex with type $(i, \tee)$
is Poisson distributed with mean 
\begin{align}
	\sum_{j=1}^L \sum_{\tee' = 1}^{\tmax} c \gamma_j \eta_{i,j}
	\tau_{\tee'}(j) = c \sum_{j=1}^{L} \gamma_j \eta_{i,j}, 
\end{align}
independently of $\tee$ (cf.~\eqref{eq:gpc_poisson_mean}). The above
multi-type branching process is denoted by $\mathfrak{X}$. We further
use $\mathfrak{X}({i, \tee})$ to denote the process which starts with
a root vertex that has the specific type $(i, \tee)$. 

Let $z^{(\ell)}$ be the probability that the root vertex of the first
$\ell$ generations of $\mathfrak{X}$ survives the $\ell$ iterations of
the peeling procedure. This probability is evaluated explicitly in
Section~\ref{sec:gpc_de} in terms of the code parameters $\etab$,
$\vect{\gamma}$, and $\vect{\tau}$. The main result is as follows.

\begin{theorem}
	\label{th:gpc_result}
	Let $W_k\nl$, $k \in [n]$, be the indicator \gls{rv} for the event
	that the $k$-th component code of $\Gpc$ declares a decoding failure after $\ell$
	iterations of decoding and define $W\nl = \frac{1}{n} \sum_{k=1}^n
	W_k\nl$. Then, we have
\begin{align}
	\label{eq:gpc_cycle_free_bahavior}
	\lim_{n \to \infty} \mathbb{E}[W\nl] =  z^{(\ell)}.
\end{align}
Furthermore, for any $\eps > 0$, there exist $\delta > 0$, $\beta>0$,
and $n_0 \in \mathbb{N}$ such that for all $n > n_0$ we have
\begin{align}
	\label{eq:gpc_concentration}
	\Pr{|W\nl - \mathbb{E}[W\nl]| > \eps} \leq 
	e^{-\beta n^\delta}.
\end{align}
\end{theorem}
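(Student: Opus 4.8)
The plan is to reproduce the three-step proof of Theorem~\ref{th:hpc_result} from Section~\ref{sec:hpc_proof}---a tree-like neighborhood lemma, local convergence of the neighborhood to a branching process, and a concentration bound---with the random graph $\Gnp$ replaced by the finite-type inhomogeneous random graph $\Gnk$ identified in Section~\ref{sec:inhomogeneous_random_graphs}, and the Poisson branching process replaced by the multi-type process $\mathfrak{X}$. The single structural fact that makes this transfer essentially routine is that every nonzero entry of the kernel $\boldsymbol{\kappa}$ equals $c$, so all edge variables of $\Gnk$ are still \IID~$\Bern{c/n}$; the inhomogeneity resides only in \emph{which} pairs of CNs may be joined (encoded by $\etab$ and $\vect{\gamma}$) and, in the randomized variant, in the independent assignment of erasure-correcting capabilities.

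For the convergence of the mean~\eqref{eq:gpc_cycle_free_bahavior}, I would run the breadth-first exploration of Section~\ref{sec:exploration_process} on $\Gnk$ while additionally recording, for each exposed vertex, its type $(i,\tee)$. Since at every step the newly activated vertices form a subset of a $\Bern{c/n}$-thinning of at most $n$ neutral vertices, this type-decorated exploration is stochastically dominated by the plain exploration on $\Gnp$; hence the bound on $\E{T_\ell^2}$ in Appendix~\ref{app:moments_stopping_times} applies verbatim, and repeating the proof of Lemma~\ref{lem:tree_like_neighborhood} yields $\Pr{B_G(k,\ell) \text{ is a tree}} \ge 1 - \beta(c,\ell)/n$ for every $k$. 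Conditioned on the neighborhood being a tree, its type-decorated shape is determined by the stopped exploration process, and the type-indexed analogue of~\eqref{eq:weak_convergence_history} holds: given that a vertex of type $(i,\tee)$ is being explored, the numbers of newly activated vertices of the various types $(j,\tee')$ are independent $\Binom{n\gamma_j\tau_{\tee'}(j)\eta_{i,j}(1+o(1))}{c/n}$ variables that converge jointly to the independent $\Pois{c\,\gamma_j\eta_{i,j}\tau_{\tee'}(j)}$ offspring numbers defining $\mathfrak{X}$. Encoding the finite type-decorated tree by an integer exactly as in Section~\ref{sec:hpc_convergence} then gives $B_n \indistrto B$, where $B$ encodes the first $\ell$ generations of $\mathfrak{X}(i,\tee)$; since the peeling outcome is a bounded function $\Dec_\ell$ of this object, bounded convergence gives $\lim_{n\to\infty}\E{W_k\nl \mid B_G(k,\ell)\text{ is a tree}} = \E{\Dec_\ell(\mathfrak{X}(i,\tee))}$ for every $k$ of type $(i,\tee)$. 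Averaging over $k$---the fraction of indices $k$ of type $(i,\tee)$ tends to $\gamma_i\tau_\tee(i)$ by~\eqref{eq:fraction_of_vertices}---and combining with the tree bound yields $\lim_{n\to\infty}\E{W\nl} = \sum_{i,\tee}\gamma_i\tau_\tee(i)\,\E{\Dec_\ell(\mathfrak{X}(i,\tee))} = z^{(\ell)}$, the root-survival probability evaluated in Section~\ref{sec:gpc_de}. (One could instead quote the local-weak-convergence results for finite-type inhomogeneous random graphs in~\cite[Rem.~2.13, Th.~11.6]{Bollobas2007} in place of this explicit argument.)

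The concentration bound~\eqref{eq:gpc_concentration} transfers almost word for word from Section~\ref{sec:hpc_concentration}. Since $\Gnk$ is specified by the length-$\nhpc$ vector $\boldsymbol{\bRV}$ of \IID~$\Bern{c/n}$ edge variables, Theorem~\ref{th:typical_bounded_differences_inequality} applies with $p = c/n$, $f(\boldsymbol{\bRV}) = \sum_{k=1}^n W_k\nl$, $\Lipschitz' = n$, $\Gamma = \{D_\text{max} < n^\delta\}$, and $\Lipschitz = 2(\ell+1)n^{\delta\ell}$. The maximum-degree tail bound $\Pr{D_\text{max} \ge n^\delta} \le e^{-\Omega(n^\delta)}$ follows as in Lemma~\ref{lem:maximum_vertex_degree}, because each vertex degree is stochastically dominated by a $\Binom{n}{c/n}$ variable; the locality argument---$W_k\nl$ depends only on the depth-$\ell$ neighborhood of vertex $k$, which contains at most $\sum_{l=0}^\ell n^{\delta l} \le (\ell+1)n^{\delta\ell}$ vertices when $D_\text{max} < n^\delta$, so adding or removing one edge changes at most $2(\ell+1)n^{\delta\ell}$ of the summands---is identical; and $\nhpc \le \binom{n}{2} \le n^2$ still holds. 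Choosing $\gamma = n^{-1}$ and $\delta = (1+2\ell)^{-1}$ and matching exponents as in~\eqref{eq:bnd_tmp}--\eqref{eq:bnd_tmp2} gives~\eqref{eq:gpc_concentration}. If the capabilities are assigned at random, $f$ additionally depends on the type vector; one conditions on the event that all empirical type fractions are close to $\gamma_i\tau_\tee(i)$---which fails with probability $e^{-\Omega(n)}$ by Hoeffding's inequality---and applies the above to each resulting (typical) conditioned model, whose conditional mean is $z^{(\ell)} + o(1)$ by the convergence argument.

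The only real obstacle is the bookkeeping in the convergence step: carrying vertex types through the exploration process so that all finite-dimensional distributions converge to those of $\mathfrak{X}$, and verifying via stochastic domination that the moment bounds on $T_\ell$ remain uniform in $n$ at every position. The concentration step needs no new ideas, precisely because the ``$c$-or-$0$'' structure of $\boldsymbol{\kappa}$ keeps all edge variables identically $\Bern{c/n}$ and keeps $\nhpc = \bigo(n^2)$.
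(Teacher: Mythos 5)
Your proposal is correct, and the concentration step matches the paper's proof essentially word for word (same choice of $\Gamma$, $\Lipschitz$, $\Lipschitz'$, $\gamma$, and $\delta$; same observation that the maximum-degree tail bound carries over with an inequality in place of the equality in \eqref{eq:maximum_vertex_degree_independence}). For the convergence of the mean you take a different route. The paper cites \cite[Th.~11.6]{Bollobas2007} directly: it checks that $W_k = \mathcal{D}_\ell(k,G)$ is an $\ell$-neighborhood function (type-preserving-isomorphism invariant and depth-$\ell$ local) with $\sup_n \E{\mathcal{D}_\ell(k,G)^4} < \infty$, trivially since $\mathcal{D}_\ell$ is $\{0,1\}$-valued, and then invokes the stated limit $\lim_{n\to\infty}\E{W} = \E{\mathcal{D}_\ell(\mathfrak{X})}$, noting the underlying coupling is \cite[Lem.~11.4]{Bollobas2007}. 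You instead re-derive this local weak convergence from first principles by extending the exploration-process argument of Sections~\ref{sec:hpc_tree_neighborhood}--\ref{sec:hpc_convergence} to the multi-type setting: tracking types along the exploration, using stochastic domination of the inhomogeneous degree by $\Binom{n-1}{c/n}$ to reuse the $\E{T_\ell^2}$ bound, showing finite-dimensional convergence of the type-decorated stopped process to $\mathfrak{X}(i,\tee)$, and then averaging over the root type via \eqref{eq:fraction_of_vertices}. You correctly flag that this is an alternative to quoting Bollob\'{a}s outright. Your route is more work (multi-type bookkeeping and the stopping-time argument) but is self-contained and mirrors the HPC proof; the paper's route is shorter precisely because it outsources that work to the coupling lemma. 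One small point in your favor: for randomly assigned erasure-correcting capabilities you explicitly condition on a typical type-fraction event (failing with probability $e^{-\Omega(n)}$) before applying the bounded-differences inequality, whereas the paper leaves the interaction of random type assignment with its concentration argument implicit.
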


\begin{proof}
In order to prove \eqref{eq:gpc_cycle_free_bahavior}, we apply
\cite[Th.~11.6]{Bollobas2007}. First, recall the following definition
from \cite[p.~74]{Bollobas2007}. Let $f(v, G)$ be a function defined
on a pair $(v, G)$, where $G$ is a graph composed of vertices with
different types and $v$ is a distinguished vertex of $G$, called the
root. The function $f$ is an $\ell$-neighborhood function if it is
invariant under type-preserving rooted-graph isomorphisms and depends
only on the neighborhood of the vertex $v$ up to depth $\ell$. The RV
$W_k$ depends only on the depth-$\ell$ neighborhood of the $k$-th
vertex in the residual graph of $\Gpc$. Furthermore, the peeling
outcome for the $k$-th vertex is invariant under isomorphisms as long
as they preserve the vertex type. Hence, the RV $W_k$ can be expressed
in terms of an $\ell$-neighborhood function $\mathcal{D}_\ell$ as $W_k
= \mathcal{D}_\ell(k, G)$. That is, the function $\mathcal{D}_\ell$
evaluates the peeling procedure on the depth-$\ell$ neighborhood of a
vertex and thus determines if the corresponding component code
declares a decoding failure after $\ell$ iterations. To apply
\cite[Th.~11.6]{Bollobas2007}, we need to check that we have $\sup_n
\E{\mathcal{D}_\ell(k, G)^4} < \infty$. This is true since since
$D_\ell$ maps to $\{0,1\}$. We then have
\cite[Eq.~(11.4)]{Bollobas2007}
\begin{align}
	\lim_{n \to \infty} \E{W} = \E{\mathcal{D}_\ell(\mathfrak{X})}, 
\end{align}
where $\mathcal{D}_\ell(\mathfrak{X})$ is defined by evaluating
$\mathcal{D}_\ell$ on the branching process $\mathfrak{X}$ up to depth
$\ell$, taking the initial vertex as the root. Therefore,
$\E{\mathcal{D}_\ell(\mathfrak{X})} = z^{(\ell)}$. This result
generalizes the convergence result \eqref{eq:hpc_cycle_free_bahavior}
for \glspl{hpc} (i.e., $\Gncn$) shown in
Sections~\ref{sec:hpc_tree_neighborhood} and \ref{sec:hpc_convergence}.
The proof in \cite{Bollobas2007} relies on a stochastic coupling of
the branching process $\mathfrak{X}$ and the neighborhood exploration
process for $\Gnk$ (which generalizes the exploration process
described in Section~\ref{sec:exploration_process} to handle different
vertex types), see \cite[Lem.~11.4]{Bollobas2007} for details. 

The proof of \eqref{eq:gpc_concentration} follows along the same lines
as the proof for the homogeneous case in
Section~\ref{sec:hpc_concentration}, using again the typical bounded
differences inequality. The bound on the maximum vertex degree for
$\Gncn$ in Lemma \ref{lem:maximum_vertex_degree} applies without
change also to the inhomogeneous random graph $\Gnk$. The only
difference in the proof in Appendix \ref{app:maximum_vertex_degree} is
that the equality in \eqref{eq:maximum_vertex_degree_independence}
becomes an inequality. The choice of the high-probability event
$\Gamma$ and the typical Lipschitz constant is then the same as
described in Section~\ref{sec:hpc_concentration}. 
\end{proof}

\subsection{Density Evolution}
\label{sec:gpc_de}

In order to compute $z^{(\ell)}$, we proceed in a similar fashion as
described in Section~\ref{sec:hpc_de} and break down the computation in a
recursive fashion. First, note that from the definition of
$\mathfrak{X}$ and $\mathfrak{X}(i, \tee)$, we have
\begin{align}
	\label{eq:de_gpc_ver1a}
	z^{(\ell)} = \E{\mathcal{D}_\ell(\mathfrak{X})} = \sum_{i=1}^{L}
	 \sum_{\tee = 1}^{\tmax} \gamma_i \tau_\tee(i) z_{i,
	\tee}^{(\ell)},
\end{align}
where
\begin{align}
	z_{i,\tee}^{(\ell)} = \E{\mathcal{D}_\ell(\mathfrak{X}({i, \tee}))}
\end{align}
is the probability that the root vertex of the first $\ell$
generations of the branching process $\mathfrak{X}(i, \tee)$ survives
the peeling procedure. We claim that 
\begin{align}
	\label{eq:de_gpc_ver1b}
	z_{i,\tee}^{(\ell)} = \Psi_{\geq \tee + 1} \left( c
	\sum_{j=1}^{L}  \sum_{\tee'=1}^{\tmax}
	\eta_{i,j} \gamma_j\tau_{\tee'}(j)
	x_{j,\tee'}^{(\ell-1)}
	\right), 
\end{align}
where $x_{i,\tee}^{(\ell)}$ is recursively given by 
\begin{align}
	\label{eq:de_gpc_ver1c}
	x_{i,\tee}^{(\ell)} = \Psi_{\geq \tee} \left( c
	\sum_{j=1}^{L} \sum_{\tee'=1}^{\tmax}
	\eta_{i,j} \gamma_j \tau_{\tee'}(j)
	x_{j,\tee'}^{(\ell-1)}
	\right), 
\end{align}
with $x_{i, \tee}^{(0)} = 1$. The argument is the same as described in
Section~\ref{sec:hpc_de}. In particular, to determine the survival of the
root of $\mathfrak{X}(i, \tee)$ after $\ell$ peeling iterations, first
determine if each offspring vertex gets removed by applying $\ell-1$
peeling iterations to the corresponding subtree. Then, make a decision
based on the number of surviving offspring vertices. Again, one finds
that offspring vertices survive independently of each other, however,
the survival probability now depends on the vertex type. In
particular, in \eqref{eq:de_gpc_ver1b}, the quantity
$x_{i,\tee}^{(\ell-1)}$ is the probability that a type-$(i, \tee)$
offspring of the root vertex survives the $\ell-1$ peeling iterations
applied to its subtree. The argument of the function $\Psi_{\geq
\tee+1}$ in \eqref{eq:de_gpc_ver1b} is the mean number of surviving
offspring vertices, which, again, is easily found to be Poisson
distributed. Essentially the same arguments can be applied to find
\eqref{eq:de_gpc_ver1c} by taking into account the connecting edge of
each offspring vertex to the previous level of the tree. 

Using the substitution $x_i^{(\ell)} = \sum_{\tee = 1}^{\tmax}
\tau_\tee(i) x_{i,\tee}^{(\ell)}$, it is often more convenient to
express $z^{(\ell)}$ in terms of
\RevA{%
\begin{align}
	\label{eq:z_ell_definition}
	z^{(\ell)} = 
	\sum_{i=1}^{L} \gamma_i 
	\sum_{\tee=1}^{\tmax} \tau_\tee(i) \Psi_{\geq \tee+1}
	\left(
		c \sum_{j=1}^{L} \eta_{i,j} \gamma_j x_j^{(\ell-1)}
	\right), 
\end{align}
}%
where
\begin{align}
	\label{eq:de_gpc}
	x_i^{(\ell)} = \sum_{\tee=1}^{\tmax} \tau_\tee(i) \Psi_{\geq \tee}
	\left(
		c \sum_{j=1}^{L} \eta_{i,j} \gamma_j x_j^{(\ell-1)}
	\right)
\end{align}
with $x_i^{(0)} = 1$ for all $i \in [L]$. 

\section{Discussion}

Before considering a direct application of the obtained DE equations
in the next section, we briefly discuss some relevant topics regarding
their general application. 

\subsection{Thresholds and Code Comparisons}
\label{sec:discussion_thresholds}

The decoding threshold for $\Gpc$ can be defined in terms of the effective
channel quality as
\begin{align}
	\label{eq:gpc_threshold}
	\cthr \define \sup\{ c > 0 \,|\, \lim_{\ell \to \infty} z^{(\ell)} = 0
	\}.
\end{align}
Recall that in the code construction in
Section~\ref{sec:gpc_construction}, $\vect{\gamma}$ is assumed to be a
distribution, i.e., we have $\sum_{i=1}^L \gamma_i = 1$. This
assumption turns out to be convenient in the formulation and proof of
Theorem \ref{th:gpc_result} since it ensures that the number of CNs in
the Tanner graph is always given by $n$. However, when comparing the
performance of different \glspl{gpc} (for example in terms of
thresholds computed via \eqref{eq:gpc_threshold}), it is more
appropriate to lift this assumption and replace $\gamma_i$ by a
rescaled version $a \gamma_i$ for all $i$ and some constant $a$. This
simply corresponds to scaling the total number of CNs to $an$. 

A reasonable scaling to compare different codes is to choose $a$ such
that the effective channel quality $c$ can be interpreted
asymptotically as the average number of initial erasures in each
component code, similar to \glspl{hpc} in
Section~\ref{sec:hpc_decoding}.  Since each component code at position
$i$ initially contains $a n_{\mathcal{C}, i} c/n$ erasures, by
averaging over all positions we obtain 
\begin{align}
	&\lim_{n \to \infty} a \frac{c}{n} \sum_{i=1}^L \gamma_i \left( \sum_{j \neq i} \gamma_j
	n \eta_{i,j} + \eta_{i,i} (\gamma_i n - 1) \right)\\
	&=
	a c \sum_{i=1}^L \gamma_i \sum_{j = 1}^{L} \gamma_j
	\eta_{i,j} = a c \vect{\gamma}^\transpose \etab \vect{\gamma}.
	\label{eq:scaling_tmp}
\end{align}
Setting \eqref{eq:scaling_tmp} equal to $c$ leads to 
\begin{align}
	\label{eq:scaling}
	a = \frac{1}{\vect{\gamma}^\transpose \etab \vect{\gamma}}. 
\end{align}

\begin{example}
	For staircase codes (see Example \ref{ex:staircase_codes}), we
	obtain $a = (2L-2)/L^2$. For large $L$, $a \approx 2/L$ so that $a
	\gamma_i \approx 1/2$. \demo
\end{example}

\subsection{Upper Bound on the Decoding Threshold}
\label{sec:upper_bound}

An upper bound on the decoding threshold for $\Gpc$ can be given as
follows, see \cite[Sec.~VI-A]{Jian2012}. Assume for a moment that all
component codes can correct up to $\tee$ erasures. The best one can
hope for in this case is that each component code corrects exactly
$\tee$ erasures. That is, in total at most $a \tee n$ erasures can be
corrected, where $a$ is assumed to be defined as in
\eqref{eq:scaling}.  \RevB{Normalizing by the code length $m$ (see
\eqref{eq:inhomogeneous_number_of_edges})} gives a maximum erasure
probability of $p \leq a \tee n / m$ or, in terms of the effective
channel quality $c \leq a \tee n^2 / m$. Using
\eqref{eq:inhomogeneous_number_of_edges} as $n \to \infty$, we obtain
$c \leq 2 \tee$ as a necessary condition for successful decoding. This
reasoning extends naturally also to the case where we allow for a
mixture of erasure-correcting capabilities. In this case, one finds
that $c \leq 2 \bar{\tee}$, where
\begin{align}
	\label{eq:average_tee}
	\bar{\tee} = \sum_{i=1}^{L} \gamma_i \sum_{\tee = 1}^{\tmax}
	\tau_\tee(i) \tee
\end{align}
is the mean erasure-correcting capability. This bound is used for
example as a reference in the code optimization discussed in
Section~\ref{sec:irregular_hpc}. 

\RevA{\begin{remark} A similar discussion can be found in
	\cite{Zhang2015}, where the authors refer to the resulting threshold
	bound (see \cite[Def.~1]{Zhang2015}) as the
	``weight-pulling'' threshold.
\end{remark}}

\subsection{Modified Decoding Schedules}
\label{sec:modified_decoding}

We now discuss decoding algorithms that differ from the one described
in Section~\ref{sec:hpc_decoding} and Section~\ref{sec:gpc_decoding} in
terms of scheduling. For example, for conventional \glspl{pc}, one
typically iterates between the component decoders for the row and
column codes. Another example is the decoding of convolutional-like
\GPCs, such as the ones described in Examples~\ref{ex:staircase_codes}
and \ref{ex:braided_codes}. For these codes, $L$ is typically assumed
to be very large and it becomes customary to employ a sliding-window
decoder. Such a decoder does not require knowledge of the entire
received code array in order to start decoding. The decoder instead
only operates on a subset of the array within a so-called window
configuration. After a predetermined number of iterations, this subset
changes and the window ``slides'' to the next position. 

More generally, assume that we wish to apply a different decoding
schedule to $\Gpc$. To that end, let $\mathcal{A}^{(l)} \subseteq [L]$
for $l \in [\ell]$ be a subset of the $L$ CN positions. We interpret
$\mathcal{A}^{(l)}$ as \emph{active} positions and the complementary
set $[L] \setminus \mathcal{A}^{(l)}$ as \emph{inactive} positions in
iteration $l$. The decoding is modified as follows. In iteration $l$,
one only executes the \gls{bdd} corresponding to \CNs at active
positions, i.e., positions that are contained in the set
$\mathcal{A}^{(l)}$. \CNs at inactive positions are assumed to be
frozen, in the sense that they do not contribute to the decoding
process. In the peeling procedure, vertices at inactive positions are
simply ignored during iteration $l$. 

In order to check if Theorem \ref{th:gpc_result} remains valid for a
modified decoding schedule, we adopt the convention that frozen CNs
continue to declare a decoding failure if they declared a failure in
the iteration in which they were last active. Moreover, we assume that
each CN position belongs to the set of active positions at least once
during the decoding, i.e., we assume that $\bigcup_{l = 1}^{\ell}
\mathcal{A}^{(l)} = [L]$ (otherwise $W_k$ in Theorem
\ref{th:gpc_result} is not defined for CNs that were never activated).
Using these assumptions, it can be shown that Theorem
\ref{th:gpc_result} remains valid. The only difference is that the
corresponding DE equations now depend on the schedule through
\begin{align}
	\label{eq:de_mod_z}
	z_i^{(\ell)} = 
	\begin{cases}
		\text{RHS of }\eqref{eq:z_ell_definition} &\quad \text{ if $i \in \mathcal{A}^{(\ell)}$} \\
		z_i^{(\ell-1)} &\quad \text{ otherwise}
	\end{cases},
\end{align}
and
\begin{align}
	\label{eq:de_mod_x}
	x_i^{(\ell)} = 
	\begin{cases}
		\text{RHS of }\eqref{eq:de_gpc} &\quad \text{ if $i \in \mathcal{A}^{(\ell)}$} \\
		x_i^{(\ell-1)} &\quad \text{ otherwise}
	\end{cases}.
\end{align}
To see this first observe that in the proof of
Theorem~\ref{th:gpc_result}, the decoding schedule can be handled by
simply assuming an appropriately modified neighborhood function
$\tilde{\mathcal{D}}_\ell$. In particular, one may think about
embedding the decoding schedule $(\mathcal{A}^{(l)})_{l \in [\ell]}$
into the function $\tilde{\mathcal{D}}_\ell$. Observe that the
scheduling does not change the fact that the decoding outcome is
isomorphism invariant, as long as the type of all vertices is
preserved. Thus, it remains to show that applying the modified
decoding function $\tilde{\mathcal{D}}_\ell$ to the branching process
$\mathfrak{X}$ results in \eqref{eq:de_mod_z} and \eqref{eq:de_mod_x}.
Assuming that the root vertex is active in the final iteration $\ell$,
we can proceed as before. If, on the other hand, the root vertex is
not active in the final iteration $\ell$, we know that the survival
probability is the same as it was in the previous iteration. This
gives \eqref{eq:de_mod_z} and applying the same reasoning for
offspring vertices gives \eqref{eq:de_mod_x}. 

\subsection{Performance on the Binary Symmetric Channel}

When assuming transmission over the \gls{bsc} as opposed to the
\gls{bec}, the crucial difference is that there is a possibility that
the component decoders may miscorrect, in the sense that they
introduce additional errors into the iterative decoding process. This
makes a rigorous analysis challenging. 

One possible approach is to change the iterative decoder. In
particular, consider again the message-passing interpretation of the
iterative decoding in Remark \ref{rmk:message_passing}. In
\cite{Jian2012}, the authors propose to modify the decoder in order to
make the corresponding message-passing update rules extrinsic. In this
case, miscorrections can be rigorously incorporated into the
asymptotic decoding analysis for \gls{gpc} ensembles. The reason why
this approach works from a \gls{de} perspective is that for code
ensembles, the entire computation graph (for a fixed depth) of a CN in
the Tanner graph becomes tree-like. In fact, this makes it possible to
analyze a variety of extrinsic message-passing decoders for a variety
of different channels \cite{Richardson2001}, including the above
modified iterative decoder for the \gls{bsc}. 

Unfortunately, this approach appears to be limited to code ensembles.
Recall that for the deterministic \gls{gpc} construction $\Gpc$, it is
only the neighborhood in the residual graph that becomes tree-like
(not the entire computation graph). Therefore, the independence
assumption between messages is not necessarily satisfied, neither for
intrinsic nor extrinsic message-passing algorithms. In general, it is
not obvious how to rigorously incorporate miscorrections into an
asymptotic analysis for a deterministic \gls{gpc} construction.
Applying our results to the \gls{bsc} thus requires a similar
assumption as in \cite{Schwartz2005,Justesen2007,Justesen2011}, i.e.,
either one assumes that miscorrections are negligible or that a genie
prevents them. 

\RevA{%
\subsection{Spatially-Coupled Product Codes}%
\label{sec:spatially_coupled}%

Of particular interest are GPCs where the matrix $\etab$ has a
band-diagonal ``convolutional-like'' structure. The associated GPC can
then be classified as a spatially-coupled PC. For example, the GPCs
discussed in Examples \ref{ex:staircase_codes} and
\ref{ex:braided_codes}, i.e., staircase and braided codes, are
particular instances of spatially-coupled PCs.  Spatially-coupled
codes have attracted a lot of attention in the literature due to their
outstanding performance under iterative decoding \cite{Kudekar2011,
Yedla2014}.  

In \cite{Haeger2016isit}, we study the asymptotic performance of
deterministic spatially-coupled PCs based on the theory derived in
this paper. In particular, we provide a detailed comparison to
spatially-coupled PCs that are based on the code ensembles proposed in
\cite{Jian2012, Jian2015, Zhang2015}. One of the main outcomes of this
work is that there exists a family of deterministic spatially-coupled
PCs (see \cite[Def.~2]{Haeger2016isit}) that asymptotically follows
the same DE recursion as the ensembles defined in \cite{Jian2012,
Jian2015, Zhang2015}. This implies that certain ensemble properties
proved in these papers (in particular lower bounds on the decoding
thresholds via potential function methods) also apply to the
deterministic code family as well. An important step to show this
result is to transform the ensemble DE recursions obtained in
\cite{Jian2012, Jian2015, Zhang2015} into a form that makes them
comparable to the DE recursion for deterministic GPCs obtained in
this paper.
We also show that there exists a
related, but structurally simpler, deterministic code family (see
\cite[Def.~5]{Haeger2016isit}) that attains essentially the same
asymptotic performance. The simpler code family follows a slightly
different DE recursion, which can also be analyzed using potential
function methods (see the appendix of \cite{Haeger2016isit} for
details).

It is also interesting to compare the asymptotic DE predictions to
performance of practical finite-length spatially-coupled PCs. This was
done in \cite{Haeger2016ofc} for staircase, braided, and so-called
half-braided codes assuming a window decoder with realistic
parameters. }%

\NoRev{%
\subsection{Parallel Binary Erasure Channels}%
\label{sec:parallel_becs}

It is possible to generalize the analysis presented in this paper to
the case where transmission takes place over $M$ parallel BECs with
different erasure-correcting capabilities $p_1, \dots, p_M$
\cite{Haeger2016istc}. In that scenario, one considers the scaling
$p_1 = c_1/n, \dots, p_M = c_M/n$, where the positive constants $c_1,
\dots, c_M$ act as the effective channel qualities for the parallel
BECs. As an application, the authors showed in \cite{Haeger2016istc}
that the resulting asymptotic performance analysis can be used to
predict and optimize the performance of $\Gpc$ when combined with a
higher-order signal constellation in a coded modulation setup assuming
a hard-decision symbol detector.}%

\section{Irregular Half-Product Codes}
\label{sec:irregular_hpc}

In this section, we consider an application of the derived \gls{de}
equations for deterministic \glspl{gpc}. In particular, we discuss the
optimization of component code mixtures for \glspl{hpc}. Recall that
for (regular) \glspl{hpc}, we have $\etab = 1$, $\vect{\gamma} = 1$,
and all component codes associated with the CNs have the same
erasure-correcting capability $\tee$. Similar to irregular \glspl{pc}
\cite{Hirasawa1984, Alipour2012}, an \emph{irregular} \gls{hpc} is
obtained by assigning component codes with different
erasure-correcting capabilities to the CNs. \RevA{The primary goal of
this section is to show how the asymptotic DE analysis can be used in
practice to achieve performance improvements for finite-length codes.
The general approach is to use decoding thresholds as an optimization
criterion. This is, of course, completely analogous to optimizing
degree distributions of irregular LDPC codes based on DE. Thus, it comes
with similar caveats (e.g., no optimality guarantees for finite code
lengths), but also with similar strengths (e.g., an efficient and fast
optimization procedure).}


\subsection{Preliminaries}

The assignment of erasure-correcting capabilities to the \CNs is done
according to the distribution $\vect{\tau} = (\tau_1, \dots,
\tau_{\tmax})^\transpose$.
(For notational convenience, we suppress the dependence of the
distribution and other quantities on the position index in the Tanner
graph.) The mean erasure-correcting capability \eqref{eq:average_tee}
in this case is given by
\begin{align}
	\bar{\tee} = \sum_{\tee = 1}^{\tmax} \tau_\tee \tee. 
\end{align}
The \gls{de} equation \eqref{eq:de_gpc} simplifies to
\begin{align}
	\label{eq:irr_hpc_de}
	x^{(\ell)} = \sum_{\tee=1}^{\tmax} \tau_{\tee} \Psi_{\geq \tee} (c
	x^{(\ell-1)}), 
\end{align}
with $x^{(0)} = 1$. The decoding threshold \eqref{eq:gpc_threshold}
can alternatively be written as
\begin{align}
	\label{eq:irregular_hpc_threshold}
	\cthr = \sup \{ c > 0 \,|\, \lim_{\ell \to \infty}
	x^{(\ell)} = 0
	\}, 
\end{align}
since $z^{(\ell)} \to 0$ if and only if $x^{(\ell)} \to 0$ as $\ell
\to \infty$. From \eqref{eq:irr_hpc_de} and the fact that $\Psi_{\geq
\tee} (x)$ for any $\tee \in \mathbb{N}$ and $x \geq 0$ is strictly
increasing, we have that the condition
\begin{align}
	\label{eq:irr_hpc_condition}
	\sum_{\tee=1}^{\tmax} \tau_{\tee} \Psi_{\geq \tee} (c
	x) < x, \qquad \text{for $x \in (0, 1]$}, 
\end{align}
implies successful decoding after a sufficiently large number of
iterations, i.e., we have that $\cthr \geq c$. 

We wish to design $\vect{\tau}$ such that $\cthr$ is as large as
possible.  Obviously, choosing component codes with larger
erasure-correcting capability gives better performance, i.e., larger
thresholds. Thus, the design is done under the constraint that the
mean erasure-correcting capability $\bar{\tee}$ remains fixed. This is
the natural analogue to the rate-constraint when designing degree
distributions for irregular \gls{ldpc} codes. 

\subsection{Lower Bounds on the Threshold}

Before discussing the practical optimization of the distribution
$\vect{\tau}$ based on a linear program in the next subsection, we
show that one can construct irregular \glspl{hpc} that have thresholds 
\begin{align}
	\label{eq:threshold_bounds}
	2 \bar{\tee} - 1 \leq \cthr \leq 2 \bar{\tee}, 
\end{align}
where we recall that the upper bound in \eqref{eq:threshold_bounds}
holds for any GPC according to the discussion in Section
\ref{sec:upper_bound}.  The lower bound in \eqref{eq:threshold_bounds}
is achieved by a uniform distribution. In particular, from
$\sum_{i=1}^{\infty} \Pr{X \geq i } = \E{X}$, we have 
\begin{align}
	\sum_{\tee = 1}^{\infty} \Psi_{\geq \tee}(c x) = c x,
\end{align}
where we recall that $\Psi_{\geq \tee}(c x) = \Pr{\Pois{c x} \geq
\tee}$. If we then choose a uniform distribution according to
$\tau_{\tee}
= 1/N$ for $\tee \in [N]$ (i.e., $\tmax = N$), we have
\begin{align}
	\label{eq:unif_conv_cond}
	\sum_{\tee = 1}^{N} \tau_\tee \Psi_{\geq \tee}(N x) 
	< 
	\sum_{\tee = 1}^{\infty} \frac{1}{N} \Psi_{\geq \tee}(N x) 
	= x \quad \text{for $x > 0$}, 
\end{align}
where the (strict) inequality follows from the fact that $\Psi_{\geq
\tee}(x) > 0$ for any $\tee \in \mathbb{N}$ and $x > 0$.  We see from
\eqref{eq:unif_conv_cond} that the threshold for the uniform
distribution satisfies $\cthr \geq N$
(cf.~\eqref{eq:irr_hpc_condition}).  Moreover, the average
erasure-correcting capability is given by
\begin{align}
	\bar{\tee} = \sum_{\tee = 1}^{N} \tau_\tee \tee
	= \frac{1}{N} \frac{N (N+1)}{2}
	= \frac{N+1}{2}.
\end{align}
Therefore, we have
\begin{align}
	2 \bar{\tee} - \cthr \leq 2 \frac{N + 1}{2} - N = 1, 
\end{align}
or $\cthr \geq 2 \bar{\tee} - 1$. This simple lower bound shows that one
can design irregular \glspl{hpc} that are within a constant gap of the
upper $2\bar{\tee}$-bound. This is in contrast to regular \glspl{hpc}
where $\bar{\tee} = \tee$. In this case, the difference between the
threshold $\cthr$ and $2 \tee$ becomes unbounded for large $\tee$, since
$\cthr = \tee + \sqrt{\tee \log \tee} + \mathcal{O}(\log(\tee))$
\cite{Pittel1996}. 

\begin{remark}
Essentially the same argument also allows us to give a lower bound on
the threshold for irregular \glspl{hpc} when the minimum
erasure-correcting capability is constrained to some value $\tmin >
1$. In that case, a uniform distribution over $\{\tmin, \tmin + 1,
\dots, \tmin + N - 1\}$ still gives a threshold that satisfies $\cthr
\geq N$. However, we have $\bar{\tee} = (N + 2 \tmin - 1)/2$. 
Hence, one obtains the lower bound $\cthr \geq 2 \bar{\tee} - 2 \tmin +
1$. 
\end{remark}

\subsection{Optimization via Linear Programming}

The optimal distribution maximizes the threshold $\cthr$ subject to a
fixed mean erasure-correcting capability $\bar{\tee}$. Alternatively,
one may fix a certain channel quality parameter $c$ and minimize
$\bar{\tee}$ as follows. 
\begin{align}
		\underset{\tau_1, \dots, \tau_{\tmax}}{\text{minimize}} \quad
		& \bar{\tee} = \sum_{\tee = 1}^{\tmax} \tau_\tee \tee  \label{eq:objective}\\ 
		\text{subject to} \quad & \sum_{\tee = 1}^{\tmax} \tau_\tee = 1,
		\quad \tau_1, \dots, \tau_{\tmax} \geq 0  \label{eq:constraint1} \\
		& \sum_{\tee = 1}^{\tmax} \tau_\tee \Psi_{\geq \tee} (c x) < x, \label{eq:constraint2}
		\quad x
		\in (0, 1].
\end{align}
The objective function and all constraints in
\eqref{eq:objective}--\eqref{eq:constraint2} are linear in $\tau_1,
\dots, \tau_{\tmax}$. Thus, after discretizing the constraint
\eqref{eq:constraint2} according to $x = i \Delta$ for $i \in [M]$ and
$\Delta = 1/M$, one obtains a linear program, which can be efficiently
solved by standard numerical optimization solvers. In
Fig.~\ref{fig:hpc_threshold_comparison}, we show the thresholds of the
optimized irregular HPCs by the red line, where we used $M = 1000$ and
$\tmax = 50$, as a function of $\bar{\tee}$. We also show the
thresholds for regular \glspl{hpc} (where $\bar{\tee} = \tee = 2, 3,
\dots$) and the $2\bar{\tee}$-bound by the blue and black lines,
respectively. It can be seen that the thresholds for regular
\glspl{hpc} diverge from the bound for large $\bar{\tee}$, as
expected.  Using irregular \glspl{hpc}, the thresholds can be
significantly improved for large $\bar{\tee}$.  However, there appears
to be an almost constant gap between the upper bound and the threshold
curve. This gap is investigated in more detail in the next subsection. 

For practical applications, it is often desirable to limit the
fraction of component codes with ``small'' erasure-correcting
capabilities in order to avoid harmful error floors
\cite{Justesen2011}. It is straightforward to incorporate a minimum
erasure-correcting capability $\tmin$ into the above linear program.
For example, the green line in Fig.~\ref{fig:hpc_threshold_comparison}
shows the thresholds of the optimized irregular HPCs when the minimum
erasure-correcting capability is constrained to $\tmin = 4$. This
additional constraint entails a threshold penalty which, however,
decreases for larger values of $\bar{\tee}$. 

\begin{figure}[t]
	\begin{center}
		\includegraphics{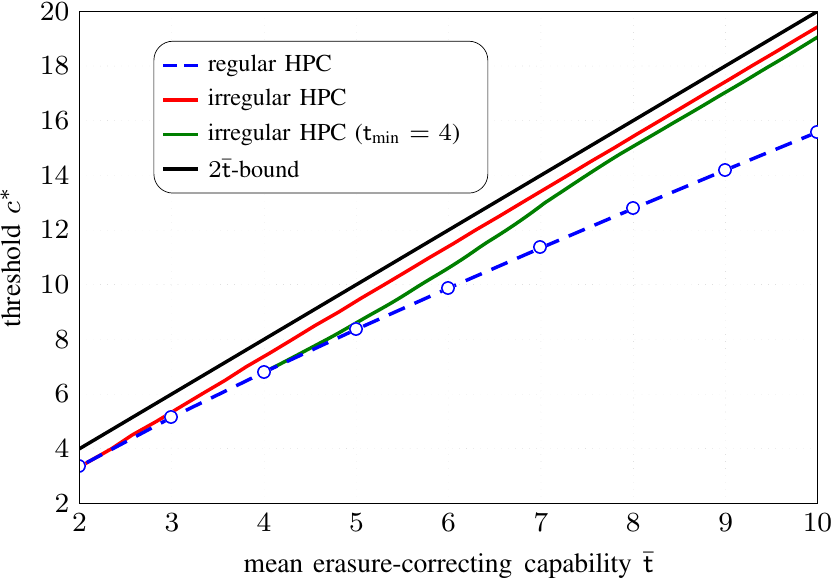}
	\end{center}
	\caption{Decoding thresholds for optimized irregular and regular
	\glspl{hpc}. Thresholds for irregular HPCs are obtained via a
	discretized linear program with $M = 1000$ and $\tmax = 50$.}
	\label{fig:hpc_threshold_comparison}
\end{figure}

\subsection{Initial Component Code Loss}

We now focus in more detail on the upper $2\bar{\tee}$-bound for the
thresholds of irregular \glspl{hpc}. In particular, we show that it is
possible to give a slightly improved upper bound based on the notion
of an \emph{initial component code loss}. Based on this, it can be
shown that the upper bound in \eqref{eq:threshold_bounds} is in fact
strict, i.e., for any distribution $\vect{\tau}$ with mean
erasure-correcting capability $\bar{\tee}$ and threshold $\cthr$, the
gap $2\bar{\tee} - \cthr$ is always bounded away from zero. This gives
an intuitive explanation for the gap between the threshold curve and
the $2\bar{\tee}$-bound observed in
Fig.~\ref{fig:hpc_threshold_comparison}. A similar bound for irregular
\gls{ldpc} code ensembles over the \gls{bec} is given in
\cite{Shokrollahi1999}.

Recall that the upper bound has been derived in
Section~\ref{sec:upper_bound} under the (somewhat optimistic) assumption
that each $\tee$-erasure correcting component code corrects exactly
$\tee$ erasures. In other words, each component code is assumed to
contribute its maximum erasure-correcting potential to the overall
decoding. A refined version of this argument takes into account the
fact that a certain amount of erasure-correcting potential is lost
almost surely before the iterative decoding process even begins. In
particular, let the RVs $N_{i, \tee}$, for $i = 0, 1 \dots, \tee - 1$,
be the number of CNs corresponding to $\tee$-erasure-correcting
component codes that are initially connected to $i < \tee$ erased VNs.
In the first decoding iteration, each of these CNs corrects only $i$
erasures instead of $\tee$, i.e., the maximum number of erasures $E$
that we can hope to correct is upper bounded by
\begin{align}
	E \leq \bar{\tee} n - \sum_{\tee = 1}^{\tmax} \sum_{i=0}^{\tee-1}
	N_{i, \tee} (\tee - i).
\end{align}
Since $E/n$ and $N_{i, \tee}/n$ converge almost surely to the
deterministic values $c/2$ and $\tau_\tee \Psi_{=i}(c)$, respectively,
we obtain 
\begin{align}
	\label{eq:loss_condition}
	c \leq 2 \bar{\tee} - 2 \Loss_{\vect{\tau}}(c)
\end{align}
as a necessary condition for successful decoding, where we implicitly
defined the initial component code loss for the distribution
$\vect{\tau}$ as
\begin{align}
	\Loss_{\vect{\tau}}(c) \define \sum_{\tee = 1}^{\tmax} \tau_\tee \Loss(\tee, c) 
\end{align}
with
\begin{align}
	\Loss(\tee, c) \define \sum_{i=0}^{\tee-1} \Psi_{=i}(c) (\tee -
	i)
\end{align}
for $c > 0$ and $\tee \in \mathbb{N}$. 

\begin{remark}
The affine extension of $\Loss(\tee, c)$ for a fixed
$c \geq 0$ is convex in $\tee \in [1; \infty)$ in the sense
that for any $c \geq 0$ and $\tee = 2, 3, \dots$, we have
\begin{align}
	\Loss(\tee-1, c) + 
	\Loss(\tee+1, c) &= 
	2 \Loss(\tee, c) + \Psi_{=t}(c) \\
	&\geq 2 \Loss(\tee, c).
\end{align}
This implies that for any distribution $\vect{\tau}$ with average
erasure-correcting capability $\bar{\tee}$, the associated initial
component code loss satisfies
\begin{align}
	\Loss_{\vect{\tau}} (c) \geq \Loss(\lfloor \bar{\tee} \rfloor, c),
\end{align}
i.e., the initial loss is minimized for regular \glspl{hpc}.
\end{remark}

The bound \eqref{eq:loss_condition} has a natural interpretation in
terms of areas related to the curves involved in the condition
\eqref{eq:irr_hpc_condition}, similar to the area theorem for
irregular \gls{ldpc} code ensembles. Indeed, an alternative way to
show that successful decoding implies \eqref{eq:loss_condition} is by
integrating the condition \eqref{eq:irr_hpc_condition}. Using integration
by parts, one obtains the indefinite integral \cite{Gradshteyn2007}
\begin{align}
	\label{eq:indefinite_integral_psi}
	\int \Psi_{\geq \tee}(x) \, \D x = x \Psi_{\geq \tee} (x) + \tee
	\Psi_{\leq \tee} (x).
\end{align}
Thus, we have
\begin{align}
	c \int_0^1 \Psi_{\geq \tee} (c x) \, \D x 
	&= c \Psi_{\geq \tee}(c) + \tee \Psi_{\leq \tee}(c) - \tee \\
	&= c (1 - \Psi_{< \tee}(c)) + \tee \Psi_{\leq \tee}(c) - \tee \\
	&= c - \tee + \Loss(\tee, c),
	\label{eq:integral_psi}
\end{align}
where the last equality follows from
\begin{align}
	\tee \Psi_{\leq \tee}(c) - c \Psi_{< \tee} (c)
	&=\tee \Psi_{\leq \tee}(c) - c\sum_{k=0}^{\tee-1} \frac{c^k}{k!} e^{-c}\\
	&=\tee \Psi_{\leq \tee}(c) - c \sum_{k=1}^{\tee}
	\frac{c^{k-1}}{(k-1)!}e^{-c} \\
	&=\tee \Psi_{\leq \tee}(c) - \sum_{k=0}^{\tee}
	\frac{c^{k}}{k!}e^{-c} k \\
	&=\sum_{k=0}^{\tee} \Psi_{= k}(c) (\tee - k) \\
	&= \Loss(\tee, c).
\end{align}
Hence, integrating both sides of \eqref{eq:irr_hpc_condition} from
zero to one and using \eqref{eq:integral_psi}, one obtains
\begin{align}
	\label{eq:condition_inter}
	\frac{1}{c} \sum_{\tee=1}^{\tmax} \tau_\tee \left(c - \tee +
	\Loss(\tee, c)  \right) < \frac{1}{2}, 
\end{align}
or, equivalently, \eqref{eq:loss_condition}.

\begin{figure}[t]
	\begin{center}
		\includegraphics{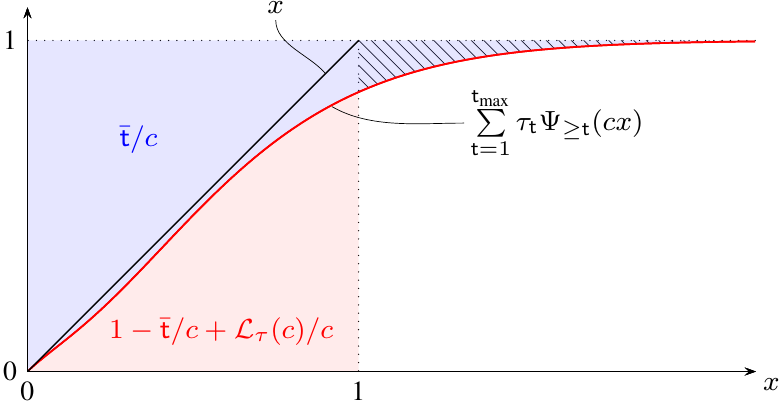}
	\end{center}
	\caption{Graphical interpretation of the upper threshold bounds. }
	\label{fig:area_theorem}
\end{figure}

A visualization is shown in Fig.~\ref{fig:area_theorem}, where the
red and black lines correspond to the \gls{lhs} and \gls{rhs}
of \eqref{eq:irr_hpc_condition}, respectively. The area below the red
curve up to $x=1$ (shown in red) corresponds to the \gls{lhs} of
\eqref{eq:condition_inter}. Similarly, it can be shown using
\eqref{eq:indefinite_integral_psi} that the area between the red line
and $x=1$ (shown in blue) corresponds to the scaled erasure-correcting
capability $\bar{\tee}/c$. Note that the $2\bar{\tee}$-bound on the
threshold simply corresponds to the fact that the blue area cannot be
smaller than $1/2$, since otherwise the red and black lines would
have to cross. From the previous discussion, we have seen that the gap
to the upper $2\bar{\tee}$-bound is partially due to the initial
component code loss. In particular, by combining the blue and red
areas, it can be seen that the hatched area in
Fig.~\ref{fig:area_theorem} corresponds precisely to the (scaled) loss
$\Loss_{\vect{\tau}}(c)/c$. 

Consider now again the outcome of the linear program for the optimized
irregular \glspl{hpc} in Fig.~\ref{fig:hpc_threshold_comparison}. In
Fig.~\ref{fig:irregular_gap_to_cap}, the (vertical) gap $2 \bar{\tee}
- \cthr$ between the black and red lines in
Fig.~\ref{fig:hpc_threshold_comparison} is shown for a larger range of
$\bar{\tee}$. It can be seen that the gap is decreasing with
$\bar{\tee}$, albeit rather slowly. We also plot the initial component
code loss for the optimized distributions at the threshold value by
the blue line. From this, we see that the initial component code loss
accounts for approximately half of the threshold gap for the optimized
irregular distributions. 

\begin{figure}[t]
	\begin{center}
		\includegraphics{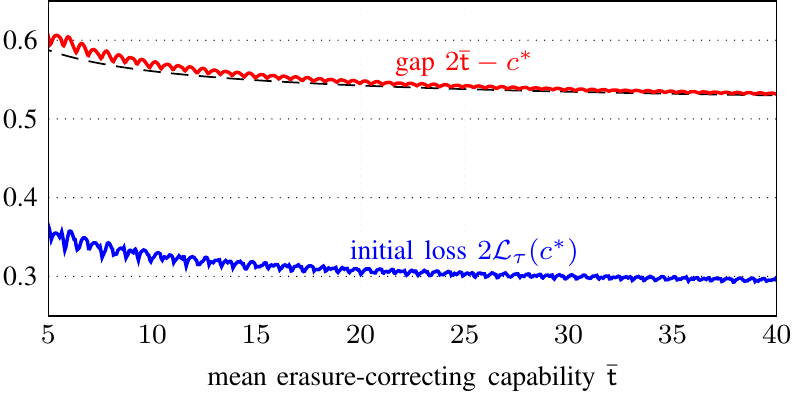}
	\end{center}
	\caption{Gap between the threshold $\cthr$ and the
	$2\bar{\tee}$-bound for the optimized irregular \glspl{hpc}. }
	\label{fig:irregular_gap_to_cap}
\end{figure}

\begin{remark}
	In fact, we conjecture that the following is true. Assume $c \in
	\mathbb{N}$. Then, for any distribution $\vect{\tau}$ with threshold $\cthr
	\geq c$ and mean erasure-correcting capability $\bar{\tee}$, we have
	\begin{align}
		\label{eq:bound_conjecture}
		\bar{\tee} \geq \frac{c}{2} + \frac{1}{c} \sum_{\tee = 1}^c
		\Loss(\tee, c). 
	\end{align}
	This bound is shown in Fig.~\ref{fig:irregular_gap_to_cap} by the
	dashed line, although we failed to prove it. Proving
	\eqref{eq:bound_conjecture} would be interesting, since one can show
	that $\lim_{c \to \infty} \frac{1}{c} \sum_{\tee =1}^{c} \Loss(\tee,
	c) = 1/4$ and hence $2 \bar{\tee} - \cthr \geq 1/2$, which seems to be
	the constant to which the optimization outcome is converging for
	$\bar{\tee} \to \infty$. 
\end{remark}

\subsection{Simulation Results}

In order to illustrate how the thresholds can be used to design
practical irregular \glspl{hpc}, we consider (shortened) binary
\gls{bch} codes as component codes. Given the Galois-field extension
degree $\nu$, a shortening parameter $s$, and the erasure-correcting
capability $\tee$, we let the component code be an $(\nbch, \kbch,
\dmin)$ \gls{bch} code, where $\nbch = 2^{\nu} - 1 - s$, $\dmin = \tee
+ 1$, and
\begin{align}
	\kbch = \begin{cases}
		\nbch - \nu \tee/2, & \tee \text{ even} \\
		\nbch - \nu (\tee-1)/2 - 1, & \tee \text{ odd} \\
	\end{cases}.
\end{align}
In the following, we consider two irregular \glspl{hpc}, where
$\bar{\tee} \approx 7$. As a comparison, we use a regular \gls{hpc}
with $\tau_7 = 1$ for which $\cthr \approx 11.34$. The optimal
distribution (rounded to three decimal places) according to the linear
program \eqref{eq:objective}--\eqref{eq:constraint2} is given by
\begin{equation}
	\label{eq:distribution1}
	\begin{aligned}
	\tau_{1} &= 0.070, \quad
	\tau_{2} = 0.103, \quad
	\tau_{4} = 0.115, \\
	\tau_{5} &= 0.179, \quad
	\tau_{10} = 0.496, \quad
	\tau_{11} = 0.037,
	\end{aligned}
\end{equation}
which yields $\cthr \approx 13.42$. We also consider the case where
the minimum erasure-correcting capability is constrained to be $\tmin
= 4$. For this case, one obtains
\begin{align}
	\label{eq:distribution2}
	\tau_{4} = 0.495, \quad
	\tau_{9} = 0.029, \quad
	\tau_{10} = 0.476,
\end{align}
and the threshold is reduced to $\cthr \approx 12.88$.

For the simulations, we consider two different component code lengths,
$\nbch = 1000$ (i.e., $\nu = 10$ and $s = 23$) and $\nbch = 3000$
(i.e., $\nu = 12$ and $s = 1095$), leading to an overall length of the
\glspl{hpc} of $m \approx 500,000$ and $m \approx 4,500,000$,
respectively. If we denote the dimension of the $k$-th component code
by $k_{\mathcal{C}_k}$, the code rate is lower bounded by
\cite[Sec.~5.2.1]{Ryan2009}
\begin{align}
	R \geq 
	1 - \frac{\sum_{k = 1}^{n} (n - k_{\mathcal{C}_k})}{m}.
\end{align}
For the regular case and the distributions \eqref{eq:distribution1}
and \eqref{eq:distribution2}, the lower bound evaluates to
approximately $0.93$ and $0.97$ for $\nbch = 1000$ and $\nbch = 3000$,
respectively.  (In order to obtain shorter (longer) codes for the same
rate, one needs to reduce (increase) $\bar{\tee}$.) Although the
chosen values for $n$ are merely for illustration purposes, we remark
that the delay caused by such seemingly long block-lengths is
typically not a problem for high-speed applications. For example, the
delay for the \glspl{gpc} designed for fiber-optical communication
systems in \cite{Smith2012a, Jian2014} is in the order of $2,000,000$
bits. 

\begin{figure}[t]
	\begin{center}
		\includegraphics{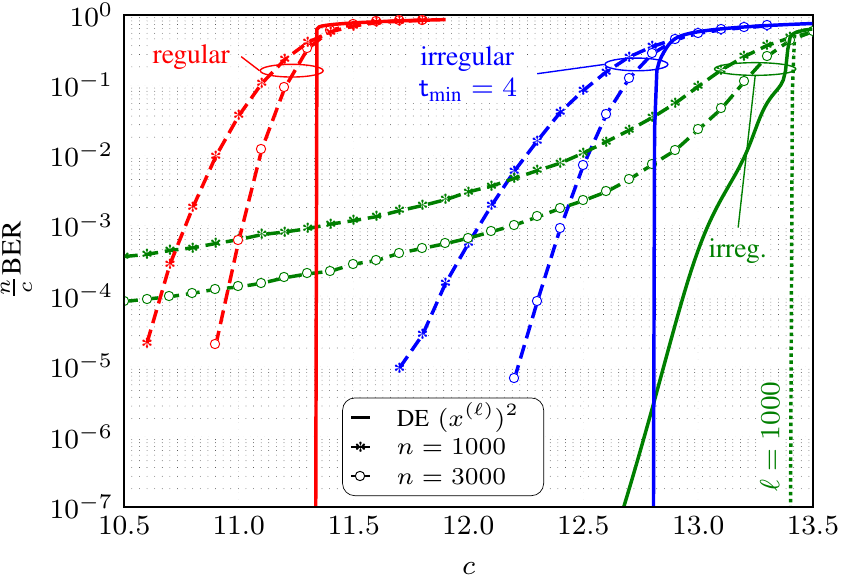}
	\end{center}
	\caption{Simulation results (dashed lines) for regular and optimized
	irregular \glspl{hpc} for two different values of $n$ and $\ell =
	100$. DE results (solid lines) are shown for $\ell = 100$.}
	\label{fig:hpc_simulation}
\end{figure}

Simulation results are shown in Fig.~\ref{fig:hpc_simulation} by the
dashed lines. In all cases, the maximum number of decoding iterations
is restricted to $\ell = 100$. Results for regular \glspl{hpc} are
shown in red, while results for the irregular \glspl{hpc} defined by
the optimized distributions \eqref{eq:distribution1} and
\eqref{eq:distribution2} are shown in green and blue, respectively.
For lower error rates, the irregular HPCs defined by
\eqref{eq:distribution1} are clearly outperformed by regular HPCs and
HPCs defined by the distribution \eqref{eq:distribution2}. This is due
to the relatively large fraction of component codes that only correct
1 and 2 erasures, which leads to a large error floor. 

It is interesting to inspect the DE predictions for $\ell =100$, which
are shown by the solid lines in Fig.~\ref{fig:hpc_simulation}. The
predicted performance for the regular and irregular distribution
\eqref{eq:distribution2} drops sharply, while the predicted
performance for the distribution \eqref{eq:distribution1} shows a
markedly different behavior due to the finite iteration number. It is
therefore important to stress that an optimization via the condition
\eqref{eq:irr_hpc_condition} implicitly assumes an unrestricted number
of decoding iterations. (As a reference, the DE prediction for the
distribution \eqref{eq:distribution1} with $\ell = 1000$ is shown by
the green dotted line.) Thus, if we had done an optimization based on
DE assuming $\ell = 100$ and targeting an error rate of around
$10^{-7}$ in Fig.~\ref{fig:hpc_simulation}, we would have rejected the
distribution \eqref{eq:distribution1} in favor of the distribution
\eqref{eq:distribution2} right away. \NoRev{A method to incoorporate the
number of decoding iterations into the threshold optimization of
irregular LDPC code ensembles is discussed for example in
\cite{Smith2010a}.}


Lastly, the HPCs defined by \eqref{eq:distribution2} have a comparable
finite-length scaling behavior below the threshold and no noticeable
error-floor for the simulated error rates. As a consequence, the
performance gains for this distribution over the regular HPCs
predicted by DE are well preserved also for finite lengths. 

\section{Conclusions and Future Work}
\label{sec:conclusion}

In this paper, we studied the performance of deterministically
constructed \glspl{gpc} under iterative decoding. Using the framework
of sparse inhomogeneous random graphs, we showed how to derive the
\gls{de} equations that govern the asymptotic behavior. In principle,
DE can be used for a variety of different applications, e.g., parameter
tuning, optimization of decoding schedules, or the design of new GPCs.
Here, we used the derived DE equations to optimize irregular HPCs that
employ a mixture of component codes with different erasure-correcting
capabilities. Using an approach based on linear programming, we
obtained irregular HPCs that outperform regular HPCs. 

For future work, it would be interesting to analyze deterministic code
constructions that incorporate VNs with larger degrees. Larger VN
degrees are easily incorporated into an ensemble approach, see, e.g.,
\cite{Zhang2015}. An example of a corresponding deterministic code
construction is the case where code arrays are generalized from two to
three (or higher) dimensional objects, e.g., a cube-shaped code array.
In that case, the residual graph could be modeled as a random
hypergraph. Cores in random hypergraphs have for example been studied
in \cite{Molloy2005}. 

\RevA{Another interesting topic for future work is the investigation
of the finite-length scaling behavior of deterministic \glspl{gpc},
similar to the work for LDPC codes in \cite{Amraoui2009}. In fact, the
scaling behavior of the $k$-core in random graphs is characterized in
\cite{Janson2008} and it would be interesting to translate these
results into the coding setting. To the best of our knowledge,
finite-length scaling for the inhomogeneous random graph model has not
yet been considered. However, such an analysis would be of great
practical value since it may allow for accurate performance
estimations of GPCs under iterative BDD for finite values of $n$.}

\appendices

\section{Proof of Lemma \ref{lem:maximum_vertex_degree}}
\label{app:maximum_vertex_degree}

First, we upper-bound the probability that the degree $D_k$ of the
$k$-th vertex exceeds $d_n$ using the Chernoff bound. Let $\bRV \sim
\Bern{c/n}$, then, for any $\lambda > 0$, we have
\begin{align}
	\Pr{D_k \geq d_n} &= \Pr{e^{\lambda D_k} \geq e^{\lambda 
	d_n }} \\
	&\leq e^{-\lambda d_n} \E{e^{\lambda D_k}}\label{eq:maximum_vertex_degree_markov} \\
	&=e^{-\lambda d_n}  \E{e^{\lambda (\bRV_{k,1} + \dots + \bRV_{k,n})}}  \\
	&=e^{-\lambda d_n}  \left( \E{e^{\lambda \bRV}}
	\right)^{n-1} \label{eq:maximum_vertex_degree_independence} \\
	&=e^{-\lambda d_n}   (1 - p + p e^{\lambda})^{n-1} \\
	&\leq e^{-\lambda d_n}  \left(1 +
	\frac{c}{n}\left(e^{\lambda} - 1\right) \right)^{n} \\
	&\leq e^{-\lambda
	d_n} e^{c (e^\lambda - 1)} \label{eq:max_degree_tmp} \\ 
	&\leq e^{-c - d_n \ln \frac{d_n}{ce}} \label{eq:max_degree_chernoff}
\end{align}
where \eqref{eq:maximum_vertex_degree_markov} follows from applying
Markov's inequality, \eqref{eq:maximum_vertex_degree_independence}
holds because all $\bRV_{k,j} \sim \bRV$ are independent except
$\bRV_{k,k} = 0$, \eqref{eq:max_degree_tmp} stems from $(1 + x/n)^n
\leq e^x$ for $x\geq 0$, and \eqref{eq:max_degree_chernoff} follows
from minimizing over $\lambda$. Thus, for $d_n = \Omega(\log(n))$ and
any $\beta >0$, there is an $n_0$ such that $\Pr{D_k \geq d_n} \leq
e^{-\beta d_n}$. Hence, if one chooses $\beta$ large enough, then the
union bound implies 
\begin{align}
	\Pr{D_\text{max} \geq d_n} 
	&\leq n  \Pr{D_k \geq d_n} \\
	&\leq n e^{-\beta d_n} \\ 
	&= e^{\log(n) - \beta d_n} \\
	&= e^{-\beta (d_n - \log(n) / \beta)} \\
	&\leq e^{-\beta d_n/2}
\end{align}
for all $n\geq n_0$. 

\section{Bound on the Second Moment of $T_\ell$}
\label{app:moments_stopping_times}

\newcommand{\mean}{\mu_{\bar{\xi}}}
\newcommand{\varxi}{\sigma_{\bar{\xi}}}

\RevA{To obtain a bound on $\mathbb{E}[T_\ell^2]$, we first show how
to compute the corresponding quantity $\mathbb{E}[\GwT_\ell^2]$ for
the branching process. This quantity depends only on the mean $\mean$
and variance $\varxi^2$ of the offspring distribution $\GwXi$.
Then, we apply the result that the exploration process is
stochastically dominated by a Poisson branching process with $\GwXi =
\Pois{c}$; in particular, the random variable $T_\ell^2$ is
stochastically dominated by the random variable $\GwT_\ell^2$. (Recall
that if $Y$ stochastically dominates $X$, we have $\mathbb{E}[X] \leq
\mathbb{E}[Y]$, see, e.g., \cite[Sec.~2.3]{Hofstad2014}.)}

First, from $\GwT_\ell = \GwZ_0 + \GwZ_1 + \dots \GwZ_\ell$, we obtain
\begin{align}
	\label{eq:t2_def}
	\mathbb{E}[\GwT_\ell^2] = \sum_{i = 0}^{\ell} \mathbb{E}[\GwZ_i^2] + 2 \sum_{i=1}^{\ell}
	\sum_{j=0}^{i-1} \mathbb{E}[\GwZ_i \GwZ_j]. 
\end{align}
Using the definition of $\GwZ_i$ and the law of total expectation, it
can be shown that for $i > j$, we have
\begin{align}
	\label{eq:t2_tmp1}
	\mathbb{E}[\GwZ_i \GwZ_j] = \mean^{i-j} \mathbb{E}[\GwZ_j^2]. 
\end{align}
Inserting \eqref{eq:t2_tmp1} into \eqref{eq:t2_def} leads to 
\begin{align}
	\label{eq:t2_tmp2}
	\mathbb{E}[\GwT_\ell^2] = \sum_{i = 0}^{\ell} \mathbb{E}[\GwZ_i^2] + 2 \sum_{i=1}^{\ell}
	\sum_{j=0}^{i-1} \mean^{i-j}\mathbb{E}[\GwZ_j^2].
\end{align}
Next, we can use the well-known expressions for the mean and variance
of $\GwZ_i$ (see, e.g., \cite[p.~396]{Karlin1975}) to obtain
\begin{align}
	\label{eq:second_moment_Z_ell}
	\mathbb{E}[\GwZ_\ell^2] =  
	\begin{cases}
		\varxi^2 \mean^{\ell-1} \frac{\mean^{\ell} - 1}{\mean-1} +
		\mean^{2\ell},
		&\quad \mean \neq 1  \\
		\ell \varxi^2 + 1, &\quad \mean = 1
	\end{cases}.
\end{align}
Inserting \eqref{eq:second_moment_Z_ell} into \eqref{eq:t2_tmp2} leads
to the desired explicit characterization of $\mathbb{E}[\GwT_\ell^2]$.
Of particular interest here is the case where the offspring
distribution is Poisson with mean $c$. In this case, we have $\mean = c$
and $\varxi^2 = c$, which leads to
\begin{align}
	\label{eq:second_moment_T_ell}
	\mathbb{E}[\GwT_\ell^2] =  
	\begin{cases}
		\frac{c^{2\ell+3} - 1 - (2\ell+3) c^{\ell} (c-1) }{(c-1)^3},
		&\quad c \neq 1  \\
		\frac{(\ell+1)(\ell+2)(2\ell+3)}{6}, &\quad c = 1
	\end{cases}.
\end{align}
Finally, using the same steps as in the proof of
\cite[Th.~4.2]{Hofstad2014} and \cite[Th.~3.20]{Hofstad2014} one can
show that \RevA{the random variable $T_\ell^2$ is stochastically
dominated by the random variable $\GwT_\ell^2$. Hence,
\eqref{eq:second_moment_T_ell} is an upper bound on
$\mathbb{E}[T_\ell^2]$, i.e., we have $\mathbb{E}[T_\ell^2] \leq
\mathbb{E}[\GwT_\ell^2]$.}


\begin{thebibliography}{10}
\providecommand{\url}[1]{#1}
\csname url@samestyle\endcsname
\providecommand{\newblock}{\relax}
\providecommand{\bibinfo}[2]{#2}
\providecommand{\BIBentrySTDinterwordspacing}{\spaceskip=0pt\relax}
\providecommand{\BIBentryALTinterwordstretchfactor}{4}
\providecommand{\BIBentryALTinterwordspacing}{\spaceskip=\fontdimen2\font plus
\BIBentryALTinterwordstretchfactor\fontdimen3\font minus
  \fontdimen4\font\relax}
\providecommand{\BIBforeignlanguage}[2]{{%
\expandafter\ifx\csname l@#1\endcsname\relax
\typeout{** WARNING: IEEEtran.bst: No hyphenation pattern has been}%
\typeout{** loaded for the language `#1'. Using the pattern for}%
\typeout{** the default language instead.}%
\else
\language=\csname l@#1\endcsname
\fi
#2}}
\providecommand{\BIBdecl}{\relax}
\BIBdecl

\bibitem{Gallager1962}
R.~G. Gallager, ``Low-density parity-check codes,'' \emph{IRE Trans. Inf.
  Theory}, vol.~8, no.~1, pp. 21--28, Jan. 1962.

\bibitem{ForneyJr.1965}
G.~D. {{Forney, Jr.}}, ``Concatenated codes,'' Ph.D. dissertation,
  Massachusetts Institute of Technology, 1965.

\bibitem{Berrou1993a}
C.~Berrou, A.~Glavieux, and P.~Thitimajshima, ``Near shannon limit
  error-correcting coding and decoding: Turbo-codes. 1,'' in \emph{Proc. IEEE
  Int. Conf. Communications (ICC)}, Geneva, Switzerland, 1993.

\bibitem{Elias1954}
P.~Elias, ``Error-free coding,'' \emph{IRE Trans. Inf. Theory}, vol.~4, no.~4,
  pp. 29--37, Apr. 1954.

\bibitem{Tanner1981}
R.~Tanner, ``A recursive approach to low complexity codes,'' \emph{IEEE Trans.
  Inf. Theory}, vol.~27, no.~5, pp. 533--547, Sep. 1981.

\bibitem{Abramson1968}
N.~Abramson, ``Cascade decoding of cyclic product codes,'' \emph{IEEE Trans.
  Commun. Tech.}, vol.~16, no.~3, pp. 398--402, Jun. 1968.

\bibitem{Ryan2009}
W.~Ryan and S.~Lin, \emph{Channel Codes Classical and Modern}.\hskip 1em plus
  0.5em minus 0.4em\relax Cambridge University Press, 2009.

\bibitem{Smith2012a}
B.~P. Smith, A.~Farhood, A.~Hunt, F.~R. Kschischang, and J.~Lodge, ``Staircase
  codes: {FEC} for 100 {G}b/s {OTN},'' \emph{J. Lightw. Technol.}, vol.~30,
  no.~1, pp. 110--117, Jan. 2012.

\bibitem{Justesen2007}
J.~Justesen and T.~H{\o}holdt, ``Analysis of iterated hard decision decoding of
  product codes with {Reed-Solomon} component codes,'' in \emph{Proc. IEEE
  Information Theory Workshop (ITW)}, Tahoe City, CA, 2007.

\bibitem{Justesen2010}
J.~Justesen, K.~J. Larsen, and L.~A. Pedersen, ``Error correcting coding for
  {OTN},'' \emph{IEEE Commun. Mag.}, vol.~59, no.~9, pp. 70--75, Sep. 2010.

\bibitem{Justesen2011}
J.~Justesen, ``Performance of product codes and related structures with
  iterated decoding,'' \emph{IEEE Trans. Commun.}, vol.~59, no.~2, pp.
  407--415, Feb. 2011.

\bibitem{Scholten2010}
M.~Scholten, T.~Coe, and J.~Dillard, ``Continuously-interleaved {BCH (CI-BCH)
  FEC} delivers best in class {NECG} for {40G} and {100G} metro applications,''
  in \emph{Proc. Optical Fiber Communication Conf. (OFC)}, San Diego, CA, 2010.

\bibitem{Jian2014}
Y.-Y. Jian, H.~D. Pfister, K.~R. Narayanan, R.~Rao, and R.~Mazahreh,
  ``Iterative hard-decision decoding of braided {BCH} codes for high-speed
  optical communication,'' in \emph{Proc. IEEE Glob. Communication Conf.
  (GLOBECOM)}, Atlanta, GA, 2014.

\bibitem{Zhang2014}
L.~M. Zhang and F.~R. Kschischang, ``Staircase codes with 6\% to 33\%
  overhead,'' \emph{J. Lightw. Technol.}, vol.~32, no.~10, pp. 1999--2002, May
  2014.

\bibitem{Haeger2015ofc}
C.~H{\"{a}}ger, A.~{Graell i Amat}, H.~D. Pfister, A.~Alvarado,
  F.~Br{\"{a}}nnstr{\"{o}}m, and E.~Agrell, ``On parameter optimization for
  staircase codes,'' in \emph{Proc. Optical Fiber Communication Conf. (OFC)},
  Los Angeles, CA, 2015.

\bibitem{Luby2001}
M.~G. Luby, M.~Mitzenmacher, M.~A. Shokrollahi, and D.~A. Spielman, ``Efficient
  erasure correcting codes,'' \emph{IEEE Trans. Inf. Theory}, vol.~47, no.~2,
  pp. 569--584, Feb. 2001.

\bibitem{Richardson2001}
T.~J. Richardson and R.~L. Urbanke, ``The capacity of low-density parity-check
  codes under message-passing decoding,'' \emph{IEEE Trans. Inf. Theory},
  vol.~47, no.~2, pp. 599--618, Feb. 2001.

\bibitem{Gallager1963}
R.~Gallager, ``Low-density parity-check codes,'' Ph.D. dissertation,
  Massachusetts Institute of Technology, Cambridge, 1963.

\bibitem{Schwartz2005}
M.~Schwartz, P.~Siegel, and A.~Vardy, ``On the asymptotic performance of
  iterative decoders for product codes,'' in \emph{Proc. IEEE Int. Symp.
  Information Theory (ISIT)}, Adelaide, SA, 2005.

\bibitem{Hirasawa1984}
S.~Hirasawa, M.~Kasahara, Y.~Sugiyama, and T.~Namekawa, ``Modified product
  codes,'' \emph{IEEE Trans. Inf. Theory}, vol.~30, no.~2, pp. 299--306, Mar.
  1984.

\bibitem{Alipour2012}
M.~Alipour, O.~Etesami, G.~Maatouk, and A.~Shokrollahi, ``Irregular product
  codes,'' in \emph{Proc. IEEE Information Theory Workshop (ITW)}, Lausanne,
  Switzerland, 2012.

\bibitem{Feltstrom2009}
A.~J. Feltstr{\"{o}}m, D.~Truhachev, M.~Lentmaier, and K.~S. Zigangirov,
  ``Braided block codes,'' \emph{IEEE Trans. Inf. Theory}, vol.~55, no.~6, pp.
  2640--2658, Jul. 2009.

\bibitem{Bollobas2007}
B.~Bollob{\'{a}}s, S.~Janson, and O.~Riordan, ``The phase transition in
  inhomogeneous random graphs,'' \emph{Random Structures and Algorithms},
  vol.~31, no.~1, pp. 3--122, Aug. 2007.

\bibitem{Richardson2008}
T.~J. Richardson and R.~L. Urbanke, \emph{Modern Coding Theory}.\hskip 1em plus
  0.5em minus 0.4em\relax Cambridge University Press, 2008.

\bibitem{Luby1998b}
M.~G. Luby, M.~Mitzenmacher, and M.~A. Shokrollahi, ``Analysis of random
  processes via and-or tree evaluation,'' in \emph{Proc. 9th Annual ACM-SIAM
  Symp. Discrete Algorithms}, San Franscisco, CA, 1998.

\bibitem{Lentmaier2010}
M.~Lentmaier, A.~Sridharan, D.~J. Costello, and K.~S. Zigangirov, ``Iterative
  decoding threshold analysis for {LDPC} convolutional codes,'' \emph{IEEE
  Trans. Inf. Theory}, vol.~56, no.~10, pp. 5274--5289, Oct. 2010.

\bibitem{Lentmaier2009}
M.~Lentmaier, M.~B. Tavares, and G.~P. Fettweis, ``Exact erasure channel
  density evolution for protograph-based generalized {LDPC} codes,'' in
  \emph{Proc. IEEE Int. Symp. Information Theory (ISIT)}, Seoul, South Korea,
  Jun. 2009.

\bibitem{Lentmaier}
M.~Lentmaier and G.~P. Fettweis, ``On the thresholds of generalized {LDPC}
  convolutional codes based on protographs,'' in \emph{Proc. IEEE Int. Symp.
  Information Theory (ISIT)}, Austin, TX, 2010.

\bibitem{Thorpe2005}
J.~Thorpe, ``Low-density parity-check ({LDPC}) codes constructed from
  protographs,'' \emph{IPN Progress Report 42-154, JPL}, 2005.

\bibitem{Miladinovic2008}
N.~Miladinovic and M.~Fossorier, ``Generalized {LDPC} codes and generalized
  stopping sets,'' \emph{IEEE Trans. Commun.}, vol.~56, no.~2, pp. 201--212,
  Feb. 2008.

\bibitem{Jian2012}
Y.-Y. Jian, H.~D. Pfister, and K.~R. Narayanan, ``Approaching capacity at high
  rates with iterative hard-decision decoding,'' in \emph{Proc. IEEE Int. Symp.
  Information Theory (ISIT)}, Cambridge, MA, 2012.

\bibitem{Jian2015}
\BIBentryALTinterwordspacing
------, ``Approaching capacity at high-rates with iterative hard-decision
  decoding,'' \emph{arxiv:1202.6095v3 [cs.IT]}, May 2015. [Online]. Available:
  \url{https://arxiv.org/pdf/1202.6095v3}
\BIBentrySTDinterwordspacing

\bibitem{Zhang2015}
L.~M. Zhang, D.~Truhachev, and F.~R. Kschischang, ``Spatially-coupled
  split-component codes with bounded-distance component decoding,'' in
  \emph{Proc. IEEE Int. Symp. Information Theory (ISIT)}, Hong Kong, 2015.

\bibitem{Pittel1996}
B.~Pittel, J.~Spencer, and N.~Wormald, ``Sudden emergence of a giant $k$-core
  in a random graph,'' \emph{Journal of Combinatorial Theory, Series B},
  vol.~67, no.~1, pp. 111--151, May 1996.

\bibitem{Erdos1959}
P.~Erd\H{o}s and A.~R\'enyi, ``On random graphs {I},'' \emph{Publicationes
  Mathematicae}, vol.~6, pp. 290--297, 1959.

\bibitem{Gilbert1959}
E.~N. Gilbert, ``Random graphs,'' \emph{Annals of Mathematical Statistics},
  vol.~30, no.~4, pp. 1141--1144, Dec. 1959.

\bibitem{Pfister2015}
H.~D. Pfister, S.~K. Emmadi, and K.~Narayanan, ``Symmetric product codes,'' in
  \emph{Proc. Information Theory and Applications Workshop (ITA)}, San Diego,
  CA, 2015.

\bibitem{Bollobas2009a}
B.~Bollob{\'{a}}s and {Riordan O.}, ``Random graphs and branching processes,''
  in \emph{Handbook of Large-Scale Random Networks}, 2009, pp. 15--115.

\bibitem{Riordan2007}
O.~Riordan, ``The $k$-core and branching processes,'' \emph{Combinatorics,
  Probability and Computing}, vol.~17, no.~1, pp. 111--136, Jun. 2007.

\bibitem{Alon2000}
N.~Alon and J.~H. Spencer, \emph{The Probabilistic Method}, 2nd~ed.\hskip 1em
  plus 0.5em minus 0.4em\relax Wiley-Interscience, Aug. 2000.

\bibitem{Hofstad2014}
R.~van~der Hofstad, ``Random graphs and complex networks. {Vol. I},'' 2014.

\bibitem{Karlin1975}
S.~Karlin and H.~M. Taylor, \emph{A First Course in Stochastic Processes},
  2nd~ed.\hskip 1em plus 0.5em minus 0.4em\relax Academic Press, 1975.

\bibitem{Dembo2010}
A.~Dembo and A.~Montanari, ``Ising models on locally tree-like graphs,''
  \emph{Ann. Appl. Probab.}, vol.~20, no.~2, pp. 565--592, 2010.

\bibitem{Montanari2014a}
A.~Montanari, ``Statistical mechanics and algorithms on sparse and random
  graphs,'' Oct. 2014.

\bibitem{Luby1998}
M.~G. Luby, M.~Mitzenmacher, M.~A. Shokrollah, and D.~A. Spielman, ``Analysis
  of low density codes and improved designs using irregular graphs,'' in
  \emph{Proc. AMC Symp. on Theory of Computing (STOC)}, New York, USA, 1998.

\bibitem{Rosenthal2006}
J.~S. Rosenthal, \emph{A First Look At Rigorous Probability Theory},
  2nd~ed.\hskip 1em plus 0.5em minus 0.4em\relax World Scientific, 2006.

\bibitem{Warnke2012}
L.~Warnke, ``On the method of typical bounded differences,''
  \emph{Combinatorics, Probability and Computing}, vol.~25, no.~2, pp.
  269--299, Mar. 2016.

\bibitem{Smith2011}
B.~P. Smith, ``Error-correcting codes for fibre-optic communication systems,''
  Ph.D. dissertation, University of Toronto, 2011.

\bibitem{Soderberg2002}
B.~S{\"{o}}derberg, ``General formalism for inhomogeneous random graphs,''
  \emph{Phys. Rev. E}, vol.~66, no.~6, Dec. 2002.

\bibitem{Kudekar2011}
S.~Kudekar, T.~Richardson, and R.~Urbanke, ``Threshold saturation via spatial
  coupling: Why convolutional {LDPC} ensembles perform so well over the
  {BEC},'' \emph{IEEE Trans. Inf. Theory}, vol.~57, no.~2, pp. 803--834, Feb.
  2011.

\bibitem{Yedla2014}
A.~Yedla, Y.-Y. Jian, P.~S. Nguyen, and H.~D. Pfister, ``A simple proof of
  {Maxwell} saturation for coupled scalar recursions,'' \emph{IEEE Trans. Inf.
  Theory}, vol.~60, no.~11, pp. 6943--6965, Nov. 2014.

\bibitem{Haeger2016isit}
C.~H{\"{a}}ger, H.~D. Pfister, A.~{Graell i Amat}, and
  F.~Br{\"{a}}nnstr{\"{o}}m, ``Deterministic and ensemble-based
  spatially-coupled product codes,'' in \emph{Proc. IEEE Int. Symp. Information
  Theory (ISIT)}, Barcelona, Spain, 2016.

\bibitem{Haeger2016ofc}
------, ``Density evolution and error floor analysis of staircase and braided
  codes,'' in \emph{Proc. Optical Fiber Communication Conf. (OFC)}, Anaheim,
  CA, 2016.

\bibitem{Haeger2016istc}
C.~H{\"{a}}ger, A.~{Graell i Amat}, H.~D. Pfister, and
  F.~Br{\"{a}}nnstr{\"{o}}m, ``Density evolution for deterministic generalized
  product codes with higher-order modulation,'' in \emph{Proc. Int. Symp. Turbo
  Codes and Iterative Information Processing (ISTC)}, Brest, France, 2016.

\bibitem{Shokrollahi1999}
M.~A. Shokrollahi, ``New sequences of linear time erasure codes approaching the
  channel capacity,'' in \emph{Proc. Int. Symp. Applied Algebra, Algebraic
  Algorithms and Error-Correcting Codes}, 1999.

\bibitem{Gradshteyn2007}
I.~Gradshteyn and D.~Ryzhik, \emph{Table of Integrals, Series, and Products,
  Seventh Edition}, 7th~ed.\hskip 1em plus 0.5em minus 0.4em\relax Academic
  Press, 2007.

\bibitem{Smith2010a}
B.~Smith, M.~Ardakani, W.~Yu, and F.~R. Kschischang, ``Design of irregular
  {LDPC} codes with optimized performance-complexity tradeoff,'' \emph{IEEE
  Trans. Commun.}, vol.~58, no.~2, pp. 489--499, Feb. 2010.

\bibitem{Molloy2005}
M.~Molloy, ``Cores in random hypergraphs and boolean formulas,'' \emph{Random
  Structures and Algorithms}, vol.~27, no.~1, pp. 124--135, Aug. 2005.

\bibitem{Amraoui2009}
A.~Amraoui, A.~Montanari, T.~J. Richardson, and R.~L. Urbanke, ``Finite-length
  scaling for iteratively decoded {LDPC} ensembles,'' \emph{IEEE Trans. Inf.
  Theory}, vol.~55, no.~2, pp. 473--498, Feb. 2009.

\bibitem{Janson2008}
S.~Janson and M.~J. Luczak, ``Asymptotic normality of the $k$-core in random
  graphs,'' \emph{Ann. Appl. Probab.}, vol.~18, no.~3, pp. 1085--1137, Jun.
  2008.

\end{thebibliography}

\end{document}